\newcolumntype{C}[1]{>{\centering\let\newline\\\arraybackslash\hspace{0pt}}m{#1}}
\newcommand{\camera}{0}
\newtheorem{theorem}{Theorem}[section]
\newtheorem{lemma}[theorem]{Lemma}
\newtheorem{corollary}[theorem]{Corollary}
\newtheorem{example}[theorem]{Example}
\newcommand{\p}[1]{\left( #1 \right)}
\newcommand{\br}[1]{\left[ #1 \right]}
\newcommand{\cd}[0]{\cdot}
\newcommand{\nplayer}[0]{\ensuremath{M}}
\newcommand{\gendist}[0]{\ensuremath{\Theta}}
\newcommand{\mean}[0]{\ensuremath{\theta}}
\newcommand{\mue}[0]{\ensuremath{\mu_e}}
\newcommand{\var}[0]{\ensuremath{\sigma^2}}
\newcommand{\sampledist}[0]{\ensuremath{\mathcal{D}}}
\newcommand{\xdist}[1]{\ensuremath{\mathcal{X}_{#1}}}
\newcommand{\param}[0]{\ensuremath{\bm{\theta}}}
\newcommand{\dimval}[0]{\ensuremath{D}}
\newcommand{\ndraw}[0]{\ensuremath{n}}
\newcommand{\err}[0]{\ensuremath{\epsilon^2}}
\newcommand{\expdata}[1]{\ensuremath{\mathbb{E}_{\Ymf \sim \sampledist(\mean_{#1}, \err_{#1})}}}
\newcommand{\expparam}[1]{\ensuremath{\mathbb{E}_{(\mean_{#1}, \err_{#1}) \sim \gendist}}}
\newcommand{\total}[0]{\ensuremath{N}}
\newcommand{\expXlin}[1]{\ensuremath{\mathbb{E}_{X_{#1} \sim \xdist{#1}}}}
\newcommand{\expxlin}[1]{\ensuremath{\mathbb{E}_{x \sim \xdist{#1}}}}
\newcommand{\expetalin}[1]{\ensuremath{\mathbb{E}_{\errval_{#1} \sim \sampledist_{#1}(0, \err_{#1})}}}
\newcommand{\expYlin}[1]{\ensuremath{\mathbb{E}_{Y_{#1} \sim \sampledist_{#1}(X_{#1}^T \bm{\param}_{#1}, \err_{#1})}}}
\newcommand{\errval}[0]{\ensuremath{\bm{\eta}}}
\newcommand{\errvalmf}[0]{\ensuremath{\eta}}
\newcommand{\X}[0]{\ensuremath{\bm{X}}}
\newcommand{\x}[0]{\ensuremath{\bm{x}}}
\newcommand{\Y}[0]{\ensuremath{\bm{Y}}}
\newcommand{\Ymf}[0]{\ensuremath{Y}}
\newcommand{\cross}[1]{\ensuremath{\Sigma_{#1}}}
\newcommand{\crossc}[1]{\ensuremath{Cov_{#1}}}
\newcommand{\norm}[1]{\left\lVert#1\right\rVert}
\newcommand{\w}[0]{\ensuremath{w}}
\newcommand{\vm}[0]{\ensuremath{v}}
\newcommand{\alone}[0]{\ensuremath{\pi_l}}
\newcommand{\gcol}[0]{\ensuremath{\pi_g}}
\newcommand{\sepcol}[0]{\ensuremath{\pi_d}}
\newcommand{\s}[0]{\ensuremath{s}}
\newcommand{\el}[0]{\ensuremath{\ell}}
\newcommand{\ns}[0]{\ensuremath{n_{\s}}}
\newcommand{\nl}[0]{\ensuremath{n_{\el}}}
\newcommand{\nlv}[0]{\ensuremath{n_{\el}}}
\newcommand{\Sv}[0]{\ensuremath{S}}
\newcommand{\Lv}[0]{\ensuremath{L}}
\newcommand{\si}[0]{\ensuremath{S}}
\newcommand{\li}[0]{\ensuremath{L}}
\newcommand{\col}[0]{\ensuremath{C}}
\newcommand{\cgeq}[0]{\ensuremath{\succeq}}
\newcommand{\cg}[0]{\ensuremath{\succ}}
\newcommand{\cleq}[0]{\ensuremath{\preceq}}
\newcommand{\cl}[0]{\ensuremath{\prec}}
\title{Model-sharing Games:\\ Analyzing Federated Learning Under Voluntary Participation}
\author{
        Kate Donahue,\textsuperscript{\rm 1}
        Jon Kleinberg, \textsuperscript{\rm 1, 2}\\
}
\begin{document}

\maketitle
\thispagestyle{plain}
\pagestyle{plain}

\begin{abstract}
Federated learning is a setting where agents, each with access to their own data source, combine models learned from local data to create a global model. If agents are drawing their data from different distributions, though, federated learning might produce a biased global model that is not optimal for each agent. This means that agents face a fundamental question: should they join the global model or stay with their local model? In this work, we show how this situation can be naturally analyzed through the framework of coalitional game theory.

Motivated by these considerations, we propose the following game: there are heterogeneous players with different model parameters governing their data distribution and different amounts of data they have noisily drawn from their own distribution. Each player's goal is to obtain a model with minimal expected mean squared error (MSE) on their own distribution. They have a choice of fitting a model based solely on their own data, or combining their learned parameters with those of some subset of the other players. Combining models reduces the variance component of their error through access to more data, but increases the bias because of the heterogeneity of distributions.

In this work, we derive exact expected MSE values for problems in linear regression and mean estimation. We use these values to analyze the resulting game in the framework of hedonic game theory; we study how players might divide into coalitions, where each set of players within a coalition jointly construct model(s). We analyze three methods of federation, modeling differing degrees of customization. In uniform federation, the agents collectively produce a single model. In coarse-grained federation, each agent can weight the global model together with their local model. In fine-grained federation, each agent can flexibly combine models from each other agent in the federation. For each method, we constructively analyze the stable partitions of players into coalitions. 
\end{abstract}

\section{Introduction}

Imagine a situation as follows: a hospital is trying to evaluate the effectiveness of a certain procedure based on data it has collected from procedures done on patients in their facilities. It seems likely that certain attributes of the patient influences the effectiveness of the procedure, so the hospital analysts opt to fit a linear regression model with parameters $\hat{\param}$. However, because of the limited amount of data the hospital has access to, this model has relatively high error. Luckily, other hospitals also have data from implementations of this same procedure. However, for reasons of privacy, data incompatibility, data size, or other operational considerations, the hospitals don't wish to share raw patient data. Instead, they they opt to combine their models by taking a weighted average of the parameters learned by each hospital. If there are $\nplayer$ hospitals and hospital $i$ has $\ndraw_i$ samples, the combined model parameters would look like: 
$$\hat{\param}^f = \frac{1}{\sum_{i=1}^{\nplayer}\ndraw_i} \sum_{i=1}^{\nplayer}\hat{\param}_i \cd \ndraw_i$$

The situation described above could be viewed as a stylized model of \emph{federated learning}. Federated learning is a distributed learning process that is currently experiencing rapid innovations and widespread implementation \cite{Li_2020, kairouz2019advances}. It is used in cases where data is distributed across multiple agents and cannot be combined centrally for training. For example, federated learning is implemented in word prediction on cell phones, where transferring the raw text data would be infeasible given its large size (and sensitive content). The motivating factor for using federated learning is that access to more data will reduce the variance in a learned model, reducing its error.

However, there could be a downside to using federated learning. In the hospital example, it seems quite reasonable that certain hospitals might have different true generating models for their data, based on the differences in patient populations or variants of the procedure implementation, for example. Two dissimilar hospitals that are federating together will see a decrease in their model's error due to model variance, but an increase in their error due to model bias. This raises some fundamental questions for each participating hospital - or, more generally, each agent $i$ considering federated learning. Which other agents should $i$ federate with in order to minimize its error? Will those other agents be interested in federating with $i$? Does there exist some stable arrangement of agents into federating clusters, and if so, what does that arrangement look like? 

Numerous works have explored the issue of heterogeneous data in federated learning - we discuss specifically how they relate to ours in a later section. Often the goal in these lines of work is to achieve equality in error rates guaranteed to each agent, potentially by actively collecting more data or using transfer learning to ensure the model better fits local data. However, to our knowledge, there has not yet been work that systematically looks at the participation questions inherent in federated learning through the lens of game theory --- especially the theory of {\em hedonic games}, which studies the formation of self-sustaining coalitions.

In a hedonic game, players are grouped together into clusters or coalitions: the overall collection of coalitions is called a \emph{coalition structure}. Each player's utility depends solely on the identity of the other players in its coalition. A common question in hedonic games is the stability of a coalition structure. A coalition structure $\Pi$ is \emph{core-stable} (or \enquote{in the core}) if there does not exist a coalition $\col$ so that every player in $\col$ prefers $\col$ to its coalition in $\Pi$. A coalition structure is \emph{strictly core stable} if there does not exist a coalition $\col$ so that every player in $\col$ weakly prefers $\col$ to its coalition in $\Pi$, and at least one player $\in \col$ strictly prefers $\col$ to $\Pi$. A coalition structure is \emph{individually stable} if there does not exist a coalition $\col \in \Pi$ so that a player $i\not \in \col$ prefers $\col \cup \{i\}$ to its arrangement in $\Pi$ and all players in $\col$ weakly prefer $\col \cup \{i\}$ to $\col$ \cite{BOGOMOLNAIA2002201}. 

To explain the analogy of federated learning to hedonic games, we first consider that each agent in federated learning is a player in a hedonic game. A player is in coalition with other players if it is federating with them. Its cost is its expected error in a given federating cluster, which depends only on the identity of other players in its federating cluster. Players are assumed to be able to move between federating clusters only if doing so would benefit itself and not harm other players in the cluster it is moving to: notably, we allow players to freely leave a cluster, even if doing so would harm the players in the cluster it leaves behind. 

\paragraph{\bf The present work: Analyzing federated learning through hedonic game theory} 

In this work, we use the framework of hedonic games to analyze the stability of coalitions in data-sharing applications that capture key issues in federated learning.  By working through a sequence of deliberately stylized models, we obtain some general insights about participation and stability in these kinds of applications.

For the first case, we analyze \emph{uniform} federation. In this simplest case, a federating cluster produces a single model, which each player uses. For uniform federation, first we consider the case where all players have the same number of data points. We show that in this game, when the number of data points $\ndraw$ is fairly small, the only core-stable coalition structure is to have all players federating together, in the \enquote{grand coalition}. When $\ndraw$ is large, the only core-stable coalition structure is to have all players separate (doing local learning). There exists a point case of intermediate $\ndraw$ size where all coalition structures are core-stable. Next, we analyze the case where all players have either one of two sizes (\enquote{small} or \enquote{large}). The analysis is more complicated, but we demonstrate constructively that there always exists an individually stable partition of players into clusters. 

Besides uniform federation, we also analyze two other forms of federation. For \emph{coarse-grained} federation, the federating cluster still produces a single model, but each player can weight the global model with their local model, allowing some personalization. For coarse-grained federation, when all players have the same number of samples, we show that the grand coalition (all players federating together) is always the only core stable arrangement. For the small/large case, we produce a simple, efficient algorithm to calculate a strictly core-stable arrangement. Additionally, we show that, for this federation method, the grand coalition is always individually stable (no player wishes to unilaterally defect).

Finally, for \emph{fine-grained} federation, each player is allowed to take the local models of other players in the federation and combine them using whichever weights they choose to produce a model customized for their use. With fine-grained federation, we show that the grand coalition is always core stable. 

We are only able to produce these hedonic game theory results because of our derivations of exact error values for the underlying inference problems. We calculate these values for all three methods of federation, and for agents federating in two situation: 1) a mean estimation problem and 2) a linear regression problem. The error values depend on the number of samples each agent has access to, with the expectation taken over the values of samples each agent draws as well as the possible different true parameters of the data each player is trying to model. Our results are completely independent of the generating distributions used, relying only weakly on two parameters.

The results in this paper are theoretical and do not depend on any experiments or code. However, while writing the paper, we found it useful to write and work with code to check conjectures. Some of that code is publicly available at \url{https://github.com/kpdonahue/model_sharing_games}. 

Before moving to the main technical content, the next section will walk through a motivating example, followed by a review of related literature and a description of our model and assumptions. Beyond technical assumptions, recent work (\citet{cooper2020normative}) has highlighted the importance of describing \emph{normative} assumptions researchers make: we also include a summary of the most important normative assumptions of our analysis \ifthenelse{\equal{\camera}{1}}{in our ethics statement after the main text}{}.

\subsection{Related works}
\paragraph{\bf Incentives and federated learning:} \citet{collabPAC} describes an approach to handling heterogeneous data where more samples are iteratively gathered from each agent in a way so that all agents are incentivized to participate in the grand coalition during federated learning. \citet{Duan_selfbalance} builds a framework to schedule data augmentation and resampling. \citet{yu2020salvaging} demonstrates empirically that there can be cases where individuals get lower error with local training than federated and evaluates empirical solutions. \citet{wang2020split} analyzes the question of when it makes sense to split or not to split datasets drawn from different distributions. Finally, \citet{unpublished} analyzes notions of envy and efficiency with respect to sampling allocations in federated learning. 

\paragraph{\bf Transfer learning:} \citet{mansour2020approaches} and \citet{deng2020adaptive} both propose theoretical methods for using transfer learning to minimize error provided to agents with heterogeneous data.  \citet{li2019fair} and \citet{MartinezMinMax} both provide methods to produce a more uniform level of error rates across agents participating in federated learning.

\paragraph{\bf Clustering and federated learning:} 
\citet{Sattler_2020} and \citet{ShlezingerClustFed} provide an algorithm to \enquote{cluster} together players with similar data distributions with the aim of providing them with lower error. They differ from our approach in that they consider the case where there is some knowledge of each player's data distribution, where we only assume knowledge of the number of data points. Additionally, their approach doesn't explicitly consider agents to be game-theoretic actors in the same way that this one does. Interestingly, \citet{Guazzone_2014} uses a game theoretic framework to analyze federated learning, but with the aim of minimizing energy usage, not error rate. 

\section{Motivating example}
\label{motivate}

\begin{table}[]
\centering
\begin{tabular}{|c|c|c|c|}
\hline
Coalition structure & $err_a(\cd)$ & $err_b(\cd)$ & $err_c(\cd)$ \\ \hline
$\{a\},\{b\},\{c\}$ & 2&   2 & 2      \\\hline 
$\{a, b\},\{c\}$ & 1.5& 1.5                & 2             \\ \hline
$\{a, b, c\}$ & 1.3& 1.3 & 1.3              \\ \hline
\end{tabular}
\caption{The expected errors using uniform federation of players in each coalition when all three players have 5 samples each, with parameters $\mue = 10, \var =1$. Each row denotes a different coalition partition: for example, $\{a, b\} \{c\}$ indicates that players $a$ and $b$ are federating together while $c$ is alone.}
\label{tab:allsmall}
\vspace{-4mm}
\end{table}

\begin{table}[]
\centering
\begin{tabular}{|c|c|c|c|}
\hline
Coalition structure & $err_a(\cd)$ & $err_b(\cd)$ & $err_c(\cd)$ \\ \hline
$\{a\},\{b\},\{c\}$ & 2&   2 & 0.4     \\\hline 
$\{a, b\},\{c\}$ & 1.5& 1.5                & 0.4             \\ \hline
$\{a\},\{b, c\}$ & 2& 1.72 & 0.39              \\ \hline
$\{a, b, c\}$ & 1.55& 1.55 & 0.41              \\ \hline
\end{tabular}
\caption{The expected errors using uniform federation of players in each coalition when players $a$ and $b$ have 5 samples each and player $c$ has 25 samples, with parameters $\mue = 10, \var =1$.}
\label{tab:2small1big}
\vspace{-4mm}
\end{table}

\begin{table}[]
\centering
\begin{tabular}{|c|c|c|c|}
\hline
Coalition structure & $err_a(\cd)$ & $err_b(\cd)$ & $err_c(\cd)$ \\ \hline
$\{a\},\{b\},\{c\}$ & 0.4&   0.4 & 0.4     \\\hline 
$\{a, b\},\{c\}$ & 0.7& 0.7                & 0.4             \\ \hline
$\{a, b, c\}$ & 0.8&0.8 & 0.8              \\ \hline
\end{tabular}
\caption{The expected errors using uniform federation of players in each coalition when players $a, b, c$ each have 25 samples, with parameters $\mue = 10, \var =1$.}
\label{tab:3big}
\vspace{-4mm}
\end{table}

To motivate our problem and clarify the types of analyses we will be exploring, we will first work through a simple mean estimation example. (The Github repository contains numerical calculations and full tables for this section.) Calculating the error each player can expect requires two parameters: $\mue$, which reflects the average error each player experiences when sampling data from its own personal distribution, and $\var$, which reflects the average variance in the true parameters between players. In this section we will use $\mue = 10, \var = 1$, but will discuss later how to handle when they may be imperfectly known.

First, we will analyze uniform federation, with three players, $a, b,$ and $c$. We will first assume that each player has 5 samples from their local data distribution: Table \ref{tab:allsmall} gives the error each player can expect in this situation. Arrangements equivalent up to renaming of players are omitted. Every player sees its error minimized in the \enquote{grand coalition} $\gcol$ where all three players are federating together. This implies that the only arrangement that is stable (core-stable or individually stable) is $\gcol$. 

Next, we assume that player $c$ increases the amount of samples it has from 5 to 25: Table \ref{tab:2small1big} demonstrates the error each player can expect in this situation. Here, the players have different preferences over which arrangement they would most prefer. The \enquote{small} players $a$ and $b$ would most prefer $\{a, b\} \{c\}$, whereas the \enquote{large} player $c$ would most prefer $\{a\}, \{b, c\}$ or (identically) $\{b\}, \{a, c\}$. However out of all of these coalition structures, only $\{a, b\}, \{c\}$ is stable (either core stable or individually stable). Note that $\{a\}, \{b, c\}$ is not stable because the coalition $\col =\{a, b\}$ is one where each player prefers $\col$ to its current situation. 

Thirdly, we will assume that all three players have 25 samples: this example is shown in Table \ref{tab:3big}. As in Table \ref{tab:allsmall}, the players have identical preferences. However, in this case, the players minimize their error by being alone. Overall, stability results from this example are part of a broader pattern we will analyze in later sections.  

Next, we will explore the two other methods of federation: coarse-grained and fine-grained. Both offer some degree of personalization, with varying levels of flexibility. 

Table \ref{tab:coarse} shows an example using coarse federation with four players: three have 30 samples each, and the fourth player has 300 samples. We assume the weights $\w_j$ are set optimally for each player. For conciseness, some columns and rows are omitted. Note that both player types get lower error in $\gcol$ than they would with local learning: that is, $\gcol$ is individually stable (stable against defections of any player alone). However, it is also clear that $\gcol$ is not core stable: in particular, the three small players would get lower error in $\{a, b, c\}$ than in $\gcol$. These results will be examined theoretically in later sections: with optimal weighting, coarse-grained federation will always have an individually stable $\gcol$ that is not necessarily core stable.  

\begin{table}[]
\centering
\begin{tabular}{|c|c|c|}
\hline
Coalition structure & $err_a(\cd)$ & $err_d(\cd)$\\ \hline
$\{a\},\{b\},\{c\}, \{d\}$ & 0.333& 0.0333  \\\hline 
$\{a, b, c\},\{d\}$ &0.278 & 0.0333             \\ \hline
$\{a, b, c, d\}$ & 0.280 & 0.0326             \\ \hline
\end{tabular}
\caption{The expected errors using optimal coarse-grained federation when players $a, b, c$ each have 30 samples, while player $d$ has 300 samples, with parameters $\mue = 10, \var =1$.}
\label{tab:coarse}
\vspace{-4mm}
\end{table}

Finally, we examine fine-grained federation. Table \ref{tab:fine} analyzes the same case as coarse-grained federation previously, but with optimally-weighted fine-grained federation. The full error table shows that $\gcol$ \emph{is} core stable because each player minimizes their error in this arrangement. This result will hold theoretically: when optimal fine-grained federation is used, $\gcol$ always minimizes error for every player and is thus core stable. 

\begin{table}[]
\centering
\begin{tabular}{|c|c|c|}
\hline
Coalition structure & $err_a(\cd)$ & $err_d(\cd)$\\ \hline
$\{a\},\{b\},\{c\}, \{d\}$ & 0.333& 0.0333  \\\hline 
$\{a, b, c\},\{d\}$ &0.278 & 0.0333             \\ \hline
$\{a, b, c, d\}$ & 0.269 & 0.0325             \\ \hline
\end{tabular}
\caption{The expected errors using optimal fine-grained federation when players $a, b, c$ each have 30 samples, while player $d$ has 300 samples, with parameters $\mue = 10, \var =1$.}
\label{tab:fine}
\vspace{-4mm}
\end{table}

In later sections we will give theoretical results that explain this example more fully, but understanding the core-stable partitions here will help to build intuition for more general results.  

\section{Model and assumptions}
\ifthenelse{\equal{\camera}{1}}{}{\subsection{Model and technical assumptions}}
This section introduces our model. We assume that there is a fixed set of $[\nplayer]$ players, and player $j$ has a fixed number of samples, $\ndraw_j$. Though the number of samples is fixed, it is possible to analyze a varying number of samples by investigating all games involving the relevant number of samples. Each player draws their true parameters i.i.d. (independent and identically distributed) $(\mean_j, \err_j) \sim \Theta$. $\err_j$ represents the amount of noise in the sampling process for a given player. 

In the case of mean estimation, $\mean_j$ is a scalar representing the true mean of player $j$. Player $j$ draws samples i.i.d. from its true distribution: $\Ymf \sim \sampledist_j(\mean_j, \err_j)$. Samples are drawn with variance $\err_j$ around the true mean of the distribution. 

In the case of linear regression, $\param_j$ is a $\dimval$-dimensional vector representing the coefficients on the true classification function, which is also assumed to be linear. Each player draws $\ndraw_j$ input datapoints from their own input distribution $\X_j \sim \xdist{j}$ such that $\expxlin{j}[\x^T\x] = \cross{j}$. They then noisily observe the outputs, drawing values i.i.d. $\Y_j \sim \sampledist_j(\X_j^T \bm{\param}_j, \err_j)$, where $\err_j$ again denotes the variance of how samples are drawn around the true mean. 

There are three methods of federation. In uniform federation, a single model is produced for all members of the federating coalition: 
$$\hat{\param}^f = \frac{1}{\sum_{i=1}^{\nplayer}\ndraw_i} \sum_{i=1}^{\nplayer}\hat{\param}_i \cd \ndraw_i$$
In coarse-grained federation, each player has a parameter $\w_j$ that it uses to weight the global model with its own local model, producing an averaged model: 
$$\hat{\mean}^{\w}_j= \w_j \cd \hat{\mean}_j + (1-\w_j) \cd \frac{1}{\total}\sum_{i=1}^{\nplayer}\hat{\mean}_i \cd \ndraw_i$$
for $\w_j \in [0,1]$. Note that $\w_j=0$ corresponds to unweighted federated learning and $\w_j=1$ corresponds to pure local learning.
Finally, with fine-grained federation, each player $j$ as a vector of weights $\bm{\vm}_j$ that they use to weight every other player's contribution to their estimate: 
$$\hat{\mean}_j^{\vm} = \sum_{i=1}^{\nplayer}\vm_{ji}\mean_i$$
for $\sum_{i=1}^{\nplayer}\vm_{ji} = 1$. Note that we can recover the $\w$ weighting case with $\vm_{jj} = \w + \frac{(1-\w) \cd \ndraw_j}{\total}$ and $\vm_{ji} = (1-\w) \cd \frac{\ndraw_i}{\total}$.
Coarse-grained and fine-grained federation each have player-specific parameters ($\w, \vm$) that can be tuned. When those parameters are set optimally for the given player, we refer to the models as \enquote{optimal} coarse-grained or fine-grained federation. We will prove in later sections how to calculate optimal weights. 

We denote $\mue = \expparam{i}[\err_i]$: the expectation of the error parameter. In the mean estimation case, $\var = Var(\mean_i)$ represents the variance around the mean. In the linear regression case, $\var_d = Var(\param^d_i)$ for $d \in [\dimval]$.

We assume that each player knows how many samples it has access to. It may or may not have access to the data itself, but it does not know how its values (or its parameters) differ from the mean. For example, it does not know if the data it has is unusually noisy or if its true mean lies far from the true mean of other players. 

All of the stability analysis results depend on the parameters $\mue$ and $\var$. However, the reliance is fairly weak: often the player only needs to whether the number of samples they have $\ndraw_j$ is larger or smaller than the ratio $\frac{\mue}{\var}$.

Much of this paper analyzes the stability of coalition structures. Analyzing stability could be relevant because players can actually move between coalitions. However, even if players aren't able to actually move, analyzing the stability of a coalition tells us something about its optimality for each set of players. 

\ifthenelse{\equal{\camera}{1}}{}{
\subsection{Normative assumptions}
This paper is primarily descriptive: it aims to model a phenomenon in the world, not to say whether that phenomenon is good or bad. For example, it could be that society as a whole values situations where many players federate together and might wish to require players to do so, regardless of whether this minimizes their error. It might be the case that society prefers all players, regardless of how many samples they have access to, have roughly similar error rates. Our use of the expected mean squared error is also worth reflecting on: it assumes that over- and under-estimates are equally costly and that larger mis-estimates are more costly. In a more subtle point, we are taking the expected MSE over parameter draws $\expparam{i}$. A player with a true mean that happens to fall far from the mean might experience a much higher error than its expected MSE. 

In the entirely of this paper, we are taking as fixed the requirement that data not be shared, either for privacy or technical capability reasons, and so are implicitly valuing that requirement more than the desire for lower error. We are also assuming that the problem at hand is completely encompassed by the machine learning task, which might omit the fact that non-machine learning solutions may be better suited. It also may be the fact that technical requirements other than error rate are more important: for example, the desire to balance the amount of computation done by each agent. }

\section{Expected error results}

This paper's first contribution is to derive exact expected values for the MSE of players under different situations. The fact that these values are exact allows us to precisely reason about each player's incentives in later sections. We will state the theorems here and provide the proofs in  \ifthenelse{\equal{\camera}{1}}{the full version \cite{donahue2020modelsharing}.}{Appendix \ref{app:experr}.}

The approach for this section was first to derive expected MSE values for the most general case and then derive values for other cases as corollaries. The most general case is linear regression with fine-trained federation. First, we note that we can derive coarse-grained or uniform federation by setting the $\vm_{ji}$ weights to the appropriate values. Next, we note that mean estimation is a special case of linear regression. For intuition, consider a model where a player draws an $x$ value that is deterministically 1, then multiplies it by an unknown single parameter $\mean_j$, then then takes a measurement $y$ of this mean with noise $\err_j$. This corresponds exactly to the mean estimation case, where a player has a true mean $\mean_j$ and observes $y$ as a sample, with noise $\err_j$. We can use this representation to simplify the error terms, with more details given in \ifthenelse{\equal{\camera}{1}}{the full version \cite{donahue2020modelsharing}.}{Appendix \ref{app:experr}.}

First, we give the expected MSE for local estimation: 
\begin{restatable}{theorem}{linloc}
\label{linloc}
For linear regression, the expected MSE of local estimation for a player with $\ndraw_j$ samples is
$$\mue\cd \text{tr}\br{\cross{j}\expXlin{j}\br{\p{\X_j^T\X_j}^{-1}}}$$
If the distribution of input values $\xdist{j}$ is a $\dimval$-dimensional multivariate normal distribution with 0 mean, then, the expected MSE of local estimation can be simplified to: 
$$\frac{\mue}{\ndraw_j - \dimval -1}\dimval$$
In the case of mean estimation, the error term can be simplified to: 
$$\frac{\mue}{\ndraw_j}$$
\end{restatable}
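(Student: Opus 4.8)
The plan is to work directly with the ordinary least squares estimator and reduce the entire computation to evaluating a single matrix expectation, $\expXlin{j}\br{\p{\X_j^T\X_j}^{-1}}$, which can then be computed in closed form under the normality hypothesis. First I would isolate the estimation error: for local estimation player $j$ fits $\hat{\param}_j = \p{\X_j^T\X_j}^{-1}\X_j^T\Y_j$, and since $\Y_j = \X_j^T\param_j + \errval_j$ with $\errval_j$ the vector of i.i.d.\ mean-zero noise terms of variance $\err_j$, the true-parameter terms cancel exactly and we get $\hat{\param}_j - \param_j = \p{\X_j^T\X_j}^{-1}\X_j^T\errval_j$. This is the key simplification: the estimation error depends on the draw only through $\X_j$ and $\errval_j$, so no bias term involving $\param_j$ can survive.

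Next I would take the expectation over a fresh test point and over the training noise. The MSE on player $j$'s own distribution is the squared error between the prediction $\x^T\hat{\param}_j$ and the true value $\x^T\param_j$, i.e.\ $\expxlin{j}\br{\p{\x^T(\hat{\param}_j - \param_j)}^2}$; rewriting this as $\text{tr}\p{\expxlin{j}[\x\x^T]\cdot(\hat{\param}_j-\param_j)(\hat{\param}_j-\param_j)^T}$ and using $\expxlin{j}[\x\x^T] = \cross{j}$ gives $\text{tr}\p{\cross{j}\,(\hat{\param}_j-\param_j)(\hat{\param}_j-\param_j)^T}$. Averaging over $\errval_j$ with $\X_j$ held fixed, the formula from the first step together with $\expetalin{j}\br{\errval_j\errval_j^T} = \err_j I$ yields $\expetalin{j}\br{(\hat{\param}_j-\param_j)(\hat{\param}_j-\param_j)^T} = \err_j\p{\X_j^T\X_j}^{-1}$. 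Averaging over $\X_j$ (independent of the noise) and then over $(\mean_j,\err_j)\sim\gendist$ with $\expparam{j}[\err_j]=\mue$ produces the first displayed formula $\mue\cdot\text{tr}\br{\cross{j}\expXlin{j}\br{\p{\X_j^T\X_j}^{-1}}}$.

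To get the two simplifications, I would evaluate the matrix expectation explicitly. When $\xdist{j}$ is a mean-zero $\dimval$-dimensional Gaussian with covariance $\cross{j}$, the matrix $\X_j^T\X_j$ is Wishart-distributed with $\ndraw_j$ degrees of freedom and scale $\cross{j}$, so its inverse is inverse-Wishart with mean $\expXlin{j}\br{\p{\X_j^T\X_j}^{-1}} = \tfrac{1}{\ndraw_j-\dimval-1}\cross{j}^{-1}$, valid for $\ndraw_j>\dimval+1$. Substituting gives $\text{tr}\br{\cross{j}\cdot\tfrac{1}{\ndraw_j-\dimval-1}\cross{j}^{-1}} = \tfrac{\dimval}{\ndraw_j-\dimval-1}$, hence $\tfrac{\mue}{\ndraw_j-\dimval-1}\dimval$. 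For mean estimation I would specialize to $\dimval=1$ with the constant covariate $x\equiv1$, so $\X_j^T\X_j=\ndraw_j$ and $\cross{j}=1$, and the general formula immediately gives $\mue/\ndraw_j$ (equivalently, the estimator is the sample mean, whose variance is $\err_j/\ndraw_j$, and $\expparam{j}[\err_j]=\mue$).

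The computation is mostly bookkeeping; the one substantive external ingredient is the inverse-Wishart mean identity used in the third step. So the care goes into (i) confirming that $\param_j$ genuinely cancels so that no bias contribution appears, (ii) keeping the three sources of randomness — training inputs $\X_j$, training noise $\errval_j$, and the generating draw $(\mean_j,\err_j)$ — cleanly separated and invoking their mutual independence in the correct order, and (iii) noting the regime restriction $\ndraw_j>\dimval+1$, below which the expectation of the inverse is infinite and the simplified formula does not apply. I expect step (ii), the correct ordering and justification of the nested expectations, to be the main place where a careless argument could go wrong.
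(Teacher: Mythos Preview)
Your proposal is correct and follows essentially the same approach as the paper: cancel $\param_j$ via the OLS identity, use the trace/cyclic trick together with $\expetalin{j}[\errval_j\errval_j^T]=\err_j I$ and $\expxlin{j}[\x\x^T]=\cross{j}$ to reach $\mue\cdot\text{tr}\br{\cross{j}\expXlin{j}\br{(\X_j^T\X_j)^{-1}}}$, then invoke the inverse-Wishart mean for the Gaussian case and the $x\equiv1$ specialization for mean estimation. The only cosmetic difference is the order in which you take the nested expectations, and you additionally make explicit the $\ndraw_j>\dimval+1$ requirement that the paper leaves implicit.
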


Next, we calculate the expected MSE for fine-grained federation: 
\begin{restatable}{theorem}{linfedv}
\label{linfedv}
For linear regression with fine-grained federation, the expected MSE of federated estimation for a player with $\ndraw_j$ samples is: 
$$L_j + \p{\sum_{i\ne j}\vm_{ji}^2 + \p{\sum_{i\ne j}\vm_{ji}}^2}\cd \sum_{d=1}^{\dimval}\expxlin{j}[(\x^d)^2] \cd \var_d$$
where $L_j$ is equal to:
$$\mue\sum_{i=1}^{\nplayer}\vm_{ji}^2\cd \text{tr}[\cross{j}\expdata{i}\br{(\X_i^T\X_i)^{-1}}$$
If the distribution of input values $\xdist{i}$ is a $\dimval$-dimensional multivariate normal distribution with 0 mean, this can be simplified to:
$$\mue\sum_{i=1}^{\nplayer}\vm_{ji}^2\cd \frac{\dimval}{\ndraw_i - \dimval -1}$$
In the case of mean estimation, the entire error term can be simplified to: 
$$\mue\sum_{i=1}^{\nplayer}\vm_{ji}^2\cd \frac{1}{\ndraw_i} + \p{\sum_{i\ne j}\vm_{ji}^2 + \p{\sum_{i\ne j}\vm_{ji}}^2}\cd \var$$
\end{restatable}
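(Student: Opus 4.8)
The plan is to compute the excess prediction risk directly, in the same sense as Theorem~\ref{linloc}: the quantity to evaluate is $\expxlin{j}\big[(\x^T(\hat{\param}_j^{\vm}-\param_j))^2\big]$, where the outer expectation also averages over the i.i.d.\ sampling noise of every player, over each design $\X_i\sim\xdist{i}$, and over the parameter draws $(\mean_i,\err_i)\sim\gendist$. First I would use the normalization $\sum_{i=1}^{\nplayer}\vm_{ji}=1$ to split the error into a finite-sample part and a heterogeneity part,
$$\hat{\param}_j^{\vm}-\param_j \;=\; \sum_{i=1}^{\nplayer}\vm_{ji}(\hat{\param}_i-\param_j) \;=\; \sum_{i=1}^{\nplayer}\vm_{ji}(\hat{\param}_i-\param_i)\;+\;\sum_{i\ne j}\vm_{ji}(\param_i-\param_j),$$
since the $i=j$ summand of the last term vanishes; write $v=\sum_{i\ne j}\vm_{ji}(\param_i-\param_j)$ for the heterogeneity term.

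Next I would expand the square and integrate in the order: first over the sampling noise $\errval_i$ conditional on all designs and parameters, then over the designs, then over $\gendist$. For each $i$ with $\ndraw_i>\dimval$ the OLS residual $\hat{\param}_i-\param_i$ equals a matrix depending only on $\X_i$ times the mean-zero noise vector $\errval_i$, so conditional on $\X_i$ it has mean $0$ and covariance $\err_i(\X_i^T\X_i)^{-1}$, and these residuals are independent across players. Hence every cross term drops: a (residual$_i$)$\times$(residual$_{i'}$) term with $i\ne i'$ vanishes by independence, and a (residual$_i$)$\times v$ term vanishes because residual$_i$ is conditionally mean zero and is a function of $(\X_i,\errval_i)$ only, which is independent of the parameter draws inside $v$. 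What remains is
$$\expxlin{j}\big[(\x^T(\hat{\param}_j^{\vm}-\param_j))^2\big] \;=\; \sum_{i=1}^{\nplayer}\vm_{ji}^2\,\expxlin{j}\big[(\x^T(\hat{\param}_i-\param_i))^2\big]\;+\;\expxlin{j}\big[(\x^T v)^2\big].$$

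Then I would evaluate the two pieces. For the variance piece, conditioning on $\x$ and $\X_i$ gives $\err_i\,\x^T(\X_i^T\X_i)^{-1}\x$; averaging over $\x$ gives $\err_i\,\text{tr}(\cross{j}(\X_i^T\X_i)^{-1})$, and averaging over $\X_i$ and then over $\err_i$ (independent of $\X_i$, with mean $\mue$) yields exactly $L_j$. For the bias piece, $v$ is a linear combination of the i.i.d.\ vectors $\{\param_i\}_{i\ne j}$ together with $\param_j$, all with a common mean, so $\mathbb{E}[v]=0$ and $\mathrm{Cov}(v)=\big(\sum_{i\ne j}\vm_{ji}^2+(\sum_{i\ne j}\vm_{ji})^2\big)\,\mathrm{Cov}(\param)$; using that $\param$ has per-coordinate variances $\var_d$ and --- for the formula to be exact --- uncorrelated coordinates, $\mathrm{Cov}(\param)=\mathrm{diag}(\var_1,\dots,\var_{\dimval})$, so $\expxlin{j}[(\x^T v)^2]=\expxlin{j}[\x^T\mathrm{Cov}(v)\x]$ is the stated bias term. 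Adding the two pieces gives the general formula, and the specializations are substitutions: for a $0$-mean multivariate normal design with covariance common across players, $\X_i^T\X_i$ is Wishart and $\mathbb{E}_{\X_i}[(\X_i^T\X_i)^{-1}]=\cross{i}^{-1}/(\ndraw_i-\dimval-1)$, so $\text{tr}(\cross{j}\mathbb{E}_{\X_i}[(\X_i^T\X_i)^{-1}])=\dimval/(\ndraw_i-\dimval-1)$ and $L_j$ becomes $\mue\sum_i\vm_{ji}^2\dimval/(\ndraw_i-\dimval-1)$; for mean estimation, $\dimval=1$ and $\x\equiv1$, so $\cross{j}=1$, $(\X_i^T\X_i)^{-1}=1/\ndraw_i$ deterministically, and $\expxlin{j}[(\x^d)^2]=1$, collapsing the whole expression to $\mue\sum_i\vm_{ji}^2/\ndraw_i+\big(\sum_{i\ne j}\vm_{ji}^2+(\sum_{i\ne j}\vm_{ji})^2\big)\var$.

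The main obstacle is bookkeeping rather than a single hard idea: the cross-term cancellations must be justified through the correct nesting of the expectations and the right independence hypothesis invoked at each layer --- conditional unbiasedness of OLS (which needs $\ndraw_i>\dimval$ so that $\X_i^T\X_i$ is invertible), independence of sampling noise across players, independence of the noise scale $\err_i$ from the design $\X_i$, and lack of correlation among the parameter draws across players and across coordinates. Once those are in place the surviving terms assemble into the stated closed form, and the Gaussian and mean-estimation corollaries are immediate.
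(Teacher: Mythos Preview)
Your proposal is correct and follows essentially the same approach as the paper: the split into a finite-sample piece $\sum_i\vm_{ji}(\hat{\param}_i-\param_i)$ and a heterogeneity piece $\sum_{i\ne j}\vm_{ji}(\param_i-\param_j)$ is exactly the paper's decomposition $(\hat{\param}^{\vm}_j-\param^{\vm}_j)+(\param^{\vm}_j-\param_j)$, and the cross-term cancellations, the reduction of each diagonal variance term to the local-estimation trace, and the bias calculation all match. Your one-line computation of $\mathrm{Cov}(v)$ via independence of the $\param_i$ is a bit cleaner than the paper's term-by-term expansion of the outer product, and you rightly flag that the Gaussian simplification tacitly uses $\cross{j}=\cross{i}$, but these are presentational differences rather than a different route.
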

Finally, we derive as corollaries the expected MSE for the uniform federation and the coarse-grained case. 

\begin{restatable}{corollary}{linfed}
\label{linfed}
For uniform linear regression, the expected MSE of federated estimation for a player with $\ndraw_j$ samples is: 
$$L_j + \frac{\sum_{i\ne j}\ndraw_i^2 + (\total - \ndraw_j)^2}{\total^2}\sum_{d=1}^{\dimval}\expxlin{j}[(\x^d)^2] \cd \var_d$$
where $L_j$ is equal to:
$$\mue\sum_{i=1}^{\nplayer}\frac{\ndraw_i^2}{\total^2}\text{tr}[\cross{j}\expdata{i}\br{(\X_i^T\X_i)^{-1}}$$
or, if the distribution of input values $\xdist{i}$ is a $\dimval$-dimensional multivariate normal distribution with 0 mean, can be simplified to
$$\mue\sum_{i=1}^{\nplayer}\frac{\ndraw_i^2}{\total^2}\frac{\dimval}{\ndraw_i - \dimval -1}$$
In the case of mean estimation, the entire error term can be simplified to: 
$$\frac{\mue}{\total} + \frac{ \sum_{i\ne j}\ndraw_i^2  +(\total - \ndraw_j)^2}{\total^2}\var$$
where $\total = \sum_{i=1}^{\nplayer}\ndraw_i$. 
\end{restatable}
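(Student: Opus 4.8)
The plan is to obtain Corollary \ref{linfed} as a direct specialization of Theorem \ref{linfedv}, by observing that uniform federation is exactly fine-grained federation in which every player $j$ is forced to use the weights $\vm_{ji} = \ndraw_i/\total$ for all $i \in [\nplayer]$. These are precisely the weights appearing in the uniform update $\hat{\param}^f = \frac{1}{\total}\sum_{i=1}^{\nplayer}\hat{\param}_i \cd \ndraw_i$, and they satisfy the normalization $\sum_{i=1}^{\nplayer}\vm_{ji} = 1$ demanded by Theorem \ref{linfedv}, so the substitution is legitimate and no new error computation is needed.

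First I would substitute these weights into the general linear-regression expression of Theorem \ref{linfedv}. Using $\sum_{i\ne j}\ndraw_i = \total - \ndraw_j$, the coefficient of the variance term becomes $\sum_{i\ne j}\vm_{ji}^2 + \p{\sum_{i\ne j}\vm_{ji}}^2 = \frac{\sum_{i\ne j}\ndraw_i^2 + (\total-\ndraw_j)^2}{\total^2}$, and $L_j$ becomes $\mue\sum_{i=1}^{\nplayer}\frac{\ndraw_i^2}{\total^2}\,\text{tr}[\cross{j}\expdata{i}\br{(\X_i^T\X_i)^{-1}}]$, which is the first displayed formula. For the zero-mean multivariate-normal case, Theorem \ref{linfedv} already provides the simplification of its $L_j$ term to $\mue\sum_{i=1}^{\nplayer}\vm_{ji}^2\cd\frac{\dimval}{\ndraw_i-\dimval-1}$; substituting the uniform weights turns this into $\mue\sum_{i=1}^{\nplayer}\frac{\ndraw_i^2}{\total^2}\cd\frac{\dimval}{\ndraw_i-\dimval-1}$, as claimed.

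Finally, for mean estimation I would start from the mean-estimation form of Theorem \ref{linfedv}. The first term is $\mue\sum_{i=1}^{\nplayer}\vm_{ji}^2\cd\frac{1}{\ndraw_i} = \frac{\mue}{\total^2}\sum_{i=1}^{\nplayer}\ndraw_i = \frac{\mue}{\total}$, and the second term is the variance coefficient computed above times $\var$, so the total is $\frac{\mue}{\total} + \frac{\sum_{i\ne j}\ndraw_i^2 + (\total-\ndraw_j)^2}{\total^2}\var$. There is no real obstacle here; the only points requiring a small amount of care are the collapse $\sum_{i=1}^{\nplayer}\ndraw_i^2/\ndraw_i = \total$ in the mean-estimation case, and transcribing faithfully from Theorem \ref{linfedv} the index set $i \ne j$ in the bias/variance coefficient (reflecting that only the other players contribute bias while every player, including $j$, contributes to the $L_j$ variance term). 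If one preferred a self-contained derivation, one could instead plug the uniform weights directly into the bias--variance decomposition used to prove Theorem \ref{linfedv}, but routing through the fine-grained theorem is the shortest path.
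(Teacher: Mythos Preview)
Your proposal is correct and matches the paper's own proof essentially line for line: the paper also derives the corollary by substituting the uniform weights $\vm_{ji}=\ndraw_i/\total$ into Theorem \ref{linfedv} and performing the same simplifications of the bias coefficient, the $L_j$ term (including the zero-mean normal case), and the mean-estimation collapse $\sum_i \ndraw_i^2/\ndraw_i=\total$.
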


\begin{restatable}{corollary}{linfedw}
\label{linfedw}
For coarse-grained linear regression, the expected MSE of federated estimation for a player with $\ndraw_j$ samples is: 
$$L_j + (1-\w)^2 \cd \frac{\sum_{i\ne j}\ndraw_i^2 + (\total - \ndraw_j)^2}{\total^2}\sum_{d=1}^{\dimval}\expxlin{j}[(\x^d)^2] \cd \var_d$$
where $L_j$ is equal to:
$$\mue \cd (1-\w)^2 \cd \sum_{i=1}^{\nplayer}\frac{\ndraw_i^2}{\total^2}\text{tr}[\cross{j}\expdata{i}\br{(\X_i^T\X_i)^{-1}} +$$
$$\mue \p{\w^2 + 2\frac{(1-\w) \w \cd  \ndraw_j}{\total}} \cd \text{tr}[\cross{j}\expdata{i}\br{(\X_j^T\X_j)^{-1}}$$
or, if the distribution of input values $\xdist{i}$ is a $\dimval$-dimensional multivariate normal distribution with 0 mean, can be simplified to
$$\mue \cd (1-\w)^2 \cd \sum_{i=1}^{\nplayer}\frac{\ndraw_i^2}{\total^2}\frac{\dimval}{\ndraw_i - \dimval -1}$$
$$+ \mue \cd \p{\w^2 + 2 \cd \frac{(1-\w) \cd \w \cd \ndraw_j}{\total}} \cd \frac{\dimval}{\ndraw_i - \dimval -1}$$
In the case of mean estimation, the entire error term can be simplified to: 
$$\mue \p{\frac{\w^2}{\ndraw_j} + \frac{1-\w^2}{\total}} + \frac{ \sum_{i\ne j}\ndraw_i^2  +(\total - \ndraw_j)^2}{\total^2}\cd (1-\w)^2\var$$
where $\total = \sum_{i=1}^{\nplayer}\ndraw_i$. 
\end{restatable}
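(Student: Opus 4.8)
The plan is to derive this statement as a corollary of Theorem~\ref{linfedv} by instantiating fine-grained federation with the weights that reproduce coarse-grained federation, and then simplifying. Recall from the model section that coarse-grained federation with parameter $\w$ is exactly fine-grained federation with $\vm_{jj} = \w + \frac{(1-\w)\cd\ndraw_j}{\total}$ and $\vm_{ji} = (1-\w)\cd\frac{\ndraw_i}{\total}$ for $i \ne j$. The first step is to check that these weights are admissible, i.e. $\vm_{jj} + \sum_{i\ne j}\vm_{ji} = \w + (1-\w)\cd\frac{\ndraw_j + (\total-\ndraw_j)}{\total} = 1$, so that Theorem~\ref{linfedv} applies verbatim once the substitution is made.

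Next I would substitute into the heterogeneity (bias) term of Theorem~\ref{linfedv}. Since $\vm_{ji} = (1-\w)\cd\frac{\ndraw_i}{\total}$ for every $i \ne j$, we get $\sum_{i\ne j}\vm_{ji}^2 = (1-\w)^2\frac{\sum_{i\ne j}\ndraw_i^2}{\total^2}$ and $\p{\sum_{i\ne j}\vm_{ji}}^2 = (1-\w)^2\frac{(\total-\ndraw_j)^2}{\total^2}$, which combine to the claimed factor $(1-\w)^2\cd\frac{\sum_{i\ne j}\ndraw_i^2 + (\total-\ndraw_j)^2}{\total^2}$ multiplying $\sum_{d}\expxlin{j}[(\x^d)^2]\cd\var_d$. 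For the term $L_j$, the key bookkeeping step is to split the sum $\sum_{i=1}^{\nplayer}\vm_{ji}^2\cd\text{tr}[\cd]$ into its $i\ne j$ part and its $i = j$ part. For $i \ne j$ we again get the factor $(1-\w)^2\frac{\ndraw_i^2}{\total^2}$; for $i = j$ we expand $\vm_{jj}^2 = \w^2 + 2\w\cd\frac{(1-\w)\cd\ndraw_j}{\total} + (1-\w)^2\frac{\ndraw_j^2}{\total^2}$, and then re-absorb the last summand $(1-\w)^2\frac{\ndraw_j^2}{\total^2}$ back into the $i\ne j$ sum so that it runs over all of $[\nplayer]$. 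What remains is precisely the extra term $\mue\p{\w^2 + 2\frac{(1-\w)\cd\w\cd\ndraw_j}{\total}}\cd\text{tr}[\cross{j}\expdata{i}[(\X_j^T\X_j)^{-1}]]$ of the statement. The multivariate-normal simplification then follows from the same inverse-Wishart expectation computation already used in Theorem~\ref{linloc}, namely $\text{tr}[\cross{j}\expdata{i}[(\X_i^T\X_i)^{-1}]] = \frac{\dimval}{\ndraw_i - \dimval - 1}$ when $\xdist{i}$ is a zero-mean multivariate normal.

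Finally, the mean-estimation case follows by setting $\dimval = 1$, $\cross{j} = 1$, $\X_i^T\X_i = \ndraw_i$, and $\expxlin{j}[(\x^1)^2] = 1$, so the bias term reduces to $(1-\w)^2\frac{\sum_{i\ne j}\ndraw_i^2 + (\total-\ndraw_j)^2}{\total^2}\var$ immediately. The only nontrivial simplification is for $L_j$: it becomes $\mue(1-\w)^2\frac{\sum_i \ndraw_i}{\total^2} + \mue\frac{\w^2}{\ndraw_j} + \mue\frac{2\w(1-\w)}{\total}$, and using $\sum_i \ndraw_i = \total$ together with the identity $(1-\w)^2 + 2\w(1-\w) = 1 - \w^2$ collapses this to $\mue\p{\frac{\w^2}{\ndraw_j} + \frac{1-\w^2}{\total}}$, as claimed.

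I expect the main obstacle to be organizational rather than mathematical: keeping the index sets straight in $L_j$ (which terms belong to the full sum over $[\nplayer]$ versus the isolated $i = j$ term), and spotting the algebraic identity $(1-\w)^2 + 2\w(1-\w) = 1-\w^2$ that makes the mean-estimation formula clean. No new probabilistic content is needed beyond Theorems~\ref{linloc} and~\ref{linfedv}; everything else is substitution and collecting terms.
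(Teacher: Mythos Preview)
Your proposal is correct and follows essentially the same approach as the paper: substitute the coarse-grained weights $\vm_{jj} = \w + (1-\w)\ndraw_j/\total$, $\vm_{ji} = (1-\w)\ndraw_i/\total$ into Theorem~\ref{linfedv}, split the $L_j$ sum at $i=j$, expand $\vm_{jj}^2$ and re-absorb the $(1-\w)^2\ndraw_j^2/\total^2$ piece into the full sum, and then use $(1-\w)^2 + 2\w(1-\w) = 1-\w^2$ for the mean-estimation simplification. The paper's proof does exactly these steps (it introduces a placeholder $\zeta_i$ for the trace factor but the manipulations are identical).
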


The exact MSE for linear regression follows a very similar form to that for mean estimation. In all cases, the bias component (the term involving $\var_d$) is in the exact same form and could be directly modified to mean estimation by using $(\var)' = \sum_{d=1}^{\dimval}\expxlin{j}[(\x^d)^2] \cd \var_d$. The variance component (the term involving $\mue$) fits the exact form of mean estimation in the limit where $\ndraw_j >> \dimval$. In this case, the error can be modified to fit mean estimation by using $\mue' = \dimval \cd \mue$. This approximation is good when there are many more samples than the dimension of the linear regression problem under investigation: for most cases of model fitting, this assumption is reasonable. 

For the rest of the paper, we will use the $\ndraw_j >> \dimval$ assumption: consequentially, \emph{all of our results apply equally to linear regression and mean estimation}.

\section{Uniform federation: coalition formation}\label{sec:unadj}
In this section, we analyze the stability of coalition structures in the case that uniform federation is used. We consider two cases: 1) where all players have the same number of datapoints $\ndraw$ and 2) where all players have either a \enquote{small} or \enquote{large} number of points. We will use $\alone$ to refer to the coalition partition where all players are alone and $\gcol$ to refer to the grand coalition. Proofs from this section are given in \ifthenelse{\equal{\camera}{1}}{the full version \cite{donahue2020modelsharing}.}{Appendix \ref{app:coalition}.}

\subsection{All players have the same number of samples}
In this case, the analysis simplifies greatly: 
\begin{lemma}
\label{nsameunadj}
If all players have the same number of samples $\ndraw$, then: \begin{itemize}
    \item If $\ndraw < \frac{\mue}{\var}$, players minimize their error in $\gcol$.   
    \item If $\ndraw > \frac{\mue}{\var}$, players minimize their error in $\alone$.  
    \item If $\ndraw = \frac{\mue}{\var}$, players are indifferent between any arrangement of players. 
\end{itemize}
\end{lemma}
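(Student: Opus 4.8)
The plan is to invoke the mean-estimation formula from Corollary~\ref{linfed} and specialize it to the symmetric case. Consider a player $j$ sitting in a coalition of $k$ players; since every member has $\ndraw$ samples, we have $\total = k\ndraw$, $\sum_{i\ne j}\ndraw_i^2 = (k-1)\ndraw^2$, and $(\total-\ndraw_j)^2 = (k-1)^2\ndraw^2$. Substituting into Corollary~\ref{linfed} and simplifying the bias term via $\frac{(k-1)\ndraw^2 + (k-1)^2\ndraw^2}{(k\ndraw)^2} = \frac{k-1}{k}$, the player's expected error turns out to depend only on $k$:
\[
f(k) \;=\; \frac{\mue}{k\ndraw} + \frac{k-1}{k}\var \;=\; \var + \frac{1}{k}\left(\frac{\mue}{\ndraw} - \var\right).
\]

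The key structural point to record first is that, with equal sample sizes, a player's error is completely determined by the size of its own coalition, so each player's preference order over coalition structures reduces to a preference over the value of $k$. Then $f$ is affine in $1/k$ with slope $\frac{\mue}{\ndraw}-\var$, and the three cases fall out by monotonicity. If $\ndraw < \mue/\var$ the slope is positive, so $f$ is strictly decreasing in $k$ and every player strictly prefers the largest attainable coalition; since $\gcol$ is the unique structure in which every player simultaneously sits in a coalition of size $\nplayer$, all players jointly minimize error there. If $\ndraw > \mue/\var$ the slope is negative, $f$ is strictly increasing in $k$, each player minimizes error at $k=1$, and $\alone$ is the unique structure achieving this for everyone at once. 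If $\ndraw = \mue/\var$ then $f(k)=\var$ identically, so every player is indifferent among all coalition structures.

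There is no real obstacle here beyond the algebraic bookkeeping in collapsing the bias term to $\frac{k-1}{k}\var$ and rearranging into the $\var + \frac1k(\cdot)$ form; once the per-player error is exhibited as monotone in $k$ with sign governed by $\mue/\ndraw - \var$, the statement is immediate. The one point I would be careful to spell out is why pointwise minimization of $f(k)$ pins down a single coalition structure rather than merely a set of coalition sizes: in the first regime this is because $k=\nplayer$ (the grand coalition) is jointly feasible for all players, and in the second because $k=1$ (all singletons) is likewise jointly feasible.
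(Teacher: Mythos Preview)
Your proposal is correct and follows essentially the same route as the paper: both reduce the error in a size-$k$ coalition to $\frac{\mue}{k\ndraw}+\frac{k-1}{k}\var$ and then read off monotonicity in $k$ (the paper via the derivative in $k$, you via the equivalent rewriting $\var+\frac{1}{k}\bigl(\frac{\mue}{\ndraw}-\var\bigr)$). The extra sentence you add about joint feasibility of $k=\nplayer$ versus $k=1$ is a nice clarification but not a substantive departure.
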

\begin{proof}
In the case that all players have the same number of samples, we can use $\ndraw_i=\ndraw$ to simplify the error term: 
$$\frac{\mue}{\nplayer \cd \ndraw} + \var \frac{\nplayer-1}{\nplayer}$$
In order to see whether players would prefer a larger group (higher \nplayer) or a smaller group (smaller \nplayer), we take the derivative of the error with respect to $\nplayer$:
$$-\frac{\mue}{\nplayer^2 \cd \ndraw} + \frac{\var}{\nplayer^2}= \frac{\var \cd \ndraw - \mue}{\ndraw\cd \nplayer^2}$$
This is positive when $\ndraw > \frac{\mue}{\ndraw}$: a player gets higher error the more players it is federating with. This is negative when $\ndraw < \frac{\mue}{\var}$: a player gets lower error the more players it is federating with. This is 0 when $\ndraw = \frac{\mue}{\var}$, which implies players should be indifferent between different arrangements. Plugging in for $\ndraw = \frac{\mue}{\var}$ in the error equation gives
$\frac{\mue\cd \var}{\nplayer \cd \mue}  + \var \frac{\nplayer -1}{\nplayer} = \var$
which is equivalent to the error a player would get alone: 
$\frac{\mue}{\ndraw} =  \frac{\mue\cd \var}{\mue} = \var$. 
\end{proof}
As a corollary, we can classify the core stable arrangements cleanly: 
\begin{restatable}{corollary}{stabsame}
\label{stabsame}
For uniform federation, if all players have the same number of samples $\ndraw$, then: \begin{itemize}
    \item If $\ndraw < \frac{\mue}{\var}$, $\gcol$ is the only partition that is core-stable.  
    \item If $\ndraw > \frac{\mue}{\var}$, $\alone$ is the only partition that is core-stable.
    \item If $\ndraw = \frac{\mue}{\var}$, any arrangement of players is core-stable. 
\end{itemize}
\end{restatable}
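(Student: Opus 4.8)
The plan is to derive everything from Lemma~\ref{nsameunadj} and from one structural fact already visible in its proof: when every player has $\ndraw$ samples, the error a player incurs is a function only of the size $k$ of the coalition containing it, namely $f(k) = \frac{\mue}{k\cd\ndraw} + \var\cd\frac{k-1}{k} = \var + \frac{1}{k}\p{\frac{\mue}{\ndraw} - \var}$. Hence $f(k) - f(k') = \p{\tfrac{1}{k} - \tfrac{1}{k'}}\p{\tfrac{\mue}{\ndraw} - \var}$, so $f$ is \emph{strictly} decreasing in $k$ when $\ndraw < \frac{\mue}{\var}$, \emph{strictly} increasing when $\ndraw > \frac{\mue}{\var}$, and constant when $\ndraw = \frac{\mue}{\var}$. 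Every stability check below reduces to a comparison of coalition sizes through this monotonicity, so nothing beyond the error formula of Corollary~\ref{linfed} is needed.

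Take the case $\ndraw < \frac{\mue}{\var}$; the reverse inequality is its mirror image. To see $\gcol$ is core-stable, note any candidate blocking coalition $\col$ has $|\col|\le\nplayer$, while each $i\in\col$ currently sits in the size-$\nplayer$ grand coalition achieving the global minimum $f(\nplayer)$; since $f$ strictly decreases, $i$'s error in $\col$ is $f(|\col|)\ge f(\nplayer)$, with equality only if $\col=[\nplayer]$, so no member strictly prefers $\col$. To see $\gcol$ is the \emph{only} core-stable partition, observe that if $\Pi\ne\gcol$ then $[\nplayer]\notin\Pi$, so every player lies in a block of size $<\nplayer$; then $\col=[\nplayer]$ blocks $\Pi$, since every player's error strictly drops to $f(\nplayer)$. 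Symmetrically, when $\ndraw > \frac{\mue}{\var}$, singletons are optimal: $\alone$ is core-stable (any $\col$ with $|\col|\ge 2$ gives each member $f(|\col|)>f(1)$, and singletons already belong to $\alone$), and any $\Pi\ne\alone$ contains a block $\col$ with $|\col|\ge2$, so for any $i\in\col$ the singleton $\{i\}$ blocks $\Pi$. Finally, when $\ndraw = \frac{\mue}{\var}$, $f$ is constant, so --- exactly as in the indifference clause of Lemma~\ref{nsameunadj} --- no player strictly prefers any deviation, hence no coalition blocks any partition and every partition is core-stable.

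I do not anticipate a genuinely hard step; the only points requiring care are (i) that core-stability needs \emph{every} member of a blocking coalition to \emph{strictly} prefer it, which is why the strictness of the monotonicity of $f$ (and not merely weak monotonicity) is the load-bearing fact, and (ii) the elementary bookkeeping that $\Pi\ne\gcol$ forces all blocks to have size $<\nplayer$ while $\Pi\ne\alone$ forces some block to have size $\ge 2$, which is what lets us exhibit the blocking coalition in each direction.
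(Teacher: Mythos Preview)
Your proof is correct and follows essentially the same approach as the paper's: both reduce the corollary to Lemma~\ref{nsameunadj} by observing that when one partition is strictly optimal for every player, that partition is core-stable and uniquely so, while indifference makes every partition core-stable. Your write-up is simply more explicit --- you spell out the error function $f(k)$, its strict monotonicity, and the specific blocking coalitions ($[\nplayer]$ or a singleton $\{i\}$) --- whereas the paper's appendix proof states the same reasoning at a slightly higher level of abstraction without naming the blocking coalitions.
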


\subsection{Small \& large player case}
In this section, we add another layer of depth by allowing players to come in one of two \enquote{sizes}. \enquote{Small} players have $\ns$ samples and \enquote{large} ones have $\nl$ samples, with $\ns < \nl$. We demonstrate that versions of the game in this pattern always have a stable partition by constructively producing an element that is stable. Note that this is \emph{not} true in general of hedonic games. As discussed in \citet{BOGOMOLNAIA2002201}, there are multiple instances where a game might have no stable partition. 

To characterize this space, we divide it into cases depending on the relative size of $\ns, \nl$. We will use the notation $\pi(\s, \el)$ to denote a coalition with $\s$ small players and $\el$ large players, out of a total of $\Sv$ and $\Lv$ present. We will use $\pi(\s_1, \el_1) \cg_{\si} \pi(\s_2, \el_2)$ to mean that the small players prefer coalition $\pi(\s_1, \el_1)$ to $ \pi(\s_2, \el_2)$ and $\pi(\s_1, \el_1) \cg_{\li} \pi(\s_2, \el_2)$ to mean the same preference, but for large players. 

\subsubsection{Case 1: $\ns, \nl \geq \frac{\mue}{\var}$}
The first case is when $\ns$ is large: it turns out that each player minimizes their error by using local learning,  which means that $\alone$ is in the core. The lemma below is more general than the small/large case, but implies that when $\ns> \frac{\mue}{\var}$, $\alone$ is the only element in the core and when $\ns = \frac{\mue}{\var}$ then any arrangement where the large players are alone are is in the core. 

\begin{restatable}{lemma}{allbigeq}
\label{allbigeq}
For uniform federation, if $\ndraw_i > \frac{\mue}{\var}$ for all $i\in [\nplayer]$, then $\alone$ is the unique element in the core. \\
If $\ndraw_i \geq \frac{\mue}{\var}$ for all $i\in [\nplayer]$, with $\ndraw_k > \frac{\mue}{\var}$ for at least one player $k$, then any arrangement where the players with samples $\ndraw_k > \frac{\mue}{\var}$ are alone is in the core. 
\end{restatable}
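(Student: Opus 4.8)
The plan is to base everything on the mean-estimation error expression from Corollary~\ref{linfed}, restricted to whatever coalition $\col$ is federating: if player $j\in\col$ and $N_{\col}:=\sum_{i\in\col}\ndraw_i$, then
$$\text{err}_j(\col)=\frac{\mue}{N_{\col}}+\var\cd\frac{\sum_{i\in\col\setminus\{j\}}\ndraw_i^2+(N_{\col}-\ndraw_j)^2}{N_{\col}^{\,2}},$$
while the local error is $\text{err}_j(\{j\})=\mue/\ndraw_j$ (Theorem~\ref{linloc}). Writing $\col=\{j\}\cup A$ with $A\ne\emptyset$ and abbreviating $a:=\sum_{i\in A}\ndraw_i>0$ and $q:=\sum_{i\in A}\ndraw_i^2$, I would first record the elementary observation that, after multiplying by $\ndraw_j N_{\col}^{\,2}>0$, the difference $\text{err}_j(\col)-\mue/\ndraw_j$ has the same sign as
$$\var\cd\ndraw_j\p{q+a^2}-\mue\cd a\p{\ndraw_j+a}.$$
This single comparison drives both halves of the statement.

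For the first claim I would prove the key inequality: if every member of $\col$ satisfies $\ndraw_i>\mue/\var$ and $|\col|\ge 2$, then $\text{err}_j(\col)>\mue/\ndraw_j$ for all $j\in\col$. Multiplying $\mue<\var\ndraw_i$ by $\ndraw_i$ and summing over $i\in A$ gives $\mue\, a<\var\, q$, hence $\mue\, a\,\ndraw_j<\var\, q\,\ndraw_j$; and $\mue<\var\ndraw_j$ together with $a^2>0$ gives $\mue\, a^2<\var\,\ndraw_j\, a^2$. Adding these, $\mue\, a(\ndraw_j+a)<\var\,\ndraw_j(q+a^2)$, so the sign expression above is strictly positive. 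Granting this, $\alone$ is in the core (a singleton cannot improve on its own coalition in $\alone$, and every coalition of size $\ge 2$ has \emph{all} of its members strictly preferring to be alone, so it cannot block any partition), and $\alone$ is the \emph{unique} core partition (any $\Pi\ne\alone$ contains some $j$ in a coalition of size $\ge 2$, and then $\{j\}$ is a coalition blocking $\Pi$).

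For the second claim I would split the players into \emph{large} ($\ndraw_i>\mue/\var$) and \emph{threshold} ($\ndraw_i=\mue/\var$, so all threshold players share the same sample count), fix a partition $\Pi$ in which every large player is a singleton, and show that $\Pi$ admits no blocking coalition $\col$. First, the same inequality chain --- now invoking only the weak bound $\mue\le\var\ndraw_i$ for the sum over $A$ and the strict bound $\mue<\var\ndraw_k$ (with $a^2>0$) for the last term --- gives $\text{err}_k(\col)>\mue/\ndraw_k$ whenever $\col$ contains a large player $k$ with $|\col|\ge 2$; hence a blocking coalition can contain no large player. Second, when $\ndraw_j=\mue/\var$ the sign expression collapses to $\var\,\ndraw_j\p{q-a\,\ndraw_j}=\var\,\ndraw_j\sum_{i\in A}\ndraw_i(\ndraw_i-\ndraw_j)\ge 0$, with equality precisely when every member of $A$ is threshold-sized; so a threshold player sitting inside an all-threshold coalition has error exactly $\mue/\ndraw_j$, which is also exactly its error in $\Pi$ (its $\Pi$-coalition is either a singleton or consists of threshold players only). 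Thus every member of a putative all-threshold blocking coalition is merely indifferent, not strictly better off --- a contradiction. Hence $\Pi$ has no blocking coalition and is in the core.

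The step I expect to be the genuine obstacle is the key inequality $\mue\, a(\ndraw_j+a)<\var\,\ndraw_j(q+a^2)$: the crude bound $\mue\, a\le\var\,\ndraw_j\, a$ is hopeless when $j$ is a large player with many small coalition-mates, so one really must use the per-player lower bounds $\mue<\var\ndraw_i$, combined through the $\ndraw_i$-weighted sum to obtain $\mue\, a<\var\, q$. Everything after that is the (routine but slightly fiddly) bookkeeping of which coalitions can block which partitions, together with careful tracking of strict versus weak inequalities to pin down the equality case for threshold players in the second part.
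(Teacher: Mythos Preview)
Your proposal is correct and follows essentially the same approach as the paper. After clearing denominators you arrive at the same comparison $\var\,\ndraw_j(q+a^{2})$ versus $\mue\,a(\ndraw_j+a)$ that the paper obtains (the paper rewrites $\mue\total_Q(\total_Q-\ndraw_j)=\mue(\total_Q-\ndraw_j)^{2}+\mue\ndraw_j(\total_Q-\ndraw_j)$, which is exactly your $\mue a^{2}+\mue a\ndraw_j$), and both proofs dispatch this via the identical two-piece split: $\var\ndraw_j a^{2}>\mue a^{2}$ from $\ndraw_j>\mue/\var$, and $\var\ndraw_j q>\mue\ndraw_j a$ from the per-player bounds $\var\ndraw_i>\mue$ summed over $i\in A$. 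Your treatment of the second claim---tracking strict versus weak inequalities, identifying the equality case via $\sum_{i\in A}\ndraw_i(\ndraw_i-\ndraw_j)$, and spelling out why neither a coalition containing a large player nor an all-threshold coalition can block---is more explicit than the paper's rather terse appeal to the earlier equal-$\ndraw$ lemma, but the underlying argument is the same.
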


\subsubsection{Case 2: $\ns, \nl \leq \frac{\mue}{\var}$}
Next, we consider the case where both the small and large players have a relatively small number of samples. In this situation, it turns out that the grand coalition is core stable.

\begin{restatable}{theorem}{smallcore}
\label{smallcore}
For uniform federation, if $\nl\leq \frac{\mue}{\var}$ and $\ns < \nl$, then the grand coalition $\gcol$ is core stable.  
\end{restatable}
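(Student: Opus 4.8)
The plan is to reduce core stability of $\gcol$ to two monotonicity statements about the error and then use the symmetry of equal-sized players to pick, for each candidate blocking coalition, the player who witnesses that it is not blocking. First I would rewrite the uniform error of Corollary~\ref{linfed} (mean-estimation form, which under the $\ndraw_j\gg\dimval$ convention also covers linear regression) in a convenient shape: expanding $(\total-\ndraw_j)^2$ shows that a player with $\ndraw_j$ samples in a coalition $\col$ with $\total_{\col}=\sum_{i\in\col}\ndraw_i$ has
\[
err_j(\col)=\var+\frac{\mue-2\ndraw_j\var}{\total_{\col}}+\var\cd\frac{\sum_{i\in\col}\ndraw_i^2}{\total_{\col}^{\,2}}.
\]
Core stability of $\gcol$ is the statement that every nonempty $\col\subsetneq[\nplayer]$ contains a player $j$ with $err_j(\col)\ge err_j(\gcol)$. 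For proper $\col$ we have $\total_{\col}<\total_{\gcol}$, so the only $\ndraw_j$-dependent part of $err_j(\col)-err_j(\gcol)$, namely $(\mue-2\ndraw_j\var)\p{\tfrac1{\total_{\col}}-\tfrac1{\total_{\gcol}}}$, is decreasing in $\ndraw_j$; hence it suffices to verify the inequality for a player in $\col$ with the fewest samples --- a small player if $\col$ contains one, and otherwise a large player. Two claims then finish the argument.

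\textbf{Claim A: a coalition containing a small player is not blocking.} Here I would show that a small player's error, viewed as a function of the composition $(\s,\el)$ with $T=\s\ns+\el\nl$ and $Q=\s\ns^2+\el\nl^2$, is weakly decreasing in $\s$ and in $\el$, so that chaining increments up to $(\Sv,\Lv)$ shows a small player (weakly) prefers $\gcol$ to every coalition it could belong to. For the increment $(\s,\el)\to(\s+1,\el)$, clearing the positive denominator reduces the desired inequality to $\mue\,T(T+\ns)\ge 3\ns\var T^2+2\ns^2\var T-\var Q(2T+\ns)$, and since $\mue\ge\ns\var$ and $Q\ge\ns T$ (the latter because $\nl\ge\ns$) the right-hand side is at most $\ns\var T(T+\ns)\le\mue\,T(T+\ns)$. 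The increment $(\s,\el)\to(\s,\el+1)$ reduces in the same way to $Q(2T+\nl)\ge T\p{2\ns T+2\ns\nl-\nl^2}$, again immediate from $Q\ge\ns T$ and $\nl\ge\ns$ --- but this second reduction genuinely needs the full hypothesis $\mue\ge\nl\var$ rather than the weaker $\mue\ge\ns\var$, and this is the only place $\nl\le\mue/\var$ enters Claim A.

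\textbf{Claim B: a coalition of large players only is not blocking.} For a pure-large coalition of $\el$ large players the error collapses to $\var+\tfrac{\mue-\nl\var}{\el\nl}$, which is non-increasing in $\el$ precisely because $\mue\ge\nl\var$; hence the all-large coalition most favorable to a large player is the full one, with $\el=\Lv$, and it is enough to show a large player's error there is at least its error in $\gcol=(\Sv,\Lv)$. Expanding both and clearing denominators, this reduces to $\mue\p{\Sv\ns+\Lv\nl}\ge\nl\var\cd\ns\p{\Sv+\Lv}$, which holds termwise: $\mue\Sv\ns\ge\nl\var\Sv\ns$ by $\mue\ge\nl\var$, and $\mue\Lv\nl\ge\nl^2\var\Lv\ge\nl\var\ns\Lv$ by $\mue\ge\nl\var$ and $\nl\ge\ns$.

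\textbf{Putting it together, and the main obstacle.} For any nonempty $\col\subsetneq[\nplayer]$: if $\col$ contains a small player, Claim A produces a member who does not strictly prefer $\col$; if $\col$ is entirely large, Claim B does; so no coalition blocks $\gcol$ and $\gcol$ is core stable. (If only one size is present the argument degenerates to the relevant half and agrees with Lemma~\ref{nsameunadj}.) The substance of the proof lies in the two algebraic reductions in Claims A and B; the step I expect to be fiddliest is the small-player monotonicity in $\el$, since that is exactly where the hypothesis $\nl\le\mue/\var$ must be spent and where the naive bound fails --- everything else is bookkeeping about which player to point at.
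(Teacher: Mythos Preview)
Your proposal is correct and follows essentially the same route as the paper: establish via monotonicity in $s$ and $\ell$ that small players minimize their error in $\gcol$ (the paper's Lemmas~\ref{sprefs} and~\ref{sprefl}, which you redo with discrete increments), and then handle all-large blocking coalitions by comparing $\pi(0,\Lv)$ directly to $\gcol$ for a large player. Your error reformulation and the ``check the smallest member'' observation are tidy, but in the two-size setting they reproduce exactly the paper's case split, and the inequality you reach in Claim~B, $\mue(\Sv\ns+\Lv\nl)\ge \nl\var\cd\ns(\Sv+\Lv)$, is precisely the one the paper derives after its longer expansion.
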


\subsubsection{Case 3: $\ns<\frac{\mue}{\var}$, $\nl > \frac{\mue}{\var}$}

Finally, we consider the case where the small players have a number of samples below the $\frac{\mue}{\var}$ boundary, while the large players have a number of samples above this threshold. 

\begin{restatable}{theorem}{indivstab}
\label{indivstab}
Assume uniform federation with $n_{\ell} > \frac{\mu_e}{\sigma^2}$. Then, there exists an arrangement of small and large players that is individually stable and a computationally efficient algorithm to calculate it. 
\end{restatable}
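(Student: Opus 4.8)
The plan is to first rewrite the per-player error in a more workable form and then exhibit an individually stable partition by a greedy construction. Expanding the mean-estimation expression in Corollary~\ref{linfed}, a player $j$ inside a coalition $C$ with $T_C=\sum_{i\in C}\ndraw_i$ and $Q_C=\sum_{i\in C}\ndraw_i^2$ has error
\[
e_j(C)=\var+\frac{\mue-2\var\ndraw_j}{T_C}+\frac{\var\,Q_C}{T_C^2},
\]
so preferences are anonymous: in a coalition with $\s$ small and $\el$ large players, $T=\s\ns+\el\nl$ and $Q=\s\ns^2+\el\nl^2$, and a player's error depends only on $(\s,\el)$ and its own type. I would then record three facts. (i) Since $\ns<\frac{\mue}{\var}$, Lemma~\ref{nsameunadj} (or a direct derivative check on the formula above) shows that a small player's error in an all-small coalition is strictly decreasing in that coalition's size; in particular its error in the all-$\Sv$-small coalition is $\le\frac{\mue}{\ns}$. (ii) By Lemma~\ref{allbigeq}, since $\nl>\frac{\mue}{\var}$, any coalition consisting solely of large players gives each of them strictly more error than being alone. (iii) Writing $f(\el)$ and $g(\el)$ for the small- and large-player errors in a coalition holding all $\Sv$ small players together with $\el$ large players, both are explicit rational functions of $\el$; I would establish the quasiconvexity / single-crossing structure of $g(\el)$ and the monotonicity facts used below.

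Now restrict attention to partitions $\Pi_k$ of the following shape: one ``hub'' containing all $\Sv$ small players together with exactly $k$ of the $\Lv$ large players, with the remaining $\Lv-k$ large players each alone. (Facts (i)--(ii) make this a natural family: small players want to be together and idle large players never want to clump with other large players; the degenerate cases $\Sv=0$ or $\Lv=0$ are already handled by Lemma~\ref{allbigeq} and Corollary~\ref{stabsame}.) The algorithm starts at $k=0$ and increments $k$ as long as moving one more large player into the hub is both strictly improving for that player ($g(k+1)<\frac{\mue}{\nl}$) and weakly preferred by every current hub member ($f(k+1)\le f(k)$, and for $k\ge1$ also $g(k+1)\le g(k)$). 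Let $k^{*}$ be the first value where this fails and output $\Pi_{k^{*}}$. This runs for at most $\Lv+1$ iterations, each an $O(1)$ evaluation of the closed form, hence in linear time.

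Finally I would verify that $\Pi_{k^{*}}$ is individually stable by ruling out every single-player move. Leaving to form a singleton: along the algorithm's path $f(\cdot)$ is non-increasing and each large player that entered did so only with error below $\frac{\mue}{\nl}$, a bound preserved by the later welcome conditions, so no hub member wants to leave (a small member has error $\le f(0)\le\frac{\mue}{\ns}$, a large member has error $<\frac{\mue}{\nl}$). An idle large player joining the hub is blocked by exactly the stopping condition at $k^{*}$; an idle large player pairing with another idle large player, or a large hub member leaving to do so, forms an all-large coalition, which is strictly worse by fact (ii). The one substantive case is a small hub member peeling off to pair with an idle large player, forming $\pi(1,1)$: here I must show that whenever the idle large player would accept this pairing (its error in $\pi(1,1)$ is at most $\frac{\mue}{\nl}$), the small player weakly prefers staying. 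I would reduce this to two concrete claims: first, that a small player's error in $\pi(\Sv,k^{*})$ is at most its error in $\pi(\Sv,1)$ (the path-monotonicity of $f$ when $k^{*}\ge1$); and second, that a small player's error does not increase when more small players are added to its coalition (from the closed form this amounts to showing the decrease in $\var Q/T^2$ dominates the change in $\frac{\mue-2\var\ns}{T}$, using $\ns<\frac{\mue}{\var}$ and $\ns<\nl$), so that its error in $\pi(\Sv,1)$ is at most its error in $\pi(1,1)$; the residual case $k^{*}=0$ is dispatched using the reason the greedy stopped (either the idle large player does not accept $\pi(1,1)$, or the all-small hub already dominates it for small players). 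I expect this last case --- and more generally pinning down the sign of ``does adding a player help or hurt the incumbents,'' since variance reduction and bias increase pull in opposite directions --- to be the main obstacle. If the bare hub construction turns out not to suffice, I would enrich the greedy so that it also lets a small player move out to partner an idle large player whenever that is mutually beneficial, and re-run the stability check.
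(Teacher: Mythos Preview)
Your plan is essentially the paper's: restrict to ``hub'' partitions $\Pi_k=\{\pi(\Sv,k),\pi(0,1),\ldots,\pi(0,1)\}$, invoke the same monotonicity facts (Lemmas~\ref{sprefs}--\ref{lprefs}), and verify stability case by case. The paper's algorithm, however, does not greedily check incumbent approval at each step; it directly sets $\ell'=\max\{\ell:\pi(\Sv,\ell)\succeq_{\li}\pi(0,1)\}$ and then makes a single test on the small players, outputting $\pi(\Sv,\ell')$ or $\pi(\Sv,0)$ accordingly.

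Your greedy differs in a way you do not seem to notice. Under the hypothesis $\nl>\mue/\var$, Lemma~\ref{lprefl} says the large players' error is \emph{strictly increasing} in $\ell$, so your incumbent condition $g(k{+}1)\le g(k)$ fails for every $k\ge 1$ and the greedy necessarily halts at $k^*\in\{0,1\}$. Your language about ``quasiconvexity / single-crossing of $g$'' and your use of ``path-monotonicity of $f$ when $k^*\ge 1$'' (which is only nontrivial if $k^*>1$) both suggest you expect richer behaviour of $g$ than actually occurs. This does not break the construction --- the lone large incumbent in $\pi(\Sv,1)$ indeed blocks any further large entrant, so $\Pi_1$ is individually stable --- but it means your output is typically different from the paper's (which can have $\ell'>1$; see Example~\ref{notstab} where $\ell'=3$).

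There is a genuine gap in your $k^*=0$ analysis of the $\pi(1,1)$ deviation. You write that the stopping reason ``the idle large player does not accept $\pi(1,1)$'' covers the sub-case where the greedy halted because $g(1)\ge\mue/\nl$. But $g(1)$ is the large player's error in $\pi(\Sv,1)$, not in $\pi(1,1)$; by Lemma~\ref{lprefs} the large player's error in $\pi(s,1)$ is U-shaped in $s$, starting at $\mue/\nl$ when $s=0$, so it can be strictly below $\mue/\nl$ at $s=1$ even when it exceeds $\mue/\nl$ at $s=\Sv$. Thus the idle large player may well accept $\pi(1,1)$, and you still owe an argument that the small player does not strictly prefer it. The paper handles the $\pi(1,1)$ deviation entirely on the small-player side, chaining $\pi(\Sv,\ell')\succeq_{\si}\pi(\Sv,1)\succ_{\si}\pi(1,1)$; you need either to reproduce that chain in this sub-case (which requires showing $f(1)\ge f(0)$ there, or that the sub-case $g(1)\ge\mue/\nl$ with $f(1)\le f(0)$ cannot occur) or to supply a direct comparison of $e_s(\pi(1,1))$ with $f(0)$.
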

The proof of Theorem \ref{indivstab} is constructive: it gives an exact arrangement that is individually stable. One natural question is whether this arrangement is also core stable. The answer to this question is \enquote{no}: we show that this arrangement can fail to be core stable. This avenue is explained more in \ifthenelse{\equal{\camera}{1}}{the full version \cite{donahue2020modelsharing}.}{Appendix \ref{app:coalition}.}

\section{Coarse-grained federation}
In this section, we analyze coarse-grained federation. As a reminder, in this situation, each player has a parameter $\w_j$ that it uses to weight the global model with its own local model. 
$$\hat{\mean}^{\w}_j= \w_j \cd \hat{\mean}_j + (1-\w_j) \cd \frac{1}{\total}\sum_{i=1}^{\nplayer}\hat{\mean}_i \cd \ndraw_i$$
for $\w_j \in [0,1]$. All proofs from this section are given in \ifthenelse{\equal{\camera}{1}}{the full version \cite{donahue2020modelsharing}.}{Appendix \ref{app:wsupp}.}

Note that the $\w_j$ value is a parameter that each player can set independently. The lemma below analyzes the optimal value of $\w_j$ and tells us that each player would prefer federation, in some form, to being alone. 

\begin{restatable}{lemma}{wfedminval}
\label{wfedminval}
For coarse-grained federation, the minimum error is always achieved when $\w_j<1$, implying that federation is always preferable to local learning. 
\end{restatable}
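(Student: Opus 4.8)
The plan is to treat the player's expected error as an explicit function of the single tunable scalar $\w_j$ and to show that this function, restricted to $[0,1]$, attains its minimum strictly inside the interval. Fix a federating coalition containing player $j$ together with at least one other player, and write $e_j(\w)$ for the mean-estimation error of Corollary~\ref{linfedw} with $\w_j = \w$; by the $\ndraw_j \gg \dimval$ convention this simultaneously covers the linear-regression case. Collecting the $\w$-dependent terms, $e_j(\w)$ is a quadratic in $\w$ of the form $\mue\p{\frac{1}{\ndraw_j} - \frac{1}{\total}}\w^2 + B(1-\w)^2 + \frac{\mue}{\total}$, where $B = \frac{\sum_{i\ne j}\ndraw_i^2 + (\total - \ndraw_j)^2}{\total^2}\var \ge 0$ is the bias coefficient. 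Since the coalition has a member other than $j$ we have $\total > \ndraw_j$, so the leading coefficient $\mue\p{\frac{1}{\ndraw_j}-\frac{1}{\total}} + B$ is strictly positive; hence $e_j$ is strictly convex and has a unique unconstrained minimizer.

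Next I would locate that minimizer. Setting the derivative to zero, $2\mue\w\p{\frac{1}{\ndraw_j}-\frac{1}{\total}} - 2B(1-\w) = 0$, yields $\w^\star = \frac{B}{\mue\p{\frac{1}{\ndraw_j}-\frac{1}{\total}} + B}$, which lies in $[0,1)$: it is nonnegative because $B \ge 0$, and strictly below $1$ because its denominator exceeds its numerator by the strictly positive amount $\mue\p{\frac{1}{\ndraw_j}-\frac{1}{\total}}$. Equivalently, and perhaps more cleanly, one can just compute $e_j'(1) = 2\mue\p{\frac{1}{\ndraw_j}-\frac{1}{\total}} > 0$, so $e_j$ is strictly decreasing just to the left of $\w=1$ and the right endpoint cannot be optimal. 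Either way, since $\w^\star \in [0,1)$ and $e_j$ is convex, the constrained minimum over $[0,1]$ is attained at $\w^\star < 1$.

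Finally I would connect this back to local learning. Evaluating at the endpoint gives $e_j(1) = \frac{\mue}{\ndraw_j}$, which is exactly the local-estimation error of Theorem~\ref{linloc}; since $\w^\star < 1$ and $e_j$ is strictly convex, $e_j(\w^\star) < e_j(1)$, so federating with an optimally chosen weight strictly beats going alone. The only subtlety worth flagging is the degenerate ``coalition'' consisting of $j$ alone, where $\total = \ndraw_j$, the $\w^2$ coefficient could vanish, and the statement is vacuous; apart from isolating that case, the argument is a short convexity/endpoint-derivative computation rather than anything delicate, so I expect no real obstacle. It is, however, worth recording explicitly that the lemma concerns the \emph{optimal} $\w_j$, and that this optimal weight --- and hence the strict improvement --- depends on $\mue$ and $\var$ only through the single aggregate $B$.
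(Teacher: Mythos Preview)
Your proposal is correct and follows essentially the same approach as the paper's proof: both compute the derivative of the error in $\w$, solve for the critical point, and check convexity (the paper via the second derivative, you by identifying the quadratic explicitly). Your presentation is a bit tidier --- writing $e_j(\w)$ as $\mue\p{\frac{1}{\ndraw_j}-\frac{1}{\total}}\w^2 + B(1-\w)^2 + \frac{\mue}{\total}$ makes both the convexity and the location of $\w^\star$ immediate --- and the extra endpoint observation $e_j'(1)>0$ together with $e_j(1)=\frac{\mue}{\ndraw_j}$ is a nice, if optional, sanity check; but these are stylistic additions rather than a different argument.
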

\begin{corollary}
For coarse-grained federation, when $\w_j$ is set optimally, the grand coalition $\gcol$ is always individually stable. 
\end{corollary}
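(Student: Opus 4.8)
The plan is to reduce individual stability of $\gcol$ to a single condition — that no player strictly prefers to defect to a singleton coalition — and then invoke Lemma~\ref{wfedminval}. First I would observe that in the coalition structure $\Pi=\gcol$ there is exactly one coalition, which contains every player; hence the only unilateral deviation available to a player $j$ is to leave $\gcol$ and form the singleton $\{j\}$ (using the standard reading of individual stability in which a player may always deviate to become a singleton, consistent with the paper's informal usage ``no player wishes to unilaterally defect''). The ``consent'' requirement on the receiving coalition is vacuous here, since $j$ is effectively joining the empty coalition. So $\gcol$ is individually stable precisely when no player strictly prefers being alone to being in $\gcol$ with its weight $\w_j$ chosen optimally.

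Next I would identify ``being alone'' with coarse-grained federation restricted to $\{j\}$: when $j$'s coalition is the singleton $\{j\}$, we have $\total=\ndraw_j$ and the coarse-grained estimate $\hat{\mean}^{\w}_j$ collapses to $\hat{\mean}_j$ for \emph{every} value of $\w_j$, so player $j$'s error when alone equals its error as a member of $\gcol$ evaluated at $\w_j=1$. Lemma~\ref{wfedminval} asserts that, within any federating coalition (in particular the grand coalition), player $j$'s error as a function of $\w_j$ attains its minimum at some $\w_j<1$; consequently the optimal-weight error is strictly below the $\w_j=1$ value, i.e.\ strictly below $j$'s local-learning error. Therefore no player can strictly improve by defecting to a singleton, and $\gcol$ is individually stable.

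The argument is essentially immediate once Lemma~\ref{wfedminval} is in hand, which is why it appears as a corollary rather than a theorem. The only points requiring care are the bookkeeping of which deviations individual stability permits from the grand coalition, and the observation that the $\w_j=1$ slice of the coarse-grained error formula coincides exactly with the local-learning error of Theorem~\ref{linloc}; neither constitutes a genuine obstacle.
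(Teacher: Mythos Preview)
Your proposal is correct and matches the paper's intended argument: the paper states this as an immediate corollary of Lemma~\ref{wfedminval} with no further proof, the point being precisely that optimal $\w_j<1$ means every player strictly prefers $\gcol$ to local learning, and local learning (the singleton deviation) is the only unilateral move available from $\gcol$. Your write-up just spells out the bookkeeping that the paper leaves implicit.
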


Specifically, this means that no player wishes to unilaterally deviate from $\gcol$. However, this does \emph{not} mean that each player prefers the grand coalition $\gcol$ to some other federating coalition. For example, refer to Section \ref{motivate} for an example where the grand coalition $\gcol$ is not core stable.

In the rest of this section, we will analyze the stability of coalition structures in the that the $\w$ parameters are set optimally (optimal coarse-grained federation). First, we will find it useful to get the closed-form value for expected MSE of a player using optimal coarse-grained federation: 

\begin{restatable}{lemma}{wbesterr}
\label{wbesterr}
A player using coarse-grained federation parameter has expected MSE: 
$$\frac{\mue \cd (\total - \ndraw_j) + (\sum_{i\ne j}\ndraw_i^2 + (\total-\ndraw_j)^2) \cd \var}{(\total - \ndraw_j) \cd \total + \ndraw_j \cd (\sum_{i\ne j}\ndraw_i^2 + (\total-\ndraw_j)^2) \cd \frac{\var}{\mue}}$$
where $\total = \sum_{i=1}^{\nplayer} \ndraw_i$. 
\end{restatable}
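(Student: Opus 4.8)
The plan is to treat this as a one-variable optimization: start from the closed-form expected MSE of coarse-grained federation as a function of the weight $\w$ --- this is precisely the mean-estimation expression in Corollary~\ref{linfedw} --- and minimize over $\w$. Writing $R = \sum_{i \ne j}\ndraw_i^2 + (\total - \ndraw_j)^2$ for the quantity appearing in the bias term, a short rearrangement (expanding $\mue(1-\w^2)/\total$ and collecting the $\w^2$ pieces of the variance part) puts the error in the form
$$E(\w) = p\,\w^2 + q\,(1-\w)^2 + \frac{\mue}{\total}, \qquad p = \mue\cd \frac{\total - \ndraw_j}{\ndraw_j\,\total}, \quad q = \frac{R\,\var}{\total^2}.$$
Here $p,q \geq 0$, with $p > 0$ whenever player $j$'s coalition contains at least one other player (so that $\ndraw_j < \total$), hence $E$ is a convex quadratic in $\w$ over $[0,1]$.

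Next I would find the minimizer. Setting $E'(\w) = 2p\w - 2q(1-\w) = 0$ gives $\w^\star = \frac{q}{p+q}$, which lies in $[0,1)$ automatically --- recovering as a sanity check the conclusion of Lemma~\ref{wfedminval} that the optimum has $\w_j < 1$. Plugging this back in and using the identity $p(\w^\star)^2 + q(1-\w^\star)^2 = \frac{pq}{p+q}$ gives the compact form
$$E(\w^\star) = \frac{pq}{p+q} + \frac{\mue}{\total}.$$

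Finally I would do the simplification. Writing $E(\w^\star)$ over the common denominator $p+q$ and substituting the values of $p$ and $q$, the numerator simplifies to $\frac{\mue}{\ndraw_j \total^2}\br{\mue(\total - \ndraw_j) + R\,\var}$ --- the key step being the cancellation $R\var\,\ndraw_j + R\var\,(\total - \ndraw_j) = R\var\,\total$, which frees a factor of $\total$ --- while the denominator is $\frac{1}{\ndraw_j \total^2}\br{\mue(\total - \ndraw_j)\,\total + R\var\,\ndraw_j}$. Cancelling the common factor $\frac{1}{\ndraw_j\total^2}$, dividing numerator and denominator by $\mue$, and unfolding $R$ yields exactly the stated expression. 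The work here is purely algebraic bookkeeping rather than anything conceptual; the one place to be careful is that cancellation, where the $\ndraw_j$ and $\total - \ndraw_j$ contributions to the bias term must be recombined to produce the cancelling $\total$. (It is also worth noting the formula presumes the coalition has at least two members: for a singleton, $\total = \ndraw_j$ and $R = 0$, and the error degenerates to the local value $\mue/\ndraw_j$ of Theorem~\ref{linloc}.)
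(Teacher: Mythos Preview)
Your proof is correct and follows essentially the same route as the paper: optimize the coarse-grained error over $\w$, substitute the minimizer back, and simplify. Your repackaging of the error as $p\,\w^2 + q(1-\w)^2 + \mue/\total$ and use of the identity $\min = pq/(p+q) + \mue/\total$ is a tidier organization than the paper's direct substitution of the formula for $\w_j$ from Lemma~\ref{wfedminval}, but the underlying computation is the same.
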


\subsection{All players have the same number of samples}
Lemma \ref{nsamew} is the analog to Lemma \ref{nsameunadj} in the previous section. Here, the results differ: with optimal coarse-grained federation, the grand coalition $\gcol$ is \emph{always} the only stable arrangement, no matter how small or large $\ndraw$ is relative to $\frac{\mue}{\var}$. 

\begin{lemma}
\label{nsamew}
For mean estimation with coarse-grained federation, if $\ndraw_j = \ndraw$, then $\gcol$ is the only element in the core. 
\end{lemma}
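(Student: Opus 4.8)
The plan is to push everything through the closed form of Lemma~\ref{wbesterr}: once all players have the same sample count $\ndraw$, a player's optimal-coarse expected MSE becomes a strictly decreasing function of the size of its coalition, and then core-stability and uniqueness of $\gcol$ are immediate for the same reason they were in Lemma~\ref{nsameunadj} / Corollary~\ref{stabsame}.

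First I would specialize Lemma~\ref{wbesterr} to a coalition of $k$ players, each with $\ndraw$ samples. Here $\total = k\ndraw$, $\total - \ndraw_j = (k-1)\ndraw$, and $\sum_{i\ne j}\ndraw_i^2 = (k-1)\ndraw^2$, so $\sum_{i\ne j}\ndraw_i^2 + (\total - \ndraw_j)^2 = k(k-1)\ndraw^2$. Substituting these into the numerator and denominator of Lemma~\ref{wbesterr} and cancelling the common factor $(k-1)\ndraw$, the expected MSE collapses to
$$\frac{\mue}{\ndraw\p{\mue + \ndraw\var}}\p{\frac{\mue}{k} + \ndraw\var}.$$
Since $\mue,\var,\ndraw>0$, the leading constant is positive and the bracketed factor $\frac{\mue}{k} + \ndraw\var$ is strictly decreasing in $k$, so a player's error strictly decreases as its coalition grows.

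Next I would use this monotonicity to finish. Because each player's error strictly decreases in coalition size, in $\gcol$ every player simultaneously attains the unique minimum of its error over every coalition it could belong to. Hence no coalition $\col$ can block $\gcol$: any $\col \neq \gcol$ has fewer than $\nplayer$ members, so every player in $\col$ strictly prefers its arrangement in $\gcol$ — thus $\gcol$ is in the core. For uniqueness, let $\Pi \neq \gcol$ be any coalition structure; the only coalition of size $\nplayer$ is $\gcol$ itself, so every coalition of $\Pi$ has size $<\nplayer$, and therefore every player strictly prefers $\gcol$ to its coalition in $\Pi$. So $\gcol$ blocks $\Pi$, and $\Pi$ is not in the core.

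The only real work is the algebraic simplification in the first step: verifying that the ratio in Lemma~\ref{wbesterr} genuinely telescopes into the clean monotone form above — in particular that the $\frac{\var}{\mue}$ appearing in the denominator recombines correctly with the $\var$ terms in the numerator to leave only the factor $\frac{\mue}{k} + \ndraw\var$ depending on $k$. Everything after that is the one-line "strict monotonicity in coalition size implies $\gcol$ is the unique core element" argument, and it needs no case analysis precisely because equal sample sizes make all players symmetric.
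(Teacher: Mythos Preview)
Your proof is correct and follows essentially the same route as the paper: specialize the closed form of Lemma~\ref{wbesterr} to equal sample sizes, observe that the resulting expression is strictly decreasing in the coalition size, and conclude that $\gcol$ is the unique core element. Your simplified expression $\frac{\mue}{\ndraw(\mue+\ndraw\var)}\bigl(\frac{\mue}{k}+\ndraw\var\bigr)$ is algebraically the same as the paper's $\frac{\mue^2/(\ndraw k)+\mue\var}{\mue+\ndraw\var}$, and your explicit treatment of uniqueness (which the paper leaves implicit) is a nice addition.
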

\begin{proof}
Using the error term derived in Lemma \ref{wbesterr}, plugging in for $\ndraw_i = \ndraw$ and simplifying gives: 
$$\frac{\frac{\mue^2}{\ndraw \cd \nplayer}+ \mue \cd\var }{\mue + \ndraw  \cd \var} $$
As $\nplayer$ increases, the error (numerator) decreases always - so $\gcol$ is where each player minimizes their error and is thus core stable.
\end{proof}

\subsection{Small \& large player case}
In this subsection, we similarly extend results for the \enquote{small} and \enquote{large} case that was introduced in the previous section. The analysis turns out to be much simpler than in the uniform federation case, and also produce stronger results: strict core stability, rather than individual stability.

\begin{restatable}{theorem}{wslcore}
\label{wslcore}
If optimal coarse-grained federation is used, then: 
\begin{itemize}
    \item If $\gcol \cleq_{\si} \pi(\Sv, 0)$ (small player weakly prefers $\pi(\Sv, 0)$), then $\{\pi(\Sv, 0), \pi(0, \Lv)\}$ is strictly core stable. 
    \item If $\gcol \cg_{\si} \pi(\Sv, 0)$ (small player strictly prefers $\gcol$), then $\gcol$ is strictly core stable. 
\end{itemize}
\end{restatable}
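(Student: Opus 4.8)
The plan is to collapse every player's ranking of coalitions to a single scalar and then read off all the blocking arguments from the monotonicity of that scalar.

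First I would take reciprocals in Lemma~\ref{wbesterr}. Writing $a = \total-\ndraw_j = \sum_{i\ne j}\ndraw_i$, $P = \sum_{i\ne j}\ndraw_i^2$, and $r = \var/\mue$, a short manipulation turns the optimal coarse-grained error $e_j$ of player $j$ into $1/e_j = \ndraw_j/\mue + \frac{1}{\mue(h_j+r)}$, where $h_j := \frac1a + \frac{Pr}{a^2}$ (and $h_j=\infty$, i.e.\ $e_j=\mue/\ndraw_j$, when $j$ is alone, consistent with Lemma~\ref{wfedminval}). So player $j$ prefers whichever coalition makes $h_j$ smaller, and $h_j$ depends only on the \emph{co-members} of $j$, not on $\ndraw_j$. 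In the small/large setting, if $j$ has $s$ small and $t$ large co-members, put $A = s\ns$ and $B = t\nl$; then $a=A+B$, $P = \ns A + \nl B$, and
$$ h_j \;=\; g(A,B) \;:=\; \frac{\alpha A + \beta B}{(A+B)^2}, \qquad \alpha := 1+\ns r, \quad \beta := 1+\nl r, $$
with $1 < \alpha < \beta$ because $\ns < \nl$.

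Second I would record two monotonicity facts about $g$. (i) $\partial_A g = \bigl(-\alpha A + (\alpha-2\beta)B\bigr)/(A+B)^3 < 0$ for $(A,B)\ne(0,0)$: $g$ is strictly decreasing in $A$, so adding a small player to \emph{any} coalition strictly lowers every member's error, and every player wants all $\Sv$ small players present. (ii) $\partial_B g = \bigl((\beta-2\alpha)A - \beta B\bigr)/(A+B)^3$, so for fixed $A$ the map $B\mapsto g(A,B)$ on $[0,B_{\max}]$ is strictly quasiconcave (either strictly decreasing, or strictly increasing then strictly decreasing), hence attains its minimum there only at an endpoint. Consequences: a small player's error is minimized at $\pi(\Sv,0)$ or at $\gcol=\pi(\Sv,\Lv)$; among coalitions with no small player a large player's error is minimized at $\pi(0,\Lv)$ (note $g(0,B)=\beta/B$ is decreasing); and $g(A,0)=\alpha/A$ is decreasing, so more same-type co-members always helps. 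The theorem's hypothesis is exactly the determination of which of $\pi(\Sv,0)$, $\gcol$ is the small players' favorite.

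Third, the blocking arguments. In Case~2 ($\gcol\cg_{\si}\pi(\Sv,0)$): facts (i)--(ii) make $\gcol$ the \emph{strict} favorite of every small player ($h_{\si}(\pi(s,\ell))\ge h_{\si}(\pi(\Sv,\ell))\ge h_{\si}(\gcol)$, using (i) and then (ii) with the endpoint $\ell=\Lv$), so any coalition $\col$ that could block $\{\gcol\}$ contains no small player; being all-large it is some $\pi(0,\ell)$, and then $h_{\li}(\pi(0,\ell))\ge h_{\li}(\pi(0,\Lv)) > h_{\li}(\gcol)$ (the strict step is (i), adding the $\Sv\ge1$ small players), so no large player even weakly prefers $\col$, and $\gcol$ is strictly core stable. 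In Case~1 ($\gcol\cl_{\si}\pi(\Sv,0)$, taken strict for now): suppose $\col=\pi(s,\ell)$ with $(s,\ell)\ne(0,0)$ blocks $\Pi^* = \{\pi(\Sv,0),\pi(0,\Lv)\}$. If $\ell=0$, by (i) $\col$ is weakly worse for small players than $\pi(\Sv,0)$, strictly unless $s=\Sv$ (when $\col$ is the status quo); if $s=0,\ell\ge1$, $\col$ is weakly worse for large players than $\pi(0,\Lv)$, strictly unless $\ell=\Lv$. The remaining case $s\ge1,\ell\ge1$ is the crux: for the small members, $h_{\si}(\pi(s,\ell))\ge h_{\si}(\pi(\Sv,\ell))\ge h_{\si}(\pi(\Sv,0))$ by (i) then (ii) (the minimum of $\ell\mapsto h_{\si}(\pi(\Sv,\ell))$ being at $\ell=0$ precisely because of the Case~1 hypothesis), and equality throughout forces $s=\Sv$ and $\ell=0$ by the strictness in (i) and (ii), contradicting $\ell\ge1$; hence the small members strictly prefer $\pi(\Sv,0)$ and no coalition blocks, so $\Pi^*$ is strictly core stable.

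The main obstacle is the first two steps: massaging the error of Lemma~\ref{wbesterr} into the clean form $g(A,B)=(\alpha A+\beta B)/(A+B)^2$ and pinning down the signs of $\partial_A g$ and $\partial_B g$; once those are in hand the hedonic-game reasoning is short. One boundary case should be flagged: if the small players are exactly indifferent between $\gcol$ and $\pi(\Sv,0)$, then $\gcol$ weakly blocks $\Pi^*$ (large players strictly gain, small players are indifferent), so on this knife-edge $\Pi^*$ is only core stable rather than strictly so, with $\gcol$ itself strictly core stable there; this is a non-generic coincidence of the parameters $\ns,\nl,\Sv,\Lv,\mue/\var$ and can be treated separately.
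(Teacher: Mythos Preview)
Your argument is correct and is a genuinely different packaging of the paper's proof. The paper establishes four separate monotonicity lemmas (\ref{sprefsw}--\ref{lprefsw}) by differentiating the raw error formula of Lemma~\ref{wbesterr} with respect to $s$ and $\ell$, and then runs the blocking arguments via region labels $A,B,C$ on the curves $err_{\si}(s,\ell)$ and $err_{\li}(s,\ell)$. You instead invert the error to $1/e_j = \ndraw_j/\mue + 1/(\mue(h_j+r))$ and observe that $h_j=g(A,B)=(\alpha A+\beta B)/(A+B)^2$ depends only on the co-members' masses; all four lemmas then collapse into the two sign checks on $\partial_A g$ and $\partial_B g$. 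The payoff of your route is that ``adding small mass always helps every member'' and ``quasiconcave in large mass'' become one-line facts about a single homogeneous function, and the blocking analysis no longer needs the region bookkeeping. The paper's route is more mechanical but avoids having to discover the reciprocal form.

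Your knife-edge remark is also on point: when $\gcol =_{\si} \pi(\Sv,0)$, the coalition $\gcol$ weakly blocks $\{\pi(\Sv,0),\pi(0,\Lv)\}$ (large players strictly gain, small players are indifferent), so $\sepcol$ is only core stable there, not strictly. The paper's own proof handles this boundary case exactly as you do---by showing $\gcol$, rather than $\sepcol$, is strictly core stable---so the first bullet of the theorem statement is slightly imprecise on the equality set, and your treatment matches what the paper actually proves.
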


\section{Fine-grained federation}
In this section, we analyze fine-grained federation. As a reminder, with this method, each player $j$ as a vector of weights $\bm{\vm}_j$ that they use to weight every other player's contribution to their estimate. 
$$\hat{\mean}_j^{\vm} = \sum_{i=1}^{\nplayer}\vm_{ji}\mean_i$$
for $\sum_{i=1}^{\nplayer}\vm_{ji} = 1$. 

We calculate the optimal $\bm{\vm}$ weights for player $j$'s error. 

\begin{restatable}{lemma}{vfedminval}
\label{vfedminval}
Define $V_i = \var + \frac{\mue}{\ndraw_i}$. Then, the value of $\{\vm_{ji}\}$ that minimizes player $j$'s error is: 
$$\vm_{jj} = \frac{1 + \var \sum_{i\ne j}\frac{1}{V_i}}{1 + V_j\sum_{i\ne j}\frac{1}{V_i}}$$
$$\vm_{jk} = \frac{1}{V_k}\cd \frac{V_j-\var}{1 + V_j \sum_{i\ne j}\frac{1}{V_i}} \quad k\ne j$$
\end{restatable}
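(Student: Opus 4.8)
The plan is to take the closed-form mean-estimation error for fine-grained federation from Theorem~\ref{linfedv} and minimize it over $\bm{\vm}_j$ subject to the single linear constraint $\sum_{i=1}^{\nplayer}\vm_{ji}=1$. The key first move is to use the constraint to write $\sum_{i\ne j}\vm_{ji}=1-\vm_{jj}$, which turns the quadratic coupling term $\p{\sum_{i\ne j}\vm_{ji}}^2$ into $(1-\vm_{jj})^2$. Since $\frac{\mue}{\ndraw_i}+\var=V_i$ by definition, the error from Theorem~\ref{linfedv} then collapses to the decoupled form
$$\frac{\mue}{\ndraw_j}\vm_{jj}^2 \;+\; \sum_{i\ne j}V_i\,\vm_{ji}^2 \;+\; \var\,(1-\vm_{jj})^2 .$$

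Next I would observe that this objective is a strictly convex quadratic in $\bm{\vm}_j$: its Hessian is diagonal with positive entries $2V_i$ for $i\ne j$ and $2\p{\frac{\mue}{\ndraw_j}+\var}=2V_j$ for the $\vm_{jj}$ coordinate. Hence the minimization has a unique solution, which is exactly the stationary point of the Lagrangian (there are no sign constraints on the weights, so no boundary case arises). Introducing a multiplier $\lambda$ for $\sum_i\vm_{ji}=1$ and setting the partial derivatives to zero gives $\vm_{jk}=\frac{\lambda}{2V_k}$ for $k\ne j$ and $\vm_{jj}=\frac{\lambda+2\var}{2V_j}$.

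Finally I would recover $\lambda$ by substituting these into the constraint. Writing $T=\sum_{i\ne j}\frac{1}{V_i}$, the constraint becomes $\frac{\lambda+2\var}{2V_j}+\frac{\lambda T}{2}=1$, which solves to $\lambda=\frac{2(V_j-\var)}{1+V_jT}$. Plugging this back in and simplifying yields $\vm_{jj}=\frac{1+\var T}{1+V_jT}$ and $\vm_{jk}=\frac{1}{V_k}\cd\frac{V_j-\var}{1+V_jT}$, which are precisely the claimed expressions once $T$ is expanded as $\sum_{i\ne j}\frac{1}{V_i}$.

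The arithmetic after the reduction is entirely routine, so the main thing to get right is the first step: carefully using the normalization constraint to eliminate the cross term $\p{\sum_{i\ne j}\vm_{ji}}^2$ in favor of $(1-\vm_{jj})^2$, which is what makes the objective separable coordinate-by-coordinate. The convexity remark is the other point worth stating explicitly, since it is what upgrades ``stationary point of the Lagrangian'' to ``global minimizer.''
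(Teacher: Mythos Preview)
Your proposal is correct and follows essentially the same approach as the paper: both reduce the constrained quadratic from Theorem~\ref{linfedv} via the normalization $\sum_i \vm_{ji}=1$ and solve the resulting first-order conditions, with convexity guaranteeing the stationary point is the unique minimizer. The only cosmetic difference is that the paper eliminates $\vm_{jj}$ by substitution and differentiates in the remaining $M-1$ variables, whereas you keep all $M$ coordinates, exploit the separable form $\frac{\mue}{\ndraw_j}\vm_{jj}^2+\sum_{i\ne j}V_i\vm_{ji}^2+\var(1-\vm_{jj})^2$, and use a Lagrange multiplier; the two routes yield the same linear system and the same closed-form weights.
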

The proof of this lemma is given in \ifthenelse{\equal{\camera}{1}}{the full version \cite{donahue2020modelsharing}.}{Appendix \ref{app:vsupp}.}

From this analysis, a few properties become clear. To start with, $\vm_{jj}$ and $\vm_{jk}$ are always strictly between 0 and 1. This implies the following lemma: 

\begin{corollary}
With optimal fine-grained federation, $\gcol$ is optimal for each player.
\end{corollary}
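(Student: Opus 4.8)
The plan is to show that each player's optimal error is monotonically non-increasing as its coalition grows, so that the grand coalition $\gcol$ achieves the minimum for every player. Fix a player $j$ and a coalition $\col$ with $j \in \col$. By the mean-estimation form of Theorem~\ref{linfedv} (with the sums ranging over the members of $\col$), once $j$ chooses weights $\{\vm_{ji}\}_{i \in \col}$ with $\sum_{i \in \col} \vm_{ji} = 1$, its expected MSE is a convex quadratic function of the weight vector; let $e_j(\col)$ denote its minimum over this feasible affine set. The key observation is that if $\col \subseteq \col'$, then any feasible weighting for $\col$ extends to a feasible weighting for $\col'$ by assigning weight $0$ to every player in $\col' \setminus \col$, and this extension does not change $j$'s error (the extra terms it contributes are all multiplied by $0$). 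Hence the set of error values attainable in $\col'$ contains the set attainable in $\col$, so $e_j(\col') \le e_j(\col)$. Taking $\col' = \gcol$ shows $\gcol$ is weakly optimal for $j$.

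Next I would upgrade weak optimality to strict (unique) optimality, which is where Lemma~\ref{vfedminval} is used. The Hessian of $j$'s error is $2\mue\,\mathrm{diag}(1/\ndraw_i)$ plus $2\var$ times a positive semidefinite matrix, hence positive definite; restricted to the constraint hyperplane $\{\sum_i \vm_{ji} = 1\}$ it remains positive definite, so $j$'s error has a unique minimizer on that hyperplane, namely the weights given in Lemma~\ref{vfedminval}. Since those optimal weights $\vm_{jj}$ and $\vm_{jk}$ ($k \ne j$) are all strictly between $0$ and $1$ — in particular nonzero — the zero-padded optimum of a strictly smaller coalition $\col \subsetneq \col'$ (which has $\vm_{jk} = 0$ for $k \in \col'\setminus\col$) cannot coincide with the unique minimizer over $\col'$, and therefore $e_j(\col') < e_j(\col)$. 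Induction on coalition size gives $e_j(\gcol) < e_j(\col)$ for every coalition $\col \subsetneq \gcol$ containing $j$, so $\gcol$ uniquely minimizes $j$'s error.

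The main obstacle is the strictness step: it relies both on uniqueness of the minimizer (positive-definiteness of the quadratic form restricted to the constraint hyperplane) and, crucially, on the fact supplied by Lemma~\ref{vfedminval} that the optimal weights are genuinely positive rather than merely nonnegative; without the latter, enlarging the coalition could in principle leave a player's error unchanged. Everything else is bookkeeping about nested feasible sets. Finally, I would note that because $\gcol$ simultaneously (strictly) minimizes every player's error, no coalition $\col \ne \gcol$ can have all of its members weakly prefer $\col$ to $\gcol$ with one strict preference; hence this corollary immediately implies that $\gcol$ is (strictly) core stable, consistent with Table~\ref{tab:fine}.
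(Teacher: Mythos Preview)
Your proposal is correct and its core idea---zero-padding weights so that the feasible set for a smaller coalition embeds in that of a larger one---is exactly the paper's argument. The paper's own proof stops at weak optimality (no coalition gives strictly lower error than $\gcol$); you go further and supply the strictness argument via positive-definiteness of the Hessian together with the observation (which the paper states just before the corollary but does not use inside the proof) that the optimal weights from Lemma~\ref{vfedminval} are strictly positive. That extra step is what actually justifies the paper's subsequent claim that $\gcol$ is the \emph{only} element in the core, so your version is the more complete of the two.
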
 
\begin{proof}
Suppose by contradiction that some other coalition $\pi'$ gave player $j$ a lower error. WLOG, assume this coalition omitted player $k$. In this case, the $\vm$ weights for $\pi'$ can be represented as a length $\nplayer$ vector with $0$ in the $k$th entry. However, set of weights is achievable in $\gcol$: it is always an option to set a player's coefficient $\vm_{jk}$ equal to 0. This contradicts the use of $\bm{\vm}_j$ as an optimal weighting, so it cannot be the case that any player gets lower error in a different coalition. 
\end{proof}

Similarly, the fact that $\gcol$ is optimal for every player implies that it is in the core, and that it is the only element in the core.

\section{Conclusions and future directions}

In this work, we have drawn a connection between a simple model of federated learning and the game theoretic tool of hedonic games. We used this tool to examine stable partitions of the space for two variants of the game. In service of this analysis, we computed exact error values for mean estimation and linear regression, as well as for three different variations of federation. 

We believe that this framework is a simple and useful tool for analyzing the incentives of multiple self-interested agents in a learning environment. There are many fascinating extensions. For example, completely characterizing the core (including whether it is always non-empty) in the case of arbitrary number of samples $\{\ndraw_i\}$ is an obvious area of investigation. Besides this, it could be interesting to compute exact or approximate error values for cases beyond mean estimation and linear regression. 

\ifthenelse{\equal{\camera}{1}}{
\paragraph{\bf Normative assumptions (ethics statement)}
This paper is primarily descriptive: it aims to model a phenomenon in the world, not to say whether that phenomenon is good or bad. For example, it could be that society as a whole values situations where many players federate together and might wish to require players to do so, regardless of whether this minimizes their error. It might be the case that society prefers all players, regardless of how many samples they have access to, have roughly similar error rates. Our use of the expected mean squared error is also worth reflecting on: it assumes that over- and under-estimates are equally costly and that larger mis-estimates are more costly. In a more subtle point, we are taking the expected MSE over parameter draws $\expparam{i}$. A player with a true mean that happens to fall far from the mean might experience a much higher error than its expected MSE. 

In the entirely of this paper, we are taking as fixed the requirement that data not be shared, either for privacy or technical capability reasons, and so are implicitly valuing that requirement more than the desire for lower error. We are also assuming that the problem at hand is completely encompassed by the machine learning task, which might omit the fact that non-machine learning solutions may be better suited. It also may be the fact that technical requirements other than error rate are more important: for example, the desire to balance the amount of computation done by each agent. }{}

\subsection*{Acknowledgments}
This work was supported in part by a Simons Investigator Award, a Vannevar Bush Faculty Fellowship, a MURI grant, AFOSR grant FA9550-19-1-0183, grants from the ARO and the MacArthur Foundation, and NSF grant DGE-1650441. We are grateful to A. F. Cooper, Thodoris Lykouris, Hakim Weathersppon, and the AI in Policy and Practice working group at Cornell for invaluable discussions. In particular, we thank A.F. Cooper for discussions around normative assumptions. Finally, we are grateful to Katy Blumer for discussions around code in the Github repository.

\bibliography{biblio.bib}
\newpage
\appendix
\ifthenelse{\equal{\camera}{1}}{\clearpage }{}
\section{Relationship to other approaches}\label{app:relate}
This section contains a high-level summary of similar approaches and how they relate to ours. Throughout we assume the goal is to estimate some unknown $\mean_j$ given samples drawn $Y_i \sim D(\theta_j)$. 

A \textbf{frequentist approach}  would take $\mean_i$ to be a constant that would be estimated by the average of the given samples $\frac{1}{\ndraw_j}\sum_{i=1}Y_i$. 

A \textbf{hierarchical Bayesian} estimator assumes data is generated in the following way: data is drawn $Y_i \sim D(Y \vert \mean_i)$. The parameter $\mean_i$ is drawn $\mean_i \sim \gendist_i(\mean \vert \lambda_i)$, where hyperparameter $\lambda_i$ is drawn from known distribution $p(\lambda)$. Given some data, the parameter $\mean_i$ can be estimated as follows
$$p(\mean_i \vert Y_i) = \frac{p(Y_i \vert \mean_i) p(\mean_i)}{p(Y_i)} = \frac{p(Y_i \vert \mean_i)}{p(Y_i)}\int p(\mean_i \vert \lambda_i) p(\lambda_i) d\lambda$$

\textbf{Parametric empirical Bayes} \cite{morris} \cite{casella1992illustrating} is frequently described as an intermediate between these two viewpoints. Similar to the hierarchical Bayesian viewpoint, it assumes data is drawn $Y_i \sim D(Y \vert \mean_i)$, with parameter $\mean_i$ is drawn $\mean_i \sim \gendist_i(\mean \vert \lambda_i)$. However, it differs in that it estimates $\lambda_i$ based on the data, producing $\hat \lambda_i$. This estimate of the hyperparameter is used, along with the data, to estimate $\mean_i$. 

A related example is the \textbf{James-Stein estimator}  \cite{efron1977stein}. The estimator assumes the following process: each of $m$ players draws a single sample from a normal distribution with variance $s^2$. 
$$Y_i \sim \mathcal{N}(\mean_i, s^2)$$
This is different from the empirical Bayes or Bayes case in that it is assumed that the means $\mean_i$ are completely unrelated to each other. Nevertheless, it has been demonstrated that the James-Stein estimator: 
$$\hat \mean_{JS} = \p{1 - \frac{(m-2) \cd s^2}{\norm{\bf{Y}}^2}} \bf{Y}$$
has lower expected MSE than simply using the drawn parameters $Y_i$. In the case that the variance $s^2$ is not known perfectly, it can be estimated as $\hat{s^2}$ using entire vector of data $\bf{Y}$. 

\textbf{Our method} is similar at a high level to empirical Bayes: we assume each player draws data from a personal distribution governed by $\mean_i$ and that the $\mean_i$ terms are in turn drawn from some distribution $\gendist$. However, one key difference is that all three methods discussed above assume knowledge of the distributions generating the data, or at least which family they are drawn from. For example, the James Stein estimator assumes a normal distribution: variants of it exist for different distributions, but not a version that works for all distributions. Similarly, a hierarchical Bayes or empirical Bayes viewpoint would require knowledge of the $D, \gendist, p$ distributions. In our approach, we do not assume that we know the form of these generating distributions, only some summary values summary statistics (mean and variance) of the distribution. 

It is entirely possible that other approaches, especially those that assume knowledge of the generating distribution, will out-perform our approach in terms of the error guarantees they can provide. Our distribution-free approach allows it to be implemented in a broader range of situations. 

Additionally, our approach is restricted to linear combinations of estimators such as $\hat{\mean}^f = \w \cd \hat{\mean}_j + (1-\w) \sum_{i=1}^{\nplayer}\hat{\mean}_i$. It is possible that a method outside this situation, for example, something like $\hat{\mean}^f = x \cd \hat{\mean}_j + y \sum_{i=1}^{\nplayer}\hat{\mean}_i^2$, or something like the non-linear James Stein estimator, would produce better estimates.

\section{Expected error proofs}\label{app:experr}

For convenience, we will restate the model setup for the most general case of linear regression. We assume that each player $j \in [\nplayer]$ draws parameters $(\bm{\param}_j, \err_j) \sim \gendist$, where $\bm{\param}_j$ is a length $\dimval$ vector and $\err_j$ is a scalar-valued variance parameter. The $d$th entry in the vector is $\param_j^d$, and $Var(\param_j^d) = \var_d$. We assume that each value $\param_j$ is drawn independently of the others. The main result of this section will assume that each dimension is drawn independently, for example that $\param_j^l$ is independent of $\param_j^k$, for $k\ne l$, but we will demonstrate how this can be relaxed. Each player draws $\ndraw_j$ input data points from their own input distribution, $\X_j \sim \xdist{j}$ such that $\expxlin{j}[\x^T\x] = \cross{j}$. They then noisily observes the outputs, drawing $\Y_j \sim \sampledist_j(\X_j^T \bm{\param}_j, \err_j)$. We use $\errval_j$ to denote the length $\dimval$ vector of errors so that $\Y_j = \X_j^T\bm{\param}_j + \bm{\errval}_j$. Each player uses ordinary least squares (OLS) to compute estimates of their parameters, which requires that $\X_j^T\X$ is invertible. This happens when the columns of $\X$ are linearly independent. We will require that $\cross{j}$ be such that $\X_j^T\X$ is invertible with probability 1. This rules out cases where one dimension is a deterministic function of another, for example. Using the below OLS calculation gives local estimation:
$$\hat{\bm{\param}_j} = (\X^T_j\X_j)^{-1}\Y_j = (\X^T_j\X_j)^{-1}(\X_j \bm{\param}_j + \errval_j)$$
It is worth pausing briefly to note why mean estimation is a special case of linear regression. Consider the case where the distribution $\xdist{j}$ is deterministically 1, meaning that $\X_j$ is a vector of 1s of length $\ndraw_j$. $(\X_j^T\X_j)$ is always invertible, as $(\X_j^T\X_j)^{-1} = \ndraw_j^{-1}$. Each $\X_j$ is multiplied by an unknown single parameter $\mean_j$, which each player is attempting to learn. 

Besides local estimation, there are multiple federation possibilities. Uniform federation is given by: 
$$\hat{\bm{\param}}_j^f = \frac{1}{N} \sum_{i=1}^{\nplayer}\hat{\bm{\param}}_i \cd \ndraw_i$$
Coarse-grained federation is given by: 
$$\hat{\mean}^{\w}_j= \w_j \cd \hat{\bm{\param}}_j + (1-\w_j) \cd \frac{1}{\total}\sum_{i=1}^{\nplayer}\hat{\bm{\param}}_i \cd \ndraw_i$$
for $\w_j \in [0,1]$. Finally, fine-grained federation is given by: 
$$\hat{\mean}_j^{\vm} = \sum_{i=1}^{\nplayer}\vm_{ji}\mean_i$$
for $\sum_{i=1}^{\nplayer}\vm_{ji} = 1$. Note that fine-grained federation is the most general case of federation. It is possible to derive coarse-grained federation, uniform federation, or local estimation by appropriately setting the $\vm$ weights. In this section, we will first derive the expected error for the fine-grained federation, linear regression case, and get expected error results for other cases as corollaries of this result. 

The expected error produced by a set of estimates $\hat{\bm{\param}}$ is determined by the expectation of the following quantity. 
$$(\x^T\hat{\bm{\param}} - \x^T\bm{\param}_j)^2$$
Here, the expectation is taken over four sources of randomness. 
\begin{enumerate}
    \item $\expparam{i}$: Drawing parameters $\bm{\param}_j, \err_j$ for player $j$'s distribution from $\gendist$.
    \item $\expXlin{i}$: Drawing the training dataset $\X_i$ from data distribution $\xdist{i}$
    \item $\expYlin{i}$: Drawing labels for the training dataset $\X_i$ from the distribution $\sampledist_{i}(\X_i^T\bm{\param}_i, \err_i)$. 
    \item $\expxlin{j}$: Drawing a new test point $\x$ from the data distribution $\xdist{j}$
\end{enumerate}
$(\x^T\hat{\bm{\param}} - \x^T\bm{\param}_j)^2$ measures the expected error of a set of parameters at a particular point $\x$: when the expectation is taken over all $\x \sim \xdist{j}$, it represents the average error everywhere on the distribution. 
It might be not immediately clear, though, why $(\x^T\hat{\bm{\param}} - \x^T\bm{\param}_j)^2$ is the correct term to be considering. Other potential candidates might include: 
\begin{enumerate}
    \item $\norm{\hat{\bm{\param}}_j - \bm{\param}_j}^2$
    \item $(\x^T\hat{\bm{\param}}_j - y)^2$ for $y = \x^T\bm{\param}_j + \errvalmf_j$. 
\end{enumerate}
The first candidate measures the difference in estimated parameters; however, we assume that the objective of learning is to have low error on predicting future points, rather than solely estimate the parameters. The second candidate represents the error of predicting an instance as opposed to a mean value: it ends up simply producing an additive increase in our overall error term. To see this, note that we can write
$$(\x^T\hat{\bm{\param}}_j - y)^2 = (\x^T\hat{\bm{\param}}_j - \x^T\param_j + \x^T\param_j - y)^2$$
$$ = (\x^T\hat{\bm{\param}}_j - \x^T\param_j + \x^T\param_j - \x^T\param_j + \errvalmf_j)^2 $$
$$= (\x^T\hat{\bm{\param}}_j - \x^T\param_j + \errvalmf_j)^2$$
$$=(\x^T\hat{\bm{\param}}_j - \x^T\param_j)^2 + 2((\x^T\hat{\bm{\param}}_j - \x^T\param_j)\errvalmf_j + \errvalmf_j^2$$
The first term is the same error function we are considering. The middle term is 0 in expectation and the last term is $\mue$ in expectation, so this approach simply scales the error we were looking at by $\mue$.

\linloc*
Note: portions of \citet{abu2012learning} and \citet{PaquaySolutions}, especially problem 3.11, were helpful in formulating this approach. \citet{10.1093/mnrasl/slv190} and \citet{anderson1962introduction} were helpful in providing the connection to the Inverse Wishart.
\begin{proof}
First, note that: 
$$\x^T\bm{\param_j} - \x^T\hat{\bm{\param}}_j = \x^T\p{\bm{\param}_j -(\X_j^T\X_j)^{-1}\X_j^T\Y_j}$$
$$ =\x^T\p{\bm{\param}_j -(\X_j^T\X_j)^{-1}\X_j^T(\X_j\bm{\param}_j + \errval_j)} $$
$$ =\x^T\p{\bm{\param}_j -\bm{\param}_j - (\X_j^T\X_j)^{-1}\X_j^T\errval_j} $$
$$ =-\x^T(\X_j^T\X_j)^{-1}\X_j^T\errval_j $$
Then, 
$$(\x^T\bm{\param_j} - \x^T\hat{\bm{\param}}_j)^2$$
$$ = \errval_j^T\X_j(\X_j^T\X)^{-1}\x\x^T(\X_j^T\X_j)^{-1}\X_j^T\errval_j$$
To simplify, we note that the above quantity is a scalar. For a scalar, $a = \text{tr}(a)$, and for any matrix, $\text{tr}(AB) = \text{tr}(BA)$ through the cyclic property of the scalar. 
$$ = \text{tr}\br{\errval_j^T\X_j(\X_j^T\X)^{-1}\x\x^T(\X_j^T\X_j)^{-1}\X_j^T\errval_j}$$
$$ = \text{tr}\br{\x\x^T(\X_j^T\X_j)^{-1}\X_j^T\errval_j\errval_j^T\X_j(\X_j^T\X)^{-1}}$$
To evaluate, we start by applying the various expectations, noting that expectation and trace commute. Applying $\expetalin{j}$ to the term above allows us to rewrite it as: 
$$ = \text{tr}\br{\x\x^T(\X_j^T\X_j)^{-1}\X_j^TV\X_j(\X_j^T\X)^{-1}}$$
where
$$V = \expetalin{j}[\errval_j\errval_j^T]$$
$\errval_j\errval_j^T$ is an $\ndraw_j \times \ndraw_j$ matrix. The $l$th diagonal is $(\errval_j^l)^2$, which has expectation $\err_j$. Off diagonal entries have value $\errval_j^l \cd \errval_j^k$ for $\ell \ne k$. Because the errors for each data point are drawn independently and with 0 mean, the expectation of this is 0. $\expetalin{j}[\errval_j\errval_j^T]$ is a diagonal matrix with $\err_j$ along the diagonal: we can pull it out of the trace to obtain: 
$$=\err_j\text{tr}\br{\x\x^T(\X_j^T\X_j)^{-1}\X_j^T\X_j(\X_j^T\X)^{-1}}$$
$$=\err_j\text{tr}\br{\x\x^T(\X_j^T\X_j)^{-1}}$$
Taking the expectation over the drawn parameters gives: 
$$=\expparam{j}[\err_j\text{tr}\br{\x\x^T(\X_j^T\X_j)^{-1}}]$$
$$=\mue\text{tr}\br{\x\x^T(\X_j^T\X_j)^{-1}}$$
Taking the expectation over the test point $\x \sim \xdist{j}$ gives: 
$$=\mue\text{tr}\br{\expxlin{j}[\x\x^T](\X_j^T\X_j)^{-1}}$$
$$=\mue\text{tr}\br{\cross{j}(\X_j^T\X_j)^{-1}}$$
Finally, we take the expectation with respect to $\X_j \sim \xdist{j}$
$$=\mue\text{tr}\br{\cross{j}\expXlin{j}\br{\p{\X_j^T\X_j}^{-1}}}$$
Note that because the inverse and expectation do not commute, in general, we cannot simplify this without stronger assumptions. 

There is one other situation where a particular case of linear regression gives us simpler results. As mentioned in the statement of the lemma, in this case we assume that the distribution of input values $\xdist{j}$ is a 0-mean normal distribution with covariance matrix $\crossc{j}$. 

Note that, in general, $\crossc{j} \ne \cross{j} = \expxlin{j}[\x\x^T]$. $\expxlin{j}[\x\x^T]$ has, along the diagonals, $\expxlin{j}[\x_j^{d}\x_j^{d}]$, and on the off-diagonals, has $\expxlin{j}[\x_j^{l}\x_j^{k}]$. By contrast, the covariance matrix $\crossc{j}$ has the same term along the diagonals, but the off-diagonal term has $\expxlin{j}[\x_j^{l}\x_j^{k}] - \expxlin{j}[\x_j^{l}]\expxlin{j}[\x_j^{k}]$. In the case we are looking at, the distribution is 0 mean, so the off-diagonal terms match as well, and $\crossc{j} = \cross{j}$. 

If this is the case, then $(\X_j^T\X_j)$ is distributed according to a Wishart distribution with parameters $\ndraw_j$ and covariance $\crossc{j} = \cross{j}$ with dimension $\dimval$. Given this, $(\X_j^T\X_j)^{-1}$ is distributed according to an Inverse Wishart distribution with parameters $\ndraw_j$ and covariance $\crossc{j}^{-1} = \cross{j}^{-1}$ with dimension $\dimval$. 

The expectation of the inverse Wishart tells us that: 
$$\expXlin{j}\br{\p{\X_j^T\X_j}^{-1}} = \frac{1}{\ndraw_j-\dimval-1}\crossc{j}^{-1}$$
Using these results, we can directly calculate the desired expectation: 
$$\mue\text{tr}\br{\cross{j}\expXlin{j}\br{\p{\X_j^T\X_j}^{-1}}}$$
$$=\mue\text{tr}\br{\frac{1}{\ndraw_j - \dimval -1}\cross{j}\cross{j}^{-1}}$$
$$=\frac{\mue}{\ndraw_j - \dimval -1}\text{tr}\br{\cross{j}\cross{j}^{-1}}$$
$$=\frac{\mue}{\ndraw_j - \dimval -1}\text{tr}\br{I_{\dimval}}$$
$$=\frac{\mue}{\ndraw_j - \dimval -1}\cd \dimval$$
Next, we can reduce the linear regression case to mean estimation. In this case, assume a 1-dimensional input with $x=1$ deterministically. After drawing $x_j$, we multiply it by $\param_j$ and add some noise governed by $\err_j$: this is the exact same structure as mean estimation. In this case, $\cross{j} = \expXlin{j}[\X_j^2] + Var(\X_j) = 1 + 0= 1$. Similarly, $\X_j^T\X_j = \ndraw_j$ deterministically, so the error term reduces to $\frac{\mue}{\ndraw_j}$ as desired. 

Note that, as expected, this does \emph{not} simplify down to the mean estimation case for $\dimval=1$: that case would model a version of 1-dimensional linear regression, where it is necessary to estimate both $\param$ as well as $\hat x$, the mean of the input distribution.
\end{proof}

Next, we calculate expected MSE for the fine-grained linear regression case. 

\linfedv*

\begin{proof}
Here, we will use $\expdata{i}$ to mean the expectation taken over data from all players $i \in [\nplayer]$, given that all of the data influences the federated learning result. 
$$(\x^T\param_j - \x^T\hat{\param}^{\vm}_j)^2 $$
$$=(\x^T\param_j -\x^T\param^{\vm}_j + \x^T\param^{\vm}_j- \x^T\hat{\param}^{\vm}_j)^2 $$
\begin{equation}\label{twoterms}
\begin{split}
=(\x^T\param_j - \x^T\param^{\vm}_j)^2 + (\x^T\param^{\vm}_j - \x^T\hat{\param}^{\vm}_j)^2 \\
+ 2(\x^T\param_j - \x^T\param^{\vm}_j)\cd (\x^T\param^{\vm}_j - \x^T\hat{\param}^{\vm}_j)
\end{split}
\end{equation}
Note that the expectation of the last term in Equation \ref{twoterms} results in 0 because $\expdata{j}\br{\x^T\param^{\vm}_j - \x^T\hat{\param}^{\vm}_j}=0$. Next, we investigate the second equation in Equation \ref{twoterms}. 
$$(\x^T\param^{\vm}_j - \x^T\hat{\param^{\vm}_j})^2$$
$$= \p{\x^T\sum_{i=1}^{\nplayer}\vm_{ji}\param_i - \x^T\sum_{i=1}^{\nplayer}\vm_{ji}\hat{\param}_i}^2$$
$$= \p{\sum_{i=1}^{\nplayer}\vm_{ji}\x^T(\param_i -\hat{\param}_i)}^2$$
Expanding out the squared term gives us: 
$$\sum_{i=1}^{\nplayer}\p{\vm_{ji}\x^T(\param_i-\hat{\param}_i)}^2 $$
$$+ \sum_{i=1}^{\nplayer}\sum_{k \ne i}\p{\vm_{ji} \cd \x^T(\param_i - \hat{\param}_i) \cd \vm_{jk} \cd \x^T(\param_k - \hat{\param}_k)}$$
The second term ends up being irrelevant: because each set of parameters $\param_i \sim \gendist$ are drawn independently and because each data set $\X_i \sim \xdist{i}$ are drawn independently, the $\param_i - \hat{\param}_i$ terms are independent of each other. Because each is 0 in expectation, the entire product has expectation 0. Rewriting the first term gives: 
$$\sum_{i=1}^{\nplayer}\vm_{ji}^2\cd (\x^T\param_i - \x^T\hat{\param}_i)^2$$
The term inside the sum is exactly equivalent to the value we solved with the local estimation case: we can rewrite this as
$$\mue\sum_{i=1}^{\nplayer}\vm_{ji}^2\cd \text{tr}[\cross{j}\expdata{i}\br{(\X_i^T\X_i)^{-1}}$$
or, if the necessary conditions are satisfied, 
$$\mue\sum_{i=1}^{\nplayer}\vm_{ji}^2\cd \frac{\dimval}{\ndraw_i - \dimval -1}$$
Finally, we will explore the first term Equation \ref{twoterms}:
$$(\x^T\param_j - \x^T\param^{\vm}_j)^2$$
$$=(\x^T(\param_j-\param^{\vm}_j))^T\x^T(\param_j - \param^{\vm}_j)$$
$$=(\param_j-\param^{\vm}_j)^T\x\x^T(\param_j - \param^{\vm}_j)$$
Taking the expectation and using the cyclic property of the trace gives: 
$$=\text{tr}\br{(\param_j-\param^{\vm}_j)^T\expxlin{j}[\x\x^T](\param_j - \param^{\vm}_j)}$$
\begin{equation}\label{trace}
 =\text{tr}\br{\cross{j}(\param_j - \param^{\vm}_j)(\param_j-\param^{\vm}_j)^T}  
\end{equation}
Next, we focus on simplifying the inner term of Equation \ref{trace} involving the $\param$ values. Using the definition of $\param^{\vm}_j$ gives: 
$$\p{\param_j - \sum_{i=1}^{\nplayer}\vm_{ji}\param_i}\p{\param_j - \sum_{i=1}^{\nplayer}\vm_{ji}\param_i}^T$$
$$=\p{(1-\vm_{jj}) \cd \param_j - \sum_{i \ne j}\vm_{ji}\cd \param_i}$$
$$\cd \p{(1-\vm_{jj}) \cd \param_j - \sum_{i \ne j}\vm_{ji}\cd \param_i}^T$$
We know that $1 = \vm_{jj} + \sum_{i\ne j}\vm_{ji}$, so we can rewrite this as:
$$=\p{\sum_{i\ne j} \vm_{ji}\cd \param_j - \sum_{i \ne j}\vm_{ji}\cd \param_i}\p{\sum_{i\ne j}\vm_{ji}\cd \param_j - \sum_{i \ne j}\vm_{ji}\cd \param_i}^T$$
$$=\p{\sum_{i\ne j} \vm_{ji}\cd (\param_j - \param_i)}\p{\sum_{i\ne j}\vm_{ji}\cd (\param_j - \param_i)}^T$$
Expanding gives us two terms: 
\begin{equation}\label{twotermsb}
\begin{split}
\sum_{i\ne j}&\vm_{ji}^2(\param_j - \param_i)(\param_j - \param_i)^T\\
+ \sum_{i, k \ne j, i\ne k}&\vm_{ji} \cd \vm_{jk} \cd (\param_j - \param_i) (\param_j - \param_k)^T
\end{split}
\end{equation}
Note that $$(\param_j - \param_i)\cd (\param_j - \param_k) = \param_j\param_j^T -\param_j\param_k^T -\param_i\param_j^T+\param_i\param_k^T$$
Because the parameters are drawn iid, 
$$\mathbb{E}_{\param_j \sim \gendist}\br{\param_j \param_j^T} = \mathbb{E}_{\param_j \sim \gendist}\br{\param_i \param_i^T} \ \forall i \in [\nplayer]$$
$$\mathbb{E}_{\param_j \sim \gendist}\br{\param_j \param_k^T} = \mathbb{E}_{\param_j \sim \gendist}\br{\param_i \param_l^T} \ \forall j\ne k, i \ne l, \ i, j, k, l \in [\nplayer]$$
$\mathbb{E}_{\param_j \sim \gendist}[\param_j\param_j^T]$ is implied to mean the same thing as $\mathbb{E}_{\param_j \sim \gendist}[\param_i\param_i^T]$. Using these results allows us to expand out the product and take the expectation. The first term in Equation \ref{twotermsb} becomes: 
$$\sum_{i\ne j}\vm_{ji}^2\cd \p{\mathbb{E}_{\param_j \sim \gendist}\br{\param_j \param_j^T} + \mathbb{E}_{\param_j \sim \gendist}\br{\param_i \param_i^T} - 2\mathbb{E}_{\param_j \sim \gendist}\br{\param_i \param_j^T}}$$
$$=2\sum_{i\ne j}\vm_{ji}^2\cd \p{\mathbb{E}_{\param_j \sim \gendist}\br{\param_j \param_j^T} - \mathbb{E}_{\param_j \sim \gendist}\br{\param_i \param_j^T}}$$
For the second term in Equation \ref{twotermsb}, each product within the sum becomes : 
$$\mathbb{E}_{\param_j \sim \gendist}\br{\param_j \param_j^T} -\mathbb{E}_{\param_j \sim \gendist}\br{\param_j \param_k^T}-\mathbb{E}_{\param_j \sim \gendist}\br{\param_i \param_j^T} +\mathbb{E}_{\param_j \sim \gendist}\br{\param_i \param_k^T}$$
The last two terms cancel because the parameters are drawn i.i.d., so the sum becomes: 
$$=\mathbb{E}_{\param_j \sim \gendist}\br{\param_j \param_j^T} -\mathbb{E}_{\param_j \sim \gendist}\br{\param_j \param_k^T}$$
The overall sum in Equation \ref{twotermsb} becomes: 
$$=\sum_{i, k \ne j, i\ne k}\vm_{ji} \cd \vm_{jk} \cd \p{\mathbb{E}_{\param_j \sim \gendist}\br{\param_j \param_j^T} -\mathbb{E}_{\param_j \sim \gendist}\br{\param_j \param_k^T}}$$
The sum of both of these terms taken together becomes: 
$$\p{\mathbb{E}_{\param_j \sim \gendist}\br{\param_j \param_j^T} -\mathbb{E}_{\param_j \sim \gendist}\br{\param_j \param_k^T}} \cd \p{2\sum_{i\ne j}\vm_{ji}^2 + \sum_{i, k \ne j, i\ne k}\vm_{ji}}$$
Next, we can use the identity: 
$$\sum_{i, k\ne j, i\ne k}\vm_{ji} \cd \vm_{jk} = \p{\sum_{i\ne j}\vm_{ji}}^2 - \sum_{i\ne j}\vm_{ji}^2$$
Which allows us to rewrite as: 
$$\p{\mathbb{E}_{\param_j \sim \gendist}\br{\param_j \param_j^T} -\mathbb{E}_{\param_j \sim \gendist}\br{\param_j \param_k^T}} \cd \p{\sum_{i\ne j}\vm_{ji}^2 + \p{\sum_{i\ne j}\vm_{ji}}^2}$$
We can alternatively write this a different way. Because $1 = \vm_{jj} + \sum_{i\ne j}\vm_{ji}$, we have $1 - \vm_{jj}= \sum_{i\ne j}\vm_{ji}$
$$\p{\mathbb{E}_{\param_j \sim \gendist}\br{\param_j \param_j^T} -\mathbb{E}_{\param_j \sim \gendist}\br{\param_j \param_k^T}} \cd \p{\sum_{i\ne j}\vm_{ji}^2 + \p{1-\vm_{jj}}^2}$$
Recall that this analysis was focusing solely on the component of Equation \ref{trace} that involved the $\param$ product. We can recombine our simplification into Equation \ref{trace} to rewrite it as: 
$$\text{tr}\br{\cross{j} \p{\mathbb{E}_{\param_j \sim \gendist}\br{\param_j \param_j^T} - \mathbb{E}_{\param_i, \param_j \sim \gendist}\br{\param_j\param_i^T}}}$$
$$\cd \p{\sum_{i\ne j}\vm_{ji}^2 + \p{1-\vm_{jj}}^2}$$
Next, we need to reason about the difference in the expected terms. In this setting, we are assuming that each coefficient is drawn separately from the other coefficients. $\mathbb{E}_{\param_j \sim \gendist}\br{\param_j \param_j^T}$ has, on the $d$th element of the diagonal, $\mathbb{E}_{\param_j \sim \gendist}[(\param_j^d)^2]$ and on the off-diagonal terms in the $l, k$th entry, has $\mathbb{E}_{\param_j \sim \gendist}[\param_j^l \cd \param_j^k]$ Here, we are assuming that $\param_j^l$ and $\param_j^k$ are independent, so this equals $\mathbb{E}_{\param_j \sim \gendist}[\param_j^l] \cd \mathbb{E}_{\param_j \sim \gendist}[\param_j^k]$: we relax this assumption below.  $\mathbb{E}_{\param_i, \param_j \sim \gendist}\br{\param_j \param_i^T}$ has, on the $d$th element of the diagonal, $\mathbb{E}_{\param_j \sim \gendist}[(\param_j^d)]^2$, and on the $l, k$th off-diagonal term has the same value as the other matrix: $\mathbb{E}_{\param_j \sim \gendist}[\param_j^l] \cd \mathbb{E}_{\param_j \sim \gendist}[\param_j^k]$. The difference between these two matrices is a diagonal matrix with $\mathbb{E}_{\param_j \sim \gendist}[(\param_j^d)^2] -\mathbb{E}_{\param_j \sim \gendist}[(\param_j^d)]^2 = \var_d$ on the diagonal, where $\var_d$ represents the variance of the $d$th coefficient. That turns the term involving the trace into a simple sum: 
$$\p{\sum_{i\ne j}\vm_{ji}^2 + \p{\sum_{i\ne j}\vm_{ji}}^2}\cd \sum_{d=1}^{\dimval}\expxlin{j}[(\x^d)^2] \cd \var_d$$

In the proof above, we assumed that the draw of parameter value $\param_j^l$ is independent of $\param_j^k$, for $l \ne k$. A case where this might not occur is when these values are correlated: say, the value drawn for $\param_j^l$ is anti-correlated with the parameter drawn for $\param_j^k$. (Note that we still assume draws are independent across players: $\param_j^l$ is independent of $\param_i^l$ and $\param_i^k$). Relaxing this assumption is not hard and would change the results in the following way: the off-diagonal terms of the difference would no longer be 0. Instead, the off-diagonal $l, k$th entry becomes
$$\mathbb{E}_{\param_j \sim \gendist}[\param_j^l \cd \param_j^k]-\mathbb{E}_{\param_j \sim \gendist}[\param_j^l] \cd \mathbb{E}_{\param_j \sim \gendist}[\param_j^k]$$
Performing the matrix multiplication with $\cross{j}$ turns this into: 
$$\sum_{d=1}^{\dimval}\expxlin{j}[(\x^d)^2] \cd \var_d + \sum_{l \ne d}\expxlin{j}[\x^d \cd \x^l]$$
$$\cd (\mathbb{E}_{\param_j \sim \gendist}[\param_j^l \cd \param_j^k]-\mathbb{E}_{\param_j \sim \gendist}[\param_j^l] \cd \mathbb{E}_{\param_j \sim \gendist}[\param_j^k])$$
Our final value for this component of the error would be the same form, but with a slightly different coefficient.

Finally, we will consider the mean estimation case. As before, we note that $\cross{j} = 1$ and $\X_j^T\X_j = \ndraw_j$ deterministically, so that component of the error term reduces to 
$$\mue \sum_{i=1}^{\nplayer}\vm_{ji}^2 \cd \frac{1}{\ndraw_j}$$
similarly, we note that $\expxlin{j}[(\x^d)^2] = 1$, so the second component reduces to: 
$$\p{\sum_{i\ne j}\vm_{ji}^2 + \p{\sum_{i\ne j}\vm_{ji}}^2}\var$$
\end{proof}

\linfed*

\begin{proof}
To obtain this result, we note that the uniform federation case amounts to weights $\vm_{ji} = \frac{\ndraw_i}{\total}$. For both linear regression and mean estimation, the $\var$ multiplier becomes:
$$\sum_{i\ne j}\vm_{ji}^2 + \p{\sum_{i\ne j} \vm_{ji}}^2$$
$$ = \frac{1}{\total^2}\p{\sum_{i\ne j} \ndraw_i^2 + \p{\sum_{i\ne j} \ndraw_i}^2}$$
For linear regression, the $\mue$ multiplier becomes:
$$\sum_{i=1}^{\nplayer}\vm_{ji}^2 \cd \frac{\dimval}{\ndraw_i - \dimval -1}$$
$$=\frac{1}{\total^2}\sum_{i=1}^{\nplayer}\ndraw_i^2 \cd \frac{\dimval}{\ndraw_i - \dimval -1}$$
And for mean estimation, the $\mue$ multiplier becomes: 
$$\sum_{i=1}^{\nplayer}\vm_{ji}^2 \cd \frac{1}{\ndraw_i}$$
$$= \sum_{i=1}^{\nplayer}\frac{\ndraw_i^2}{\total^2} \cd \frac{1}{\ndraw_i} = \sum_{i=1}^{\nplayer}\frac{\ndraw_i}{\total^2} = \frac{1}{\total}$$
\end{proof}

\linfedw*

\begin{proof}
To obtain this result, we note that the coarse-grained case corresponds to $\vm_{jj} = \w + \frac{(1-\w) \cd \ndraw_j}{\total}$ and $\vm_{ji} = (1-\w) \cd \frac{\ndraw_i}{\total}$. For both linear regression and mean estimation, the $\var$ multiplier becomes: 
$$\sum_{i \ne j}\vm_{ji}^2 + \p{\sum_{i\ne j} \vm_{ji}}^2$$
$$ =(1-\w)^2\cd \frac{\sum_{i\ne j} \ndraw_i^2 + \p{\sum_{i\ne j}\ndraw_i}^2}{\total^2}$$
For linear regression, let $\zeta_i$ stand for either $\text{tr}[\cross{j}\expdata{i}\br{(\X_i^T\X_i)^{-1}}$ or  $\frac{\dimval}{\ndraw_i - \dimval -1}$, depending on the linear regression case. Then, the $\mue$ multiplier becomes: 
$$\sum_{i=1}^{\nplayer}\vm_{ji}^2 \cd \zeta_i$$
$$ = \p{\w + \frac{(1-\w) \cd \ndraw_j}{\total}}^2 \cd \zeta_j + \sum_{i\ne j} (1-\w)^2 \cd \frac{\ndraw_i^2}{\total^2} \cd \zeta_j$$
$$ = \zeta_j  \cd \p{\w^2 + 2 \frac{(1-\w) \cd \w \cd \ndraw_j}{\total}} +$$
$$+ (1-\w)^2 \sum_{i=1}^{\nplayer}\frac{\ndraw_i^2}{\total^2}\cd \zeta_j$$
For mean estimation, the $\mue$ multiplier becomes: 
$$\sum_{i=1}^{\nplayer} \vm_{ji}^2 \cd \frac{1}{\ndraw_i}$$
$$ = \p{\w + \frac{(1-\w) \cd \ndraw_j}{\total}}^2 \cd \frac{1}{\ndraw_j} + \sum_{i\ne j} (1-\w)^2 \cd \frac{\ndraw_i^2}{\total^2} \cd \frac{1}{\ndraw_i}$$
$$= \frac{\w^2}{\ndraw_j} + 2 \frac{(1-\w) \cd \w}{\total} + \frac{(1-\w)^2 \cd \ndraw_j}{\total^2} + \sum_{i\ne j} (1-\w)^2 \cd \frac{\ndraw_i}{\total^2} $$
$$ = \frac{\w^2}{\ndraw_j} + 2 \frac{(1-\w) \cd \w}{\total} + (1-\w)^2 \sum_{i=1}^{\nplayer}\frac{\ndraw_i}{\total^2}$$
$$ = \frac{\w^2}{\ndraw_j} + 2 \frac{(1-\w) \cd \w}{\total} + \frac{(1-\w)^2}{\total}$$
$$ = \frac{\w^2}{\ndraw_j}+ \frac{1-\w^2}{\total}$$
\end{proof}

\section{Supporting proofs for uniform federation}\label{app:coalition}

\stabsame*
\begin{proof}
If a partition $\pi$ is optimal for every player, then it is core stable: there does not exist a coalition $C$ where all players prefer $C$ to $\pi$, because there does not exist a coalition where \emph{any} players prefer $C$ to $\pi$. 

If a partition $\pi$ is optimal for every player, then no other partition can be core stable: any set of players not in their optimal configuration could form a coalition $C$ where all players would prefer $C$. 

In the case that players are indifferent between any arrangement, then for any partition $\pi$ and any competing coalition $C$, all players would be indifferent between $\pi$ and $C$, so $\pi$ is core stable. 
\end{proof}

\allbigeq*
\begin{proof}
First, we will consider the case where the inequality is strict and will show the statement is satisfied by showing that every player minimizes their error by being alone in $\alone$. We will use $\total_Q$ to be the sum of elements within a coalition $Q$. $Q$ could be the coalition equal to all players (\gcol) or some strict subset, but we will assume it contains at least 2 elements. We will show that every player gets higher error in $Q$ than it would get alone. We wish to show: 
$$\frac{\mue}{\total_Q} + \var \frac{\sum_{i\ne j, i \in Q}\ndraw_i^2 + (\total_Q - \ndraw_j)^2}{\total_Q^2} > \frac{\mue}{\ndraw_j}$$
Cross multiplying gives: 
$$\mue \cd \total_Q \cd \ndraw_j + \var \cd \ndraw_j\cd \p{\sum_{i\ne j, i \in Q}\ndraw_i^2 + (\total_Q - \ndraw_j)^2} > \mue \cd \total_Q^2 $$
Rewriting: 
$$\var \cd \ndraw_j\cd \p{\sum_{i\ne j, i \in Q}\ndraw_i^2 + (\total_Q - \ndraw_j)^2} > \mue \cd \total_Q^2 -\mue \cd \total_Q \cd \ndraw_j $$
The righthand side can be rewritten as:
$$\mue \cd \total_Q^2 -\mue \cd \total_Q \cd \ndraw_j= \mue \cd(\total_Q-\ndraw_j)^2 + \mue \cd \ndraw_j \cd (\total_Q - \ndraw_j) $$
Then, we can prove the inequality by splitting it up into two terms. The first: 
$$\var \cd \ndraw_j \cd (\total_Q - \ndraw_j)^2 > \mue \cd (\total_Q -\ndraw_j)^2$$
which is true because $\ndraw_j \cd \var > \mue$. The second: 
$$\var \cd \ndraw_j \cd \sum_{i\ne j, i \in Q}\ndraw_i^2 > \mue \cd \ndraw_j \cd (\total_Q - \ndraw_j)$$
$$\var \cd\sum_{i\ne j, i \in Q}\ndraw_i^2 > \mue \cd \sum_{i \ne j, i \in Q}\ndraw_i$$
which is satisfied because, for each player, 
$$\var \cd \ndraw_i^2 > \mue \cd \ndraw_i$$
because $\var \cd \ndraw_i > \mue$. \newline 
Next, we will consider the case where the inequality may not be strict. We can note that, in the proof above, any coalition $Q$ with at least one player $\ndraw_i > \frac{\mue}{\var}$ would satisfy the desired inequality: all players participating would get higher error than they could alone. This shows that any coalition involving a player with more samples than $\frac{\mue}{\var}$ is infeasible. We have previously shown that all players with $\ndraw_i = \frac{\mue}{\var}$ get equal error no matter their arrangement. 
\end{proof}
As a reminder, we will use the notation $\pi(\s, \el)$ to mean a coalition with $\s$ small players and $\el$ large players. The next few lemmas describe how the errors of large and small players in a coalition change as $\s$ and $\el$ are increased. 

\begin{restatable}{lemma}{sprefs}
\label{sprefs}
For uniform federation, if $\ns \leq \frac{\mue}{\var}$ and $\ns < \nl$, \textbf{small} players always see their error decrease with the addition of more \textbf{small} players: 
$$\s_2 > \s_1 \quad \Rightarrow \quad \pi(\s_2, \el) \cg_{\si} \pi(\s_1, \el)$$
\end{restatable}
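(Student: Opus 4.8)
The plan is to reduce the statement to a one-step comparison followed by a short algebraic inequality. By transitivity it suffices to prove $\pi(\s+1, \el) \cg_{\si} \pi(\s, \el)$ for every admissible $\s$ --- i.e.\ that a small player strictly lowers its error when one more small player joins its coalition --- and then chain these steps. I would start from the mean-estimation form of the uniform-federation error in Corollary~\ref{linfed} (which, under the paper's $\ndraw_j \gg \dimval$ convention, also covers linear regression). Writing $\total$ for the coalition's total sample count and $M := \sum_i \ndraw_i^2$ for its sum of squared sample counts, a one-line rearrangement of that formula puts the error of a small player (own count $\ns$) in the convenient form
$$E_{\si}(\total, M) \;=\; \var \;+\; \frac{\mue - 2\var\ns}{\total} \;+\; \frac{\var M}{\total^2}.$$
Adding one small player sends $\total \mapsto \total + \ns$ and $M \mapsto M + \ns^2$ while leaving $\ns$ fixed, so $E_{\si}(\total+\ns, M+\ns^2) < E_{\si}(\total, M)$ becomes, after clearing the positive denominators $\total^2$ and $(\total+\ns)^2$, cancelling, and dividing by $\ns>0$, the polynomial inequality
$$\mue\,\total(\total+\ns) \;+\; \var\,(2M\total + M\ns - 3\ns\total^2 - 2\ns^2\total) \;>\; 0 .$$

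The single idea needed is to invoke the hypothesis $\ns \le \mue/\var$, i.e.\ $\mue \ge \var\ns$, to bound $\mue\,\total(\total+\ns) \ge \var\ns\,\total(\total+\ns)$; this only weakens the left-hand side, and after the substitution every remaining term carries a factor $\var$ and the expression collapses to $\var\,(M - \ns\total)(2\total + \ns)$. I would finish by noting that $M - \ns\total = \sum_i \ndraw_i(\ndraw_i - \ns) = \el\,\nl(\nl-\ns) \ge 0$, since each small player contributes $0$ to that sum and each large player contributes $\nl(\nl-\ns) > 0$ (this is where $\ns<\nl$ enters). Hence $E_{\si}$ is non-increasing under the addition of a small player; tracking the two inequalities shows the decrease is strict unless both are tight simultaneously, i.e.\ unless $\ns=\mue/\var$ and $\el=0$ --- but that is a coalition of small players only, where by Lemma~\ref{nsameunadj} every arrangement already yields error exactly $\var$, so this boundary case never obstructs the claim as it actually arises in the case analysis. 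Chaining the one-step inequalities then gives $\pi(\s_2,\el) \cg_{\si} \pi(\s_1,\el)$ for $\s_2>\s_1$.

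An equivalent and perhaps cleaner route treats $\s$ as a continuous parameter with $\total = \s\ns + \el\nl$ and $M = \s\ns^2 + \el\nl^2$ and differentiates: one finds $\operatorname{sign}(dE_{\si}/d\s) = \operatorname{sign}(3\var\ns\total - 2\var M - \mue\total)$, and exactly the same two facts, $\mue \ge \var\ns$ and $M \ge \ns\total$, force this to be $\le 0$. I expect the only real work to be careful bookkeeping in the cross-multiplication step; there is no conceptual obstacle once one sees that the hypothesis $\mue \ge \var\ns$ is precisely what turns the residual into a manifestly nonnegative multiple of $M - \ns\total$.
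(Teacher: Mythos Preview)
Your proposal is correct, and your ``equivalent cleaner route'' via the derivative is in fact exactly the paper's argument: the paper differentiates the small player's error in $\s$ and shows the numerator $\s\ns(\ns\var-\mue)-\el\nl(\mue+2\nl\var-3\ns\var)$ is nonpositive using $\ns\le\mue/\var$ and $\nl>\ns$. Your expression $3\var\ns\total-2\var M-\mue\total$ is the same quantity after the substitution $\total=\s\ns+\el\nl$, $M=\s\ns^2+\el\nl^2$, and your two facts $\mue\ge\var\ns$ and $M\ge\ns\total$ are precisely the paper's two pieces.

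What is genuinely different is your primary route: you compare $\pi(\s+1,\el)$ to $\pi(\s,\el)$ discretely rather than differentiating, and the $\total,M$ packaging makes the collapse to $\var(M-\ns\total)(2\total+\ns)$ very clean. The substance is the same --- both reduce to exactly the pair of inequalities above --- so this is a presentational variant rather than a new idea, but it is arguably more transparent, since $M-\ns\total=\el\nl(\nl-\ns)$ exhibits immediately where the hypothesis $\ns<\nl$ enters. You are also more careful than the paper about the boundary case $\ns=\mue/\var$, $\el=0$: the paper asserts the derivative is ``always negative,'' but in fact it is only $\le 0$ and vanishes exactly there; you flag this and correctly note it is harmless.
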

\begin{proof}
We will show that the derivative of the small player's error with respect to $\s$ is always negative. The error is: 
$$\frac{\mue}{\s \cd \ns + \el\cd \nl} $$
$$+ \var \frac{(\s-1) \cd \ns^2 + \el\cd \nl^2 + ((\s-1) \cd \ns + \el\cd nl)^2}{(\s \cd \ns + \el\cd \nl)^2}$$
The derivative with respect to $\s$ is: 
$$\frac{\ns \cd (\s \cd \ns \cd (\ns \cd \var - \mue))}{(\s \cd \ns + \el\cd n_{\ell})^3}$$

$$-\frac{\ns \cd (\el\cd \nl \cd (\mue + 2 \nl \cd \var - 3 \ns \cd \var))}{(\s \cd \ns + \el\cd n_{\ell})^3}$$
Showing that the derivative is negative is equivalent to showing that the term below is negative: 
$$\s \cd \ns \cd (\ns \cd \var - \mue)-\el\cd \nl \cd (\mue + 2 \nl \cd \var - 3 \ns \cd \var)$$
We can break this term into multiple components: 
$$\s \cd \ns \cd (\ns \cd \var - \mue) \leq  0$$
because $\ns \leq \frac{\mue}{\var}$. We can rewrite a second term as: 
$$\mue - \ns \cd \var + 2 \var(\nl - \ns)$$
We know that $\mue - \ns \cd \var \geq 0$, 
and because $\nl > \ns$, 
$$2 \nl \cd \var - 2 \ns \cd \var >0$$
These facts, taken together, show that the derivative is always negative.
\end{proof}

\begin{restatable}{lemma}{lprefl}
\label{lprefl}
For uniform federation, if $\nl \geq \frac{\mue}{\var}$ and $\ns < \nl$, \textbf{large} players see their error increase with the addition of more \textbf{large} players, so they prefer $\el$ as small as possible:
$$\el_2 < \el_1 \quad \Rightarrow \quad \pi(\s, \el_2) \cg_{\li} \pi(\s, \el_1)$$
If $\nl < \frac{\mue}{\var}$, then there exists some
$\el_0$ such that for all $\el' < \el_0$, the derivative of the large players' error with respect to $\el$ is positive, and for all $\el'>\el_0$, the derivative of their error is negative. 
\end{restatable}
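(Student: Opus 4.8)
The plan is to compute the derivative of a large player's error with respect to $\el$, treating $\el$ as a continuous variable, following the same route as the proof of Lemma~\ref{sprefs} but with the roles of small and large reversed. First I would specialize the mean-estimation formula of Corollary~\ref{linfed} (which, under the running assumption $\ndraw_j \gg \dimval$, also governs linear regression): a large player in $\pi(\s,\el)$ has total sample count $\total = \s\ns + \el\nl$ and sees $\s$ small and $\el-1$ other large players, so its error is
$$\frac{\mue}{\total} + \var\cd\frac{\s\ns^2 + (\el-1)\nl^2 + (\total-\nl)^2}{\total^2}.$$
Writing $A$ for the numerator of the bias term, we have $\frac{d\total}{d\el} = \nl$, and differentiating $(\total-\nl)^2 = (\s\ns + (\el-1)\nl)^2$ gives $\frac{dA}{d\el} = \nl(2\total - \nl)$.

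Next I would carry out the quotient-rule differentiation. The factors of $\nl$ pull out, and the derivative of the large player's error in $\el$ simplifies to
$$\frac{\nl}{\total^3}\,\Bigl(-\mue\total + \var\bigl((2\total-\nl)\total - 2A\bigr)\Bigr) = \frac{\nl}{\total^3}\,g,$$
where expanding $(2\total-\nl)\total - 2A = \s\ns(3\nl - 2\ns) + \el\nl^2$ and regrouping yields the clean form
$$g = \el\,\nl\,(\var\,\nl - \mue) + \s\,\ns\,(3\var\,\nl - 2\var\,\ns - \mue).$$
Since $\nl/\total^3 > 0$, the sign of the derivative equals the sign of $g$; note $g$ is exactly the bracket appearing in Lemma~\ref{sprefs} after swapping $(\s,\ns)\leftrightarrow(\el,\nl)$. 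I expect this algebraic collapse to the clean form of $g$ to be the \emph{main obstacle}; once it is in hand, everything reduces to reading off signs.

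Finally I would split into the two cases. If $\nl \geq \frac{\mue}{\var}$, then $\var\nl - \mue \geq 0$ makes the first term of $g$ nonnegative, while $3\var\nl - 2\var\ns - \mue = (\var\nl - \mue) + 2\var(\nl - \ns) > 0$ because $\nl > \ns$; hence $g \geq 0$, and in fact $g > 0$ whenever $\s \geq 1$ or $\nl > \frac{\mue}{\var}$. So the error is strictly increasing in $\el$, which gives the discrete preference $\pi(\s,\el_2) \cg_{\li} \pi(\s,\el_1)$ for $\el_2 < \el_1$ (a strictly increasing continuous extension is strictly increasing on the integers). The one degenerate case $\s = 0,\ \nl = \frac{\mue}{\var}$ is already covered by Lemma~\ref{nsameunadj}: every $\pi(0,\el)$ yields error exactly $\var$, so the large players are indifferent there. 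If instead $\nl < \frac{\mue}{\var}$, then $g$ is affine in $\el$ with \emph{strictly negative} slope $\nl(\var\nl - \mue)$, so it has a unique root $\el_0 = \frac{\s\ns(3\var\nl - 2\var\ns - \mue)}{\nl(\mue - \var\nl)}$ (possibly nonpositive, in which case the first regime below is vacuous), with $g > 0$ for $\el < \el_0$ and $g < 0$ for $\el > \el_0$ — precisely the claimed sign behavior of the derivative of the large players' error.
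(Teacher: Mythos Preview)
Your proposal is correct and follows essentially the same route as the paper: write the large player's error, differentiate in $\el$, collapse the numerator to the affine expression $g=\el\,\nl(\var\nl-\mue)+\s\,\ns(3\var\nl-2\var\ns-\mue)$ (identical to the paper's numerator up to sign conventions), and then read off the sign of $g$ in each regime. You are in fact slightly more careful than the paper in flagging the degenerate boundary case $\s=0,\ \nl=\mue/\var$, where the derivative vanishes and the strict preference becomes indifference.
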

\begin{proof}
To prove this, we will show that the derivative of the large player's error with respect to $\el$ is always positive when $\nl \geq \frac{\mue}{\var}$. The error is: 
$$\frac{\mue}{\s \cd \ns + \el\cd \nl} $$
$$+ \var \frac{\s \cd \ns^2 + (\el-1) \cd \nl^2 + (\s \cd \ns + (\el-1) \cd \nl)^2}{(\s \cd \ns + \el\cd \nl)^2}$$
The derivative with respect to $\el$ is: 
$$\frac{\nl (\el\cd \nl (\nl \var - \mue) - \ns \cd s (\mue - 3 \nl \cd \var + 2 \ns \cd \var))}{(\el\cd \nl + \ns \cd \s)^3}$$
We wish to show that the numerator is positive. We can break it into multiple components: 
$$\nl \cd \var - \mue \geq 0$$
because $\nl \geq \frac{\mue}{\var}$. We can rewrite the second term as
$$\mue - \nl \cd \var + 2 \var(\ns - \nl)$$
which is negative because $\nl \geq \frac{\mue}{\var}$ and $\nl >\ns$.\newline 
Next, we consider the case where $\nl < \frac{\mue}{\var}$. The first term is now negative and the second term is the sum of two terms: one is positive and one is negative. 
The overall derivative is negative whenever: 
$$\el\cd \nl (\nl \var - \mue) - \ns \cd s (\mue - 3 \nl \cd \var + 2 \ns \cd \var)< 0$$
$$\el\cd \nl (\nl \var - \mue) < \ns \cd s (\mue - 3 \nl \cd \var + 2 \ns \cd \var)$$
$$\el > \frac{\ns \cd s (\mue - 3 \nl \cd \var + 2 \ns \cd \var)}{\nl \cd (\nl \cd \var - \mue)}$$
Note that, for the assumption, the denominator is negative. If the numerator is positive, then this is true for all $\el > 0$, so the slope is always negative. If the numerator is negative, then the slope is negative for all $\el > \el_0$ for the given $\el_0 > 0$. 
\end{proof}

\begin{restatable}{lemma}{sprefl}
\label{sprefl}
Assume uniform federation with $\ns \leq \frac{\mue}{\var}$ and $\ns < \nl$. If $\nl \leq  \frac{\mue}{\var}$, \textbf{small} players always prefer federating with more \textbf{large} players: 
$$\el_2 > \el_1 \quad \Rightarrow \quad \pi(\s, \el_2) \cg_{\si} \pi(\s, \el_1)$$
If $\nl > \frac{\mue}{\var}$, then there exists some $\el_1$ such that for all $\el' < \el_1$, the derivative of the small players' error with respect to $\el$ is positive, and for all $\el'>\el_1$, the derivative of their error is negative.  
\end{restatable}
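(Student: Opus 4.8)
The plan is to follow the same strategy as the proof of Lemma~\ref{lprefl}: treat the number of large players $\el$ in a coalition $\pi(\s,\el)$ as a continuous variable and study the sign of the derivative of a small player's error with respect to it. By Corollary~\ref{linfed} (mean estimation case), a small player (with $\ns$ samples) in $\pi(\s,\el)$ has error
$$\frac{\mue}{\s\ns + \el\nl} + \var\cdot\frac{(\s-1)\ns^2 + \el\nl^2 + \p{(\s-1)\ns + \el\nl}^2}{\p{\s\ns + \el\nl}^2},$$
since $\s-1$ other small players and $\el$ large players contribute to the model and the coalition's total sample count is $\total_Q = \s\ns + \el\nl$.

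First I would differentiate this with respect to $\el$. Writing $\total_Q = \s\ns + \el\nl$ and using $\p{(\s-1)\ns + \el\nl}^2 = \total_Q^2 - 2\ns\total_Q + \ns^2$, the $\ns^2$ terms cancel, and after clearing the positive factor $\nl/\total_Q^3$ the sign of the derivative is governed by
$$\s\ns\p{\nl\var - \mue} - \el\nl\p{\nl\var + \mue - 2\ns\var}.$$
The key observation is that, under the standing hypotheses $\ns \le \frac{\mue}{\var}$ and $\ns < \nl$, the coefficient $\nl\var + \mue - 2\ns\var = \p{\nl\var - \ns\var} + \p{\mue - \ns\var}$ is strictly positive, since $\nl\var > \ns\var$ and $\mue \ge \ns\var$.

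For the first part ($\nl \le \frac{\mue}{\var}$): then $\nl\var - \mue \le 0$, so the first term above is $\le 0$ and the second is $\le 0$, strictly negative for $\el \ge 1$. Hence the small player's error is strictly decreasing in $\el$, and summing the negative increments across integer steps gives $\el_2 > \el_1 \Rightarrow \pi(\s,\el_2)\cg_{\si}\pi(\s,\el_1)$. For the second part ($\nl > \frac{\mue}{\var}$): the displayed expression is now an affine function of $\el$ with positive intercept $\s\ns\p{\nl\var - \mue}$ and negative slope $-\nl\p{\nl\var + \mue - 2\ns\var}$, hence positive for $\el < \el_1$ and negative for $\el > \el_1$, where $\el_1 = \frac{\s\ns\p{\nl\var - \mue}}{\nl\p{\nl\var + \mue - 2\ns\var}}$; since the derivative has the same sign, this is exactly the claim.

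The only real obstacle is the algebraic bookkeeping in the derivative computation: expanding and collecting terms so that the $\ns^2$-contributions cancel and the derivative reduces to the clean two-term expression above. Once that simplification is in place, the sign analysis is immediate from the hypotheses, exactly paralleling Lemmas~\ref{sprefs} and~\ref{lprefl}.
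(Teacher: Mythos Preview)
Your proposal is correct and follows essentially the same approach as the paper: compute the derivative of the small player's error with respect to $\el$, factor out the positive $\nl/\total_Q^3$, and analyze the sign of the remaining affine-in-$\el$ expression using $\ns\le \mue/\var$ and $\ns<\nl$. Your simplified form $\s\ns(\nl\var-\mue)-\el\nl(\nl\var+\mue-2\ns\var)$ and the resulting threshold $\el_1=\frac{\s\ns(\nl\var-\mue)}{\nl(\nl\var+\mue-2\ns\var)}$ match the paper's exactly (the paper writes the coefficient as $\mue-\var\ns+\var(\nl-\ns)$, which is the same quantity).
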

\begin{proof}
The small player's error is
$$\frac{\mue}{\s \cd \ns + \el\cd \nl} $$
$$+ \var \frac{(\s-1) \cd \ns^2 + \el\cd \nl^2 + ((\s-1) \cd \ns + \el\cd nl)^2}{(\s \cd \ns + \el\cd \nl)^2}$$
The derivative with respect to $\el$ is: 

$$-\frac{\nl \cd (\el\cd \nl \cd (\mue -\var \cd \ns + \var (\nl - \ns))}{(\el\cd \nl + \s \cd \ns)^3}$$

$$-\frac{\nl \cd (\s \cd \ns \cd (\mue - \nl \cd \var)}{(\el\cd \nl + \s \cd \ns)^3}$$
The derivative is negative when the term below is positive: 
$$\el\cd \nl \cd (\mue -\var \cd \ns + \var (\nl - \ns)) + \s \cd \ns \cd (\mue - \nl \cd \var)$$
The first term (multiplying $\el\cd \nl$) is always positive. For $\nl \leq \frac{\mue}{\var}$ the second term is also positive or zero, so the derivative is always negative. 

If $\nl > \frac{\mue}{\var}$, then second term (multiplying $\s$) is negative. The overall derivative is negative when the term below is positive: 
$$\el\cd \nl \cd (\mue -\var \cd \ns + \var (\nl - \ns)) + \s \cd \ns \cd (\mue - \nl \cd \var)>0$$
$$\el\cd \nl \cd (\mue -\var \cd \ns + \var (\nl - \ns)) > \s \cd \ns \cd (\nl \cd \var - \mue)$$
$$\el > \frac{\s \cd \ns \cd (\nl \cd \var - \mue)}{\nl \cd (\mue -\var \cd \ns + \var (\nl - \ns))}$$
\end{proof}

\begin{restatable}{lemma}{lprefs}
\label{lprefs}
Assume uniform federation with $\ns\leq \frac{\mue}{\var}$ and $\ns< \nl$. If $\nl \geq \frac{\mue}{\var}$, then there exists some $\s_0$ such that for all $\s' < \s_0$, the derivative of the \textbf{large} players' error with respect to the number of \textbf{small} players $\s$ is negative, and for all $\s > \s_0$ the derivative of their error is positive. \newline 
If $\nl < \frac{\mue}{\var}$, then the shape of the curve either first increases, then decreases for all $\s' > \s_0$, or else first decreases, and then increases for all $\s' > \s_0$.  
\end{restatable}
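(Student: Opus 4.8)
The plan is to follow the template used for Lemmas \ref{sprefs}, \ref{lprefl}, and \ref{sprefl}: fix the number of large players $\el$, write a large player's expected MSE in a coalition $\pi(\s, \el)$ as a function of $\s$, treat $\s$ as a continuous variable (as in those lemmas), and track the sign of the derivative. Using the mean-estimation form of the uniform-federation error from Corollary \ref{linfed}, a large player's error is
$$\frac{\mue}{\s \cd \ns + \el\cd \nl} + \var \cd \frac{\s \cd \ns^2 + (\el-1) \cd \nl^2 + \p{\s \cd \ns + (\el-1) \cd \nl}^2}{\p{\s \cd \ns + \el\cd \nl}^2}.$$
First I would differentiate this with respect to $\s$; after clearing the common positive denominator $\p{\s \cd \ns + \el\cd \nl}^3$ and pulling out a factor of $\ns$, the numerator collapses (this is where all the cancellations happen) to the single expression
$$\s \cd \ns \cd \p{2\nl\cd \var - \ns \cd \var - \mue} + \el\cd \nl \cd \p{\ns \cd \var - \mue}.$$
The crucial structural point is that this is \emph{linear} in $\s$, so the large player's error is smooth and changes monotonicity at most once; everything else is reading off the signs of its slope $\ns \cd \p{2\nl\cd \var - \ns \cd \var - \mue}$ and its intercept $\el\cd \nl \cd \p{\ns \cd \var - \mue}$.

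For the case $\nl \geq \frac{\mue}{\var}$, I would argue the slope is strictly positive, since $2\nl\cd \var - \ns \cd \var - \mue \geq \p{\nl\cd \var - \mue} + \var\p{\nl - \ns} > 0$ using $\nl\cd \var \geq \mue$ and $\ns < \nl$, while the intercept is $\leq 0$ because $\ns \leq \frac{\mue}{\var}$. Hence the numerator is negative for $\s$ below its root
$$\s_0 \;=\; \frac{\el\cd \nl \cd \p{\mue - \ns \cd \var}}{\ns \cd \p{2\nl\cd \var - \ns \cd \var - \mue}} \;\geq\; 0$$
and positive above it, which is exactly the claimed behavior of the derivative. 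For the case $\nl < \frac{\mue}{\var}$, the intercept is still $\leq 0$ (so the error is non-increasing at $\s = 0$), but the sign of $2\nl\cd \var - \ns \cd \var - \mue$ is no longer forced; I would simply split on it — if it is nonpositive the error is monotone non-increasing, and if it is positive the error first decreases and then increases, crossing at the same $\s_0$ — which yields the stated dichotomy for the shape of the curve.

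The only real obstacle is the algebra in the differentiation step: one must expand $\p{\s \cd \ns + (\el-1)\cd \nl}^2$, apply the quotient rule, and verify that the a priori cubic-looking numerator in $\s$ reduces to $\ns$ times a \emph{linear} polynomial. This can be sanity-checked against the analogous computation behind Lemma \ref{sprefs}, whose stated derivative comes out of exactly the same manipulation with the small-player error substituted for the large-player error. Once the linearity is established, the remaining sign analysis under the various orderings of $\ns$, $\nl$, and $\frac{\mue}{\var}$ is routine.
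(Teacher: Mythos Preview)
Your proposal is correct and follows essentially the same approach as the paper: compute the derivative of the large player's error in $\s$, factor the numerator as $\ns$ times a linear expression in $\s$, and read off the sign changes from the slope and intercept. Your numerator $\s\cdot\ns\cdot(2\nl\var-\ns\var-\mue)+\el\cdot\nl\cdot(\ns\var-\mue)$ is exactly the paper's (up to an overall sign convention), and your sign analysis in both cases matches the paper's---indeed, in the $\nl<\mue/\var$ subcase your ``monotone non-increasing or decrease-then-increase'' dichotomy is the precise conclusion the paper's own computation yields (with the ``increase-then-decrease'' wording in the lemma corresponding to a crossing point $\s_0\leq 0$ outside the physical range).
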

\begin{proof}
The large player's error is: 
$$\frac{\mue}{\s \cd \ns + \el\cd \nl}$$
$$+ \var \frac{\s \cd \ns^2 + (\el-1) \cd \nl^2 + (\s \cd \ns + (\el-1) \cd nl)^2}{(\s \cd \ns + \el\cd \nl)^2}$$
The derivative with respect to $\s$ is: 
$$- \frac{\ns \cd (\el\cd \nl (\mue - \ns \cd \var))}{(\el \cd \nl + \s \cd \ns)^3}$$
$$- \frac{\ns \cd (\ns \cd \s \cd (\mue - \nl \cd \var + \var(\ns - \nl)))}{(\el \cd \nl + \s \cd \ns)^3}$$
The derivative is negative when the term below is positive: 
$$\el \cd \nl (\mue - \ns \cd \var) + \ns \cd \s \cd (\mue - \nl \cd \var + \var(\ns - \nl)) $$
For $\nl \geq \frac{\mue}{\var}$, the first term is always positive or zero and the second term is always negative. 
Solving for when the overall derivative is positive gives:  
$$\el \cd \nl (\mue - \ns \cd \var) + \ns \cd \s \cd (\mue - \nl \cd \var + \var(\ns - \nl))>0$$
$$ \ns \cd \s \cd (\mue - \nl \cd \var + \var(\ns - \nl))>-\el \cd \nl (\mue - \ns \cd \var)$$
$$ \s <\frac{-\el \cd \nl (\mue - \ns \cd \var)}{\ns \cd (\mue - \nl \cd \var + \var(\ns - \nl))} $$
\newline 
Next, we consider the case where $\nl < \frac{\mue}{\var}$. Again, the first term is positive or zero. The second term, though, is composed of a sum: one component is positive and one is negative, so it is not necessarily clear whether the overall sum is positive or negative. If the coefficient on $\s$ is negative, then the curve has the same shape as in the $\nl \geq \frac{\mue}{\var}$ case: first decreasing, and then increasing. If the coefficient is positive, then by the same analysis as before, the derivative is negative whenever the following inequality holds: 
$$ \ns \cd \s \cd (\mue - \nl \cd \var + \var(\ns - \nl))>-\el \cd \nl (\mue - \ns \cd \var)$$
$$ \s >\frac{-\el \cd \nl (\mue - \ns \cd \var)}{\ns \cd (\mue - \nl \cd \var + \var(\ns - \nl))} $$
In this case, the derivative of the large players' error first increases, then decreases. The shape of the curve depends on more information, which indicates whether the coefficient on $\s$ is positive or negative. 
\end{proof}

\smallcore*
\begin{proof}
In this proof, we will use the results of Lemmas \ref{sprefs}, \ref{lprefl}, \ref{sprefl}, and \ref{lprefs}. 

The small players always prefer $\s$ as large as possible, and for $\nl \leq \frac{\mue}{\var}$ they also prefer $\el$ as large as possible, so $\pi(\Sv, \Lv) = \gcol$ minimizes error for small players. For this reason, any defection coalition that has $\pi(\s>0, \el)$ is infeasible because the small players would get higher error. 

The only kind of defections we need to consider are in the form $\pi(0, \el)$. We will consider $\pi(0, \Lv)$ and show that the large players prefer $\pi(\Sv, \Lv)$ to $\pi(0, \Lv)$: $\pi(\Sv, \Lv)\cg_{\li}\pi(0, \Lv)$. In the case that $\nl < \frac{\mue}{\var}$, $\pi(0,\Lv) \cg_{\li} \pi(0, \el < \Lv)$, so any other arrangement is also not a possible defection. In the case that $\nl = \frac{\mue}{\var}$, $\pi(0,\Lv) =_{\li} \pi(0, \el < \Lv)$, so similarly any other defection is not possible. 
What we'd like to show is: 
$$\frac{\mue}{\s \cd \ns + \el \cd \nl} $$
$$+ \var \frac{\s \cd \ns^2 + (\el -1) \cd \nl^2 + (\s \cd \ns + (\el -1) \cd \nl)^2}{(\s \cd \ns + \el \cd \nl)^2} $$
$$< \frac{\mue}{\el \cd \nl} + \var\frac{(\el -1) \cd \nl^2 + (\el -1)^2 \cd \nl^2}{\el^2 \cd \nl^2}$$
Cross multiplying turns the condition into: 
$$\mue \cd (\s \cd \ns + \el \cd \nl) \cd \el^2 \cd \nl^2 $$
$$+ \var \cd (\s\cd \ns^2 + (\el -1) \cd \nl^2 + (\s \cd \ns + (\el -1) \cd \nl)^2) \cd \el^2 \cd \nl^2$$
$$< \mue \cd \el \cd \nl \cd (\s \cd \ns + \el \cd \nl)^2 + \var\cd \nl^2 \cd (\el -1) \cd \el \cd (\s \cd \ns + \el \cd \nl)^2$$
If we collect the $\mue$ terms, we get:
$$\mue \cd \el \cd \nl \cd (\s \cd \ns + \el \cd \nl) \cd (\s \cd \ns + \el \cd \nl - \el \cd \nl)$$
$$=\mue \cd \el \cd \nl \cd (\s \cd \ns + \el \cd \nl) \cd \s \cd \ns$$
If we collect the $\var$ terms, we get: 
$$\var \cd \el\cd \nl^2 \cd (\el  \cd (\s\cd \ns^2 + (\el -1) \cd \nl^2  $$
$$+ (\s \cd \ns + (\el -1) \cd \nl)^2) - (\el -1) \cd (\s \cd \ns + \el \cd \nl)^2)$$
First, we expand the first squared term and combine it with another term: 
$$\s \cd \ns^2 + (\el -1) \cd \nl^2 + \s^2 \cd \ns^2 + 2 \cd \s \cd (\el-1)\cd \nl + (\el -1)^2 \cd \nl^2$$
$$=\ns^2 \cd \s \cd (\s+1) + 2 \cd \s \cd (\el-1) \cd \ns\cd \nl + \nl^2 \cd (\el -1) \cd \el $$
Multiplied by $\el$, it becomes: 
$$\el \cd (\ns^2 \cd \s \cd (\s+1) + 2 \cd \s \cd (\el-1) \cd \ns\cd \nl + \nl^2 \cd (\el -1) \cd \el)$$
Expanding out the second squared term gives us: 
$$\s^2 \cd \ns^2 + 2 \s \cd \el \cd \ns \cd \nl + \el^2 \cd \nl^2$$
When we multiply this by $-(\el -1)$, it becomes
$$- (\el -1) \cd (\s^2 \cd \ns^2 + 2 \s \cd \el \cd \ns \cd \nl + \el^2 \cd \nl^2)$$
Next, we combine similar terms in both sums. First, we start with coefficients of $\ns^2$
$$\el \cd \ns^2 \cd \s \cd (\s+1) - (\el -1) \cd \ns^2 \cd \s^2 $$
$$= \ns^2 \cd \s \cd (\el \cd (\s+1) - (\el -1) \cd \s)$$
$$=\ns^2 \cd \s \cd (\el \cd \s + \el - \el \cd \s +\s) $$
$$= \ns^2 \cd \s \cd (\el +\s) $$
Next, we do the next term, which involves coefficients of $\ns \cd \nl$: 
$$2 \cd \el \cd \s \cd (\el -1) \cd \ns \cd \nl - 2 \cd (\el -1) \cd \s \cd \el \cd \ns \cd \nl $$
$$= 0$$
And the last one term, with coefficients of $\nl^2$: 
$$\el^2 \cd (\el -1) \cd \nl^2 - (\el -1) \cd \el^2 \cd \nl^2 $$
$$= 0$$
If we multiply take the only nonzero term and multiply by the terms we pulled out, it becomes: 
$$\ns^2 \cd \s \cd (\el + \s) \cd \el \cd \nl^2 \cd \var$$
Next, we return this to our inequality. What we're trying to show is: 
$$\el \cd \nl^2\cd \ns^2 \cd \s \cd (\el +\s)\cd \var < \mue \cd \el \cd \nl \cd (\s \cd \ns + \el \cd \nl) \cd \s \cd \ns$$
Cancelling some terms: 
$$ \nl\cd \ns \cd (\el +\s)\cd \var < \mue  \cd (\s \cd \ns + \el \cd \nl)$$
Expanding out terms: 
$$\var \cd (\el \cd \nl \cd \ns + \s \cd \nl \cd \ns) < \mue \cd (\s \cd \ns + \el \cd \nl)$$
We can prove this by splitting up piecewise: 
$$\var \cd \ns \cd \el \cd \nl < \mue \cd \el \cd \nl$$
because $\var \cd \ns < \mue$. Similarly, 
$$\var \cd \nl \cd \s \cd \ns \leq  \mue \cd \s \cd \ns$$
because $\var \cd \nl \leq \mue$. 
\end{proof}

\indivstab*
\begin{proof}
We will prove this directly by calculating an arrangement that is individually stable, relying on the results in Lemmas \ref{sprefs}, \ref{lprefl}, \ref{sprefl}, \ref{lprefs}. 

First, group all of the small players together. Then calculate $\ell' = \max \ell$ such that $\pi(\Sv, \ell) \succeq_{\li} \pi(0, 1)$: the largest number of large players that can be in the coalition such that the large players prefer this to being alone. Check if $\pi(\Sv, \ell) \cl_{\si} \pi(S, 0)$. If this is true, make the final arrangement $\pi(\Sv, 0), \pi(0, 1) \cdot L$: by previous lemmas around how the small player's error changes with $\el$, we know that if $\pi(S, \ell) \cl_{\si} \pi(\Sv, 0)$, then $\pi(S, \ell') \cl_{\si} \pi(\Sv, 0)$ for all $\ell' < \ell$. 

We will show that this is individually stable by showing no players wish to unilaterally deviate. 
\begin{itemize}
    \item No small player wishes to go to $\pi(1,0)$: reducing the number of small players in a group from $\Sv$ to 1 monotonically increases the error the small player faces. 
    \item No small player wishes to go to $\pi(1, 1)$: it is possible to reach this state by first going from $\pi(\Sv, 0)$ to $\pi(\Sv, 1)$ (which would increase error because $1 \leq \el'$ and $\pi(\Sv, \el) \cl_{\si} \pi(\Sv, 0)$) and then from $\pi(\Sv, 1)$ to $\pi(1, 1)$ (which would increase error because reducing the number of small players increases error). 
    \item No large player can go to $\pi(\Sv, 1)$: this would increase the error of the small players. 
    \item No large player wishes to go to $\pi(0, 2)$: this would increase the error of both large players. 
\end{itemize}

Next, we will consider the case where $\pi(\Sv, \el') \cgeq_{\si} \pi(\Sv, 0)$: in this case, we will show that $\pi(\Sv, \ell)$ is individually stable. 
\begin{itemize}
    \item No small player wishes to go from $\pi(\Sv, \el')$ to $\pi(1,0)$. We can see that $\pi(1,0)$ has higher error because we know $\pi(\Sv, \el') \cgeq_{\si} \pi(\Sv, 0) \cg_{\si} \pi(1, 0)$. 
    \item No small player wishes to go to $\pi(1,1)$. We can see $\pi(1,1)$ has higher error for the small player because $\pi(\Sv, \el') \cgeq_{\si} \pi(\Sv, 1) \cg_{\si} \pi(1,1)$. The first inequality comes from the following reasoning: if $\frac{d}{d\el}err_{\si}(\pi(\Sv,\el))$ is negative at $\el = 1$, then there is a monotonically increasing path of error from $\el'$ to 1. If  $\frac{d}{d\el}err_{\si}(\pi(\Sv,\el))$ is positive at 1, then we know that $\pi(\Sv,1) \cl_{\si} \pi(\Sv, 0)$, whereas $\pi(\Sv, \el') \cgeq_{\si} \pi(\Sv, 0)$. 
    \item No large player wishes to go to $\pi(\Sv, \el'+1)$: by definition of $\el'$, it would get greater error than in $\pi(0, 1)$. 
    \item No large player wishes to go to $\pi(0, 2)$ for the same reason as above. 
\end{itemize}
By this analysis, $\pi(\Sv, \el')$ is individually stable. 
\end{proof}

Lemma \ref{slnotbothsat}, below, will be useful in Example \ref{notstab} in analyzing whether the individually stable arrangement produced by Theorem \ref{indivstab} is also core stable. 
\begin{restatable}{lemma}{slnotbothsat}
\label{slnotbothsat}
Consider uniform federation with two coalitions $\pi(\s_1, \el_1)$ and $\pi(\s_2, \el_2)$ with $\s_2 < \s_1$. Then, it is not possible to pick $\el_2$ so that $\pi(\s_1, \el_1) \cg_{\si} \pi(\s_2, \el_2)$ and $\pi(\s_1, \el_1) \cg_{\li} \pi(\s_2, \el_2)$.
\end{restatable}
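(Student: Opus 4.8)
The plan is to argue by contradiction from the closed forms of the two error functions. Write $N_i = \s_i\ns + \el_i\nl$ for the total number of samples in coalition $\pi(\s_i,\el_i)$ and $Q_i = \s_i\ns^2 + \el_i\nl^2$ for its sum of squared sample counts. Expanding $(N-\ns)^2$ and $(N-\nl)^2$ in the error expressions used in the proofs of Lemmas \ref{sprefs} and \ref{lprefl} rewrites them as $err_{\si}(\pi(\s,\el)) = \var + \frac{\mue}{N} + \frac{\var Q}{N^2} - \frac{2\var\ns}{N}$ and $err_{\li}(\pi(\s,\el)) = \var + \frac{\mue}{N} + \frac{\var Q}{N^2} - \frac{2\var\nl}{N}$, so the two errors share the common part $T(\s,\el) := \var + \frac{\mue}{N} + \frac{\var Q}{N^2}$ and differ only through the final term. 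In particular $err_{\si}(\pi(\s,\el)) - err_{\li}(\pi(\s,\el)) = \frac{2\var(\nl-\ns)}{N} > 0$: in any fixed coalition the small player is worse off than the large player, by a gap that shrinks as $N$ grows. This identity is the structural fact I would build the argument around.

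Assume for contradiction that some $\el_2$ makes both $\pi(\s_1,\el_1) \cg_{\si} \pi(\s_2,\el_2)$ and $\pi(\s_1,\el_1) \cg_{\li} \pi(\s_2,\el_2)$ hold, i.e. $err_{\si}(\pi(\s_1,\el_1)) < err_{\si}(\pi(\s_2,\el_2))$ and $err_{\li}(\pi(\s_1,\el_1)) < err_{\li}(\pi(\s_2,\el_2))$. Writing $T_1 := T(\s_1,\el_1)$, $T_2 := T(\s_2,\el_2)$ and substituting the identity, these two hypotheses become two upper bounds on the same quantity $T_1 - T_2$, namely $T_1 - T_2 < 2\var\ns\big(\frac{1}{N_1}-\frac{1}{N_2}\big)$ and $T_1 - T_2 < 2\var\nl\big(\frac{1}{N_1}-\frac{1}{N_2}\big)$; since $\ns < \nl$, which of the two is binding is decided by the sign of $\frac{1}{N_1}-\frac{1}{N_2}$. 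The goal is then to extract from $\s_2 < \s_1$ a lower bound on $T_1 - T_2$ that contradicts the binding upper bound. A convenient way to organize this is to keep everything in terms of $N$, $Q$, and the counts, and to use the elementary inequalities $\ns N \le Q \le \nl N$ valid for any coalition of small and large players (strict once both types are present), which sandwich $T_1$ and $T_2$ closely enough to be compared against the $2\var\ns(\cdot)$ or $2\var\nl(\cdot)$ term.

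I expect the bookkeeping in this last step to be the main obstacle. The sign of $\frac{1}{N_1}-\frac{1}{N_2}$ is not determined by $\s_2 < \s_1$ alone --- it also depends on how $\el_1$ compares with $\el_2$ --- so the argument has to split into the cases $N_1 \le N_2$ and $N_1 > N_2$, and within each into sub-cases where one coalition contains no small player or no large player and the corresponding comparison degenerates. After clearing the positive denominators $N_1^2 N_2^2$ this reduces to checking a small number of polynomial inequalities in $\ns,\nl,\s_i,\el_i$ and confirming each carries the sign forced by $\ns < \nl$ and $\s_2 < \s_1$; that case-by-case verification, rather than any conceptual step, is where the care lies.
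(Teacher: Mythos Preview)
Your key identity $err_{\si}(\pi(\s,\el)) - err_{\li}(\pi(\s,\el)) = \dfrac{2\var(\nl-\ns)}{N}$ is correct and is precisely the pivot the paper's proof uses. The problem is the direction of the preference symbols: the lemma as printed has a typo, and the paper's own proof (and the use made of the lemma in Example~\ref{notstab}) establishes the \emph{reverse} claim, namely that with $\s_2<\s_1$ one cannot choose $\el_2$ so that $\pi(\s_2,\el_2)\cg_{\si}\pi(\s_1,\el_1)$ and $\pi(\s_2,\el_2)\cg_{\li}\pi(\s_1,\el_1)$ simultaneously. The literal statement you set out to prove is false. Indeed, pick $\el_2$ (allowing a non-integer value) so that $N_2=N_1$; then the $\mue/N$ terms and the $2\var n/N$ terms match, while $Q_2-Q_1=(\s_1-\s_2)\,\ns(\nl-\ns)>0$, so $T_2>T_1$ and both $err_{\si}$ and $err_{\li}$ are strictly larger at $(\s_2,\el_2)$. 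That is exactly the conjunction $\pi(\s_1,\el_1)\cg_{\si}\pi(\s_2,\el_2)$ and $\pi(\s_1,\el_1)\cg_{\li}\pi(\s_2,\el_2)$ you assumed for contradiction --- your promised case-by-case polynomial verification would never close.

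The paper proves the intended statement by a different route than your proposed case split on the sign of $1/N_1-1/N_2$. It first fixes the reference value $\el_2'$ with $N_2=N_1$ and, essentially via the $Q$-comparison above together with your identity, shows that at $\el_2'$ both types strictly prefer $(\s_1,\el_1)$. It then argues, using the monotonicity Lemmas~\ref{lprefl} and~\ref{sprefl} (so under the ambient Case~3 assumption $\nl\geq\mue/\var$), that moving $\el_2$ away from $\el_2'$ can lower at most one type's error below its value at $(\s_1,\el_1)$: increasing $\el_2$ only hurts the large player further, while decreasing $\el_2$ either keeps the small player on the increasing branch of $err_{\si}(\s_2,\cdot)$ --- where its error exceeds $err_{\si}(\pi(\s_2,0))$ and hence the relevant benchmark --- or on the decreasing branch, where lowering $\el_2$ raises its error. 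If you redirect your two simultaneous bounds on $T_1-T_2$ toward the intended inequalities your outline could become an alternative argument, but as written it is aimed at a false target.
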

\begin{proof}
To prove this, we will rely on Lemmas \ref{sprefs}, \ref{lprefl}, \ref{sprefl}, \ref{lprefs}. 
First, we consider a hypothetical $\el_2'$ defined such that 
$$\s_1 \cd \ns + \el_1 \cd \nl = \s_2 \cd \ns + \el_2' \cd \nl$$
Note that $\el_2$ may not be an integer: we're using it as a reference tool (a hypothetical allocation) so this doesn't matter. 

First, we will show that the small player gets greater error in this case. First, we rewrite it as: 
$$\el_2' = (\s_1 - \s_2) \cd \frac{\ns}{\nl} + \el_1$$
and plug it in to the equation for the error of a small player. Note that most of the values stay the same: the entire $\mue$ term and the denominator of the $\var$ term. Similarly, note that with the given substitution, 
$$((\s_2-1) \cd \ns + \el_2'\cd \nl)^2  = ((\s_1-1) \cd \ns + \el_1\cd \nl)^2$$
The only term that changes is the other portion of the numerator, which becomes:
$$\nl^2 \cd \el_1 + \ns \cd \nl \cd (\s_1-\s_2) + (\s_2-1) \cd \ns^2$$
We would like to show that it is larger than the relevant portion of the error for $\pi(\s_1, \el_1)$, which is: 
$$\el_1 \cd \nl^2 + (\s_1-1) \cd \ns^2$$
This is equivalent to showing: 
$$\ns \cd \nl \cd (\s_1-\s_2) + (\s_2-1) \cd \ns^2 > (\s_1-1) \cd \ns^2$$
We can lower bound the lefthand side using $\nl > \ns$: 
$$\ns \cd \nl \cd (\s_1-\s_2) + (\s_2-1) \cd \ns^2 > \ns^2(\s_1-\s_2) + (\s_2-1) \cd \ns^2$$
$$= \ns^2 \cd (\s_1-1)$$
which shows that the small player gets greater error in $\pi(\s_2, \el_2')$ than in $\pi(\s_1, \el_1)$. 

Next, we'll show that the large player also gets greater error. We will first note that from the definition of the errors of the small and large players, we can write: 
$$err_{\si}(\s_1, \el_1) = err_{\li}(\s_1, \el_1) + 2\var \frac{\nl - \ns}{\el_1 \cd \nl + \s_1 \cd \ns}$$
and 
$$err_{\si}(\s_2, \el_2') = err_{\li}(\s_2, \el_2') + 2\var \frac{\nl - \ns}{\el_2' \cd \nl + \s_2 \cd \ns}$$
Note that, by the definition of $\el_2'$, the additive term for each of these qualities is the same. We also have just shown that $err_{\si}(\s_2, \el_2') > err_{\si}(\s_1, \el_1)$. From these two equalities, this must imply that $ err_{\li}(\s_2, \el_2') >  err_{\li}(\s_1, \el_1)$. 

We have shown that both the large and small players get higher error in $\pi(\s_2, \el_2')$ than $\pi(\s_1, \el_1)$. Next, we will show that they have different preferences about whether they wish $\el_2'$ were larger or smaller, which means that no matter what $\el_2$ is, it will leave at least one of them with higher error. 

First, we know from previous analysis in Lemma \ref{lprefl} that the large players always wish there were fewer other large players in a coalition: the large players want $\el_2 < \el_2'$. 

Secondly, we know from Lemma \ref{sprefl} that the error the small player experiences first increases and then decreases as $\el$ increases. Is it possible to pick an $\el_2 < \el_2'$ so that the small player gets lower error there than in $\pi(\s_1, \el_1)$?

Suppose that the derivative of $err_{\si}(\s_2, \el)$ with respect to $\el$ is positive at $\el = \el_2$: then, reducing $\el$ from $\el_2'$ to $\el_2$ might reduce the small player's error. However, for every point where the small player's derivative is positive, $err_{\si}(\s_2, \el) > err_{\si}(\s_2, 0) >err_{\si}(\Sv, 0)$: the small player would not wish to move here because it would get strictly higher error than it would get in $\pi(\Sv, 0)$.

Suppose instead that the derivative of of $err_{\si}(\s_2, \el)$ with respect to $\el$ is negative or zero at $\el = \el_2$. Then, if $\el_2 < \el_2'$, reducing the number of large players in the coalition from $\el_2'$ to $\el_2$ would \emph{increase} the error of the small players. This is also not an allocation that the small players would prefer. 

Increasing $\el$, so that $\el_2 > \el_2'$ sufficiently large, would satisfy the small player, but we already showed that it would increase the error the large player experiences. As a result, it is not possible to pick an allocation that both the small and large players prefer to $\pi(\s_1, \el_1)$.
\end{proof}

\begin{example}\label{notstab}
For uniform federation, the arrangement as produced in Theorem \ref{indivstab} is not necessarily core stable. 
\end{example}

Theorem \ref{indivstab} produces an arrangement of the form $\{\pi(\Sv, \el'), \pi(0, 1) \cd (\Lv - \el')\}$, where we note that $\el'$ could equal 0. As mentioned in the end of Section \ref{sec:unadj}, this arrangement is not necessarily core stable. However, the reason why is subtle. 

Theorem \ref{indivstab} checks that the given arrangement is individually stable, so the remaining cases to check would involve multiple players moving together to a new group.  First, we will consider deviations consisting of homogeneous groups: for example, $\pi(\s, 0)$ or $\pi(0, \el)$. From Lemma \ref{sprefs} we know that small players always prefer having more small players in their coalition, so $\pi(\s, 0) \cl_{\si} \pi(\Sv, 0) \cleq_{\si} \pi(\Sv, \el')$. Similarly, $\pi(0, \el) \cl_{\li} \pi(0, 1)$ and $\pi(0, \el) \cl_{\li} \pi(\Sv, \el')$ (if $\el' > 0$), so the large players do not wish to unilaterally deviate. 

Next, we might wonder whether some of the small players and large players in the $\pi(\Sv, \el')$ coalition might wish to deviate to some $\pi(\s, \el)$ where $\s < \Sv$: assume that $\el'>0$ for now.  Lemma \ref{slnotbothsat} shows that this is not feasible: it is not possible to find a location $\pi(\s, \el')$ with $\s < \Sv$ where both the small and large players get lower error. That is, it is not possible to have both $\pi(\s, \el) \cg_{\si} \pi(\Sv, \el')$ and $\pi(\s, \el) \cg_{\li} \pi(\Sv, \el')$. 

However, it still might be possible to have a deviating coalition. Recall from Theorem \ref{indivstab} that if $\el'>0$, $\pi(\Sv, \el') \cg_{\li} \pi(0, 1)$: the large players doing local learning get strictly greater error than the large players in $\pi(\Sv, \el')$. Could it be possible to find an arrangement so that $\pi(\s, \el) \cg_{\si} \pi(\Sv, \el')$ and $\pi(\s, \el) \cg_{\si} \pi(0, 1)$? 

The answer is yes. We show this by constructively producing an arrangement of players that is individually stable as produced by Theorem \ref{indivstab}, but is not core stable because a subset of the small players in $\pi(\Sv, \el')$ and the large players doing local learning both strictly prefer some $\pi(\s, \el)$. 

For this arrangement, we set $\mue = 100, \var = 1$ (note the larger $\mue$ value). We fix $\Sv = 70$ and $\Lv \geq 7$. Then, we calculate the error players get in various arrangements\footnote{The code to produce these calculations is given in our Github repository: \url{https://github.com/kpdonahue/model_sharing_games}.}.

First, we calculate the individually stable arrangement as given in Theorem \ref{indivstab}. We claim that $\pi(70, 3)$ satisfies this arrangement. For reference, we calculate the errors the small and large players get in various arrangements.  
$$err_{\si}(\pi(70, 3)) = 1.107322 \quad err_{\si}(\pi(70, 0)) = 1.115584$$
$$err_{\li}(\pi(70, 3)) = 0.932690 \quad err_{\li}(\pi(0, 1)) = 0.943396 $$ $$err_{\li}(\pi(70, 4)) =  0.943664  $$
Note that $\pi(70, 3) \cg_{\si} \pi(70, 0)$ and $\pi(70, 3) \cg_{\li} \pi(0, 1)$: both the small and large players would rather participate. Because $\pi(70, 4) \cl_{\li} \pi(0, 1)$, we cannot add another large player to the $\pi(70, 30)$ coalition. 

Next, we will show that the alternate coalition $\pi(68, 4)$ is one where small players and large players (those that are doing local learning) would both strictly prefer. 
$$err_{\si}(\pi(68, 4)) = 1.105263 \quad err_{\li}(\pi(68, 4)) = 0.943147$$
Note that $\pi(68, 4) \cg_{\si} \pi(70, 3)$ and $\pi(68, 3) \cg_{\li} \pi(0, 1)$. 

\section{Supporting proofs for coarse-grained federation}\label{app:wsupp}

\wfedminval*
\begin{proof}
Taking the derivative of the error with respect to $\w$ produces: 
$$2\mue\p{\frac{\w}{\ndraw_j} -\frac{\w}{\total}} - 2\frac{ \sum_{i\ne j}\ndraw_i^2  +(\total - \ndraw_j)^2}{\total^2}\cd (1-\w)\cd \var$$
Setting this equal to 0 and solving for $\w$ produces $\w_j$ equal to 
$$ \frac{(\sum_{i\ne j}\ndraw_i^2 + (\total - \ndraw_j)^2) \cd \var}{\mue \cd \total^2 \cd \p{\frac{1}{\ndraw_j} - \frac{1}{\total}} + (\sum_{i\ne j}\ndraw_i^2 + (\total - \ndraw_j)^2) \cd \var}$$
Note that this value $\w_j$ depends on the player $j$ that it is in reference to. It is also always strictly between 0 and 1. 
To confirm that this is a point of minimum error rather than maximum, we can take the second derivative of the error, which gives a result that is always positive: 
$$2\mue\p{\frac{1}{\ndraw_j} -\frac{1}{\total}} + 2\frac{ \sum_{i\ne j}\ndraw_i^2  +(\total - \ndraw_j)^2}{\total^2}\cd \var$$
\end{proof}

\wbesterr*
\begin{proof}
For conciseness, we will rewrite the error and $\w_j$ weight with $B = \sum_{i\ne j} \ndraw_i^2 + (\total -\ndraw_j)^2$. The error becomes: 
$$\mue \cd \p{\frac{\w_j^2}{\ndraw_j} + \frac{1-\w_j^2}{\total}} + \frac{B}{\total^2}\cd (1-\w_j)^2\var$$
and the $\w_j$ weight becomes
$$\frac{B \cd \var}{\mue \cd \total^2 \cd \p{\frac{1}{\ndraw_j} - \frac{1}{\total}} +B \cd \var}$$
Substituting in for $\w_j$ gives the error as: 
$$\frac{\mue}{\total} + \mue \cd \p{\frac{1}{\ndraw_j} - \frac{1}{\total}} \cd \frac{B^2 \cd (\var)^2}{\p{\mue \cd \total^2 \cd \p{\frac{1}{\ndraw_j} - \frac{1}{\total}} + B \cd \var}^2}$$
$$+ \frac{B}{\total^2} \cd \var \cd \frac{\mue^2 \cd \total^4 \cd \p{\frac{1}{\ndraw_j} - \frac{1}{\total}}^2}{\p{\mue \cd \total^2 \cd \p{\frac{1}{\ndraw_j} - \frac{1}{\total}} + B \cd \var}^2}$$
Collecting and pulling out common terms: 
$$\frac{\mue}{\total} +\frac{\mue \p{\frac{1}{\ndraw_j} - \frac{1}{\total}} \cd B \cd \var}{\p{\mue \cd \total^2 \cd \p{\frac{1}{\ndraw_j} - \frac{1}{\total}} +B \cd \var}^2}$$
$$\cd \p{B \cd\var +\mue \cd \total^2 \cd \p{\frac{1}{\ndraw_j} - \frac{1}{\total}}}$$
$$=\frac{\mue}{\total} +\frac{\mue \p{\frac{1}{\ndraw_j} - \frac{1}{\total}} \cd B \cd \var}{\mue \cd \total^2 \cd \p{\frac{1}{\ndraw_j} - \frac{1}{\total}} +B \cd \var}$$
$$= \frac{\mue^2 \cd \total^2  \p{\frac{1}{\ndraw_j} - \frac{1}{\total}} + \mue \cd B \cd \var +\total \cd \mue \p{\frac{1}{\ndraw_j} - \frac{1}{\total}} B \cd \var}{\total \cd \p{\mue \cd \total^2 \cd \p{\frac{1}{\ndraw_j} - \frac{1}{\total}} +B \cd \var}}$$
We multiply the top and bottom by $\total \cd \ndraw_j$: 
$$\frac{\mue^2 \cd \total^2 \p{\total - \ndraw_j} + \mue \cd B \cd \var \cd \total \cd \ndraw_j +\total \cd \mue \p{\total - \ndraw_j} B \cd \var}{\total \cd \p{\mue \cd \total^2 \cd \p{\total - \ndraw_j} +B \cd \var \cd \total \cd \ndraw_j}}$$
$$ =\frac{\mue^2 \cd \total \cd \p{\total - \ndraw_j} + \mue \cd B \cd \var \cd \ndraw_j + \mue \cd \p{\total - \ndraw_j} \cd B \cd \var}{\mue \cd \total^2 \cd \p{\total - \ndraw_j} +B \cd \var \cd \total \cd \ndraw_j}$$
$$= \frac{\mue^2 \cd \total \cd \p{\total - \ndraw_j} + \mue \cd B \cd \var \cd \total}{\mue \cd \total^2 \cd \p{\total - \ndraw_j} +B \cd \var \cd \total \cd \ndraw_j}$$
$$= \frac{\mue \cd \p{\total - \ndraw_j} + B \cd \var}{\total \cd \p{\total - \ndraw_j} +B \cd \frac{\var}{\mue} \cd \ndraw_j}$$
which gives the desired value. 
\end{proof}

Next, we will prove a series of lemmas describing how the error of small and large players in a coalition $\pi(\s, \el)$ using optimal coarse-grained federation see their error change as $\s$ and $\el$ increase. 

\begin{lemma}
\label{sprefsw}
For optimal coarse-grained federation, \textbf{small} players always see their error decrease with the addition of more \textbf{small} players. That is, 
$$\s_2 > \s_1 \quad \Rightarrow \quad \pi(\s_2, \el) \cg_{\si} \pi(\s_1, \el)$$
\end{lemma}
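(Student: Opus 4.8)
The plan is to compute directly with the closed-form optimal coarse-grained error of Lemma~\ref{wbesterr} and show it is decreasing in the number of small players. For a small player in $\pi(\s,\el)$ we have $\ndraw_j=\ns$, $\total=\s\cd\ns+\el\cd\nl$; writing $T'=\total-\ns=(\s-1)\cd\ns+\el\cd\nl$ and $B=\sum_{i\ne j}\ndraw_i^2+(\total-\ns)^2=(\s-1)\cd\ns^2+\el\cd\nl^2+T'^2$, that lemma gives the error as
$$E(\s)=\mue\cd\frac{T'+B\rho}{T'\cd\total+\ns\cd B\rho},\qquad \rho:=\frac{\var}{\mue}.$$
I would treat $\s$ as a continuous variable and prove $E$ is strictly decreasing; since small players are interchangeable, this yields $\pi(\s_2,\el)\cg_{\si}\pi(\s_1,\el)$ for integers $\s_2>\s_1$.

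Write $E=\mue N/D$ with $N=T'+B\rho$ and $D=T'\cd\total+\ns\cd B\rho$, so it suffices to show $N'D-ND'<0$. Using $\tfrac{d}{d\s}T'=\tfrac{d}{d\s}\total=\ns$ and $\tfrac{d}{d\s}B=\ns\cd(\ns+2T')$, one gets $N'=\ns\cd(1+\rho u)$ and $D'=\ns\cd u\cd(1+\rho\ns)$, where $u:=\ns+2T'$; here the key simplification $\total+T'=\ns+2T'=u$ (immediate from $\total=T'+\ns$) is what makes $D'$ collapse into that form. Cancelling the common factor $\ns>0$, the claim becomes the polynomial inequality $(1+\rho u)\,D<u\,(1+\rho\ns)\,N$. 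Substituting $\total=T'+\ns$, $B=T'^2+R$ with $R:=B-T'^2=(\s-1)\cd\ns^2+\el\cd\nl^2\ge 0$, and expanding, a large number of terms cancel (in particular the cubic-in-$T'$ terms vanish), and the difference of the two sides reduces to
$$T'\cd\big[(\ns\rho-1)\,T'-2\rho\,R\big],$$
so the whole problem reduces to showing $(\ns\rho-1)T'<2\rho R$.

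The final step is a short case analysis. If $\ns\rho\le 1$ the left side is $\le 0<2\rho R$ (with $R>0$ as soon as $\s\ge 2$ or $\el\ge 1$), so the inequality is strict. If $\ns\rho>1$, I would compare $(\ns\rho-1)T'$ with $2\rho R$ coefficient-by-coefficient in $(\s-1)$ and in $\el$: for the $(\s-1)$ part, $2\rho\ns^2>\ns^2\rho-\ns=(\ns\rho-1)\ns$ holds always; for the $\el$ part, $2\rho\nl^2>(\ns\rho-1)\nl$ holds since $2\rho\nl\ge 2\rho\ns>\ns\rho>\ns\rho-1$, using $\nl\ge\ns$. Hence $E'(\s)\le 0$ for all $\s$, with strict inequality except at the isolated degenerate configuration $\s=1,\el=0$, which is enough to conclude $E(\s_2)<E(\s_1)$ for all integers $\s_2>\s_1\ge 1$. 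I expect the only real obstacle to be bookkeeping the cross-multiplied expansion carefully enough that the cancellations via $\total+T'=\ns+2T'$ are transparent; once that identity is in hand the rest is routine algebra and the sign check above.
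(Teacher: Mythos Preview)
Your proposal is correct and follows essentially the same approach as the paper: both differentiate the closed-form error of Lemma~\ref{wbesterr} via the quotient rule and show the resulting numerator has a definite sign. The only cosmetic difference is that the paper collects terms into the form $-\ns\,T'\bigl[(\s-1)\ns(\mue+\ns\var)+\el\,\nl(\mue+\var(2\nl-\ns))\bigr]$, which is a sum of manifestly positive pieces (using $\nl>\ns$) and so avoids your case split on $\ns\rho\lessgtr 1$; your expression $(\ns\rho-1)T'-2\rho R$ is exactly $-1/\mue$ times that bracket, so the two computations are the same up to regrouping.
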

\begin{proof}
The error that the small players experience is $\frac{g_{\si}(\s, \el)}{h_{\si}(\s, \el)}$
where 
$$g_{\si}(\s, \el) = \mue \cd \p{(\s -1) \cd \ns + \el \cd \nlv} $$
$$ +\p{(\s-1) \cd \ns^2  + \el \cd \nlv^2+ \p{(\s -1) \cd \ns + \el \cd \nlv}^2} \cd \var$$
$$h_{\si}(\s, \el) = ((\s -1) \cd \ns + \el \cd \nlv)^2 + \ns \cd ((\s -1) \cd \ns + \el \cd \nlv) $$
$$+ \ns \cd \frac{\var}{\mue} \cd ((\s-1) \cd \ns^2 + \el \cd \nlv^2 + ((\s -1) \cd \ns + \el \cd \nlv)^2)$$
Next, we consider the derivative of the error term with respect to $\s$, the number of small players. If this is always negative, then small players always see their error decrease with the addition of more small players. The derivative is negative when: 
$$\frac{d}{d\s}\br{g_{\si}(\s, \el)} \cd h_{\si}(\s, \el) - g_{\si}(\s, \el) \cd \frac{d}{d\s}\br{h_{\si}(\s, \el)} < 0$$
Calculating and simplifying the lefthand side of this equation gives: 
$$- \ns \cd ((\s -1) \cd \ns + \el \cd \nlv)$$
$$\cd \p{(\s - 1) \cd \ns \cd (\mue + \ns \cd \var) + \el \cd \nlv \cd (\mue + \var \cd (2 \nlv -\ns))}$$
Each individual term is positive: note that $\nlv > \ns$, so $2 \nlv - \ns >0$. The overall term is negative, meaning that the overall derivative of the error is always negative. 
\end{proof}
\begin{lemma}
\label{spreflw}
For optimal coarse-grained federation, there exists some $\el_0$ such that for all $\el > \el_0$, the \textbf{small} players always see their error decrease with the addition of more \textbf{large} players. 
\end{lemma}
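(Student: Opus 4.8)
The plan is to follow the template of the proof of Lemma~\ref{sprefsw}. Using the closed form for the optimal coarse-grained error in Lemma~\ref{wbesterr} with $\ndraw_j=\ns$, $\total=\s\cd\ns+\el\cd\nl$, $\total-\ndraw_j=(\s-1)\cd\ns+\el\cd\nl$, and $\sum_{i\ne j}\ndraw_i^2=(\s-1)\cd\ns^2+\el\cd\nl^2$, write the small players' error in $\pi(\s,\el)$ as the ratio $g_{\si}(\s,\el)/h_{\si}(\s,\el)$ — in fact exactly the $g_{\si},h_{\si}$ already introduced in Lemma~\ref{sprefsw}. Treating $\el$ as a continuous variable (as elsewhere in the paper), the sign of $\frac{d}{d\el}\bigl(g_{\si}/h_{\si}\bigr)$ is the sign of
$$\frac{\partial g_{\si}}{\partial\el}\cd h_{\si}-g_{\si}\cd\frac{\partial h_{\si}}{\partial\el},$$
so it suffices to show this expression is negative for all sufficiently large $\el$, which then yields the desired $\el_0$.

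The key structural point is that both $g_{\si}$ and $h_{\si}$ are quadratic in $\el$, so each of $\frac{\partial g_{\si}}{\partial\el}\cd h_{\si}$ and $g_{\si}\cd\frac{\partial h_{\si}}{\partial\el}$ is cubic in $\el$, but their leading $\el^3$ coefficients (both equal to $2\var\cd\nl^4(1+\ns\cd\frac{\var}{\mue})$) coincide and cancel in the difference. Thus I would read off the $\el^2,\el^1,\el^0$ coefficients of $g_{\si}$ and $h_{\si}$, differentiate, and compute the $\el^2$ coefficient of the difference; after the cancellations this coefficient collapses to $-\nl^3\cd(\mue+\nl\cd\var)$, which is strictly negative regardless of how $\ns,\nl$ compare to $\frac{\mue}{\var}$. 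Hence the difference is a quadratic in $\el$ with negative leading coefficient, so it is eventually negative: there is an $\el_0$ (depending on $\s,\ns,\nl,\mue,\var$) such that for $\el>\el_0$ the small players' error strictly decreases as more large players join.

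The main obstacle is purely bookkeeping in the cancellations: one must expand the two cubic products carefully, confirm the $\el^3$ terms cancel, and verify the $\el^2$ coefficient reduces to the clean value above — in particular the terms carrying $\var^2/\mue$ cancel pairwise and the coefficient of $\var\cd\ns\cd\nl$ vanishes because $(2\s-1)-2(\s-1)-1=0$. A useful consistency check is the asymptotic reading: $g_{\si}/h_{\si}\to \mue/(\ns+\mue/\var)$ as $\el\to\infty$, so a negative leading derivative coefficient says the error decreases toward this limit, matching the intuition that a small player surrounded by ever more large players converges to what it can achieve against a fixed external pool. (One could instead organize the whole argument around this limit, but the direct coefficient computation is cleaner and pins down $\el_0$ via an explicit quadratic.)
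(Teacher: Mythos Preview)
Your approach is correct and follows the same template as the paper: both compute the numerator $\frac{\partial g_{\si}}{\partial\el}\,h_{\si}-g_{\si}\,\frac{\partial h_{\si}}{\partial\el}$ and show it is negative for large $\el$. The only difference is in how far the algebra is pushed. You stop after verifying the $\el^3$ terms cancel and the $\el^2$ coefficient is $-\nl^3(\mue+\nl\var)<0$, concluding via the leading coefficient of a quadratic. The paper simplifies the entire numerator to the factored form
\[
-\,\nl\bigl((\s-1)\ns+\el\,\nl\bigr)\bigl((\s-1)\ns(\mue+\var(2\ns-\nl))+\el\,\nl(\mue+\var\,\nl)\bigr),
\]
which exhibits the expression as a product of $-\nl$, a positive linear factor, and a second linear factor in $\el$ with positive slope $\nl(\mue+\var\,\nl)$. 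This yields an explicit \emph{linear} threshold for $\el_0$ rather than the quadratic root you allude to, and makes the later case analysis (regions $A$, $B$, $C$ in Theorem~\ref{wslcore}) cleaner. Your leading-coefficient argument is entirely sufficient for the lemma as stated; the full factorization is an extra step of tidying that happens to be available here.
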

\begin{proof}
The error that the small player experiences can be given by the ratio of $g_{\si}(\cd)/h_{\si}(\cd)$ as before. In this case, we are interested in the derivative with respect to $\el$, which is negative when: 
$$\frac{d}{d\el}\br{g_{\si}(\s, \el)} \cd h_{\si}(\s, \el) - g_{\si}(\s, \el) \cd \frac{d}{d\el}\br{h_{\si}(\s, \el)} < 0$$
Calculating and simplifying the lefthand side of this equation gives: 
$$- \nlv \cd ((\s -1) \cd \ns + \el \cd \nlv)$$
$$\cd \p{(\s - 1)\ns(\mue + \var (2 \ns - \nlv)) + \el \cd \nlv (\mue + \var \cd \nlv))}$$
This term is negative whenever: 
$$\el > -\frac{\mue + \var \cd (2 \ns - \nlv)}{\nlv \cd (\mue + \var \cd \nlv)}$$
So, for sufficiently large $\el$, the derivative of the small player's error is always negative. Note that if the term on the RHS is negative, then the derivative is negative for any $\el \geq0$. 
\end{proof}

\begin{lemma}
\label{lpreflw}
For optimal coarse-grained federation, there exists some $\el_1$ such that for all $\el > \el_1$, the \textbf{large} players always see their error decrease with the addition of more \textbf{large} players. If there are no small players ($\s = 0$), then large players would most prefer to all federate together in $\pi(0, \Lv)$.
\end{lemma}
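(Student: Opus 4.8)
The plan is to follow the same template used for Lemmas \ref{sprefsw} and \ref{spreflw}. First I would write the large player's expected error in a coalition $\pi(\s,\el)$ as a ratio $g_{\li}(\s,\el)/h_{\li}(\s,\el)$, obtained from Lemma \ref{wbesterr} by substituting $\ndraw_j = \nl$, $\total = \s\cd\ns + \el\cd\nl$, and
$$\sum_{i\ne j}\ndraw_i^2 + (\total-\nl)^2 = \s\cd\ns^2 + (\el-1)\cd\nl^2 + \p{\s\cd\ns + (\el-1)\cd\nl}^2,$$
keeping the factor $\frac{\var}{\mue}$ inside $h_{\li}$ exactly as is done for $h_{\si}$ in Lemma \ref{sprefsw}. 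Both $g_{\li}$ and $h_{\li}$ are then explicit polynomials in $\el$ with $h_{\li}(\s,\el) > 0$.

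Next I would differentiate with respect to $\el$. Since $h_{\li} > 0$, the sign of $\frac{d}{d\el}\br{g_{\li}/h_{\li}}$ is the sign of $\frac{d}{d\el}\br{g_{\li}}\cd h_{\li} - g_{\li}\cd\frac{d}{d\el}\br{h_{\li}}$. Expanding this, the top-degree-in-$\el$ contributions should cancel (as they do in the proofs of Lemmas \ref{sprefsw} and \ref{spreflw}), leaving a remainder of the form $-\nl$ times a manifestly positive factor (such as $\s\cd\ns + (\el-1)\cd\nl$) times a term affine in $\el$ with strictly positive leading coefficient, say $\alpha\cd\el + \beta$ with $\alpha > 0$. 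That term is positive --- hence the derivative negative --- exactly when $\el > -\beta/\alpha$, so taking $\el_1 := \max\{0,\,-\beta/\alpha\}$ establishes the first claim: for every $\el > \el_1$, adding one more large player strictly decreases each large player's error.

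The ``$\s = 0$'' assertion I would argue separately and more cleanly. When $\s = 0$, every player in the coalition has exactly $\nl$ samples, so Lemma \ref{nsamew} applies and tells us that among coalitions built only from these large players, each large player's error is minimized in the grand coalition $\pi(0,\Lv)$. Equivalently, setting $\s = 0$ in the formula of Lemma \ref{wbesterr} collapses the error to $\frac{\mue}{\nl\cd(\mue + \nl\cd\var)}\p{\frac{\mue}{\el} + \nl\cd\var}$, which is plainly strictly decreasing in $\el$.

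The main obstacle is the bookkeeping in the differentiation step: expanding $\frac{d}{d\el}\br{g_{\li}}\cd h_{\li} - g_{\li}\cd\frac{d}{d\el}\br{h_{\li}}$, confirming that the top-degree-in-$\el$ terms vanish, and verifying the clean factorization together with the sign of the leading coefficient of the residual affine factor. This is routine polynomial algebra, entirely parallel to Lemmas \ref{sprefsw} and \ref{spreflw}, but it is the one place where a sign slip could go unnoticed; everything else is immediate.
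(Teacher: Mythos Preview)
Your proposal is correct and follows essentially the same route as the paper: write the large player's error as $g_{\li}/h_{\li}$, compute $\frac{d}{d\el}\br{g_{\li}}\cd h_{\li} - g_{\li}\cd\frac{d}{d\el}\br{h_{\li}}$, and factor it as $-\nl\cd(\s\cd\ns+(\el-1)\cd\nl)$ times an affine-in-$\el$ term with positive leading coefficient, yielding a threshold $\el_1$ above which the derivative is negative. The one difference is your treatment of the $\s=0$ case: the paper simply substitutes $\s=0$ into its threshold inequality and reads off $\el>1$, whereas you invoke Lemma~\ref{nsamew} (or the direct one-line simplification of Lemma~\ref{wbesterr}) to conclude monotonicity in $\el$ for all $\el\geq 1$. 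Your route is slightly cleaner and avoids any worry about the sign of the affine constant term at $\s=0$; the paper's route has the advantage of not needing an outside lemma. Either way the argument is complete, and the algebraic bookkeeping you flag as the main risk is exactly the computation the paper carries out.
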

\begin{proof}
The error that the large players experience is $\frac{g_{\li}(\s, \el)}{h_{\li}(\s, \el)}$
where 
$$g_{\li}(\s, \el) = \mue \cd \p{\s \cd \ns + (\el-1) \cd \nlv} $$
$$ +\p{\s \cd \ns^2  + (\el-1) \cd \nlv^2+ \p{\s \cd \ns + (\el-1) \cd \nlv}^2} \cd \var$$
$$h_{\li}(\s, \el) = (\s \cd \ns + (\el-1) \cd \nlv)^2 + \ns \cd (\s\cd \ns + (\el-1) \cd \nlv) $$
$$+ \nlv \cd \frac{\var}{\mue} \cd (\s \cd \ns^2 + (\el-1) \cd \nlv^2 + (\s \cd \ns + (\el-1) \cd \nlv)^2)$$
The derivative of the error with respect to $\el$, the number of large players, is negative whenever:
$$\frac{d}{d\el}\br{g_{\li}(\s, \el)} \cd h_{\li}(\s, \el) - g_{\li}(\s, \el) \cd \frac{d}{d\el}\br{h_{\li}(\s, \el)} < 0$$
Calculating and simplifying the lefthand side of this equation gives: 
$$-\nlv \cd (\s \cd \ns + (\el-1) \cd \nlv)$$
$$\cd ((\el -1) \cd (\mue + \ns \cd \var) + \s \cd \ns \cd (\mue + \var \cd (2 \ns - \nlv)))$$
This term is negative whenever: 
$$\el > \frac{-\s \cd \ns \cd (\mue + \var \cd (2 \ns - \nlv)) + \mue  + \ns \cd \var}{\mue + \ns \cd \var}$$
Note that if $\s = 0$, this reduces to $\el > 1$, which says that whenever large players are arranged without small players, they would most prefer to all be together. 
\end{proof}

\begin{lemma}
\label{lprefsw}
For optimal coarse-grained federation, \textbf{large} players always see their error decrease with the addition of more \textbf{small} players. Because of this, $\pi(\s, \el) \cg_{\li} \pi(0, \el)$ for all $\s > 0$: large players would always prefer federating with more small players. 
$$\s_2 > \s_1 \quad \Rightarrow \quad \pi(\s_2, \el) \cg_{\li} \pi(\s_1, \el)$$
\end{lemma}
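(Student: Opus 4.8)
The plan is to follow the same template as Lemmas~\ref{sprefsw}, \ref{spreflw}, and \ref{lpreflw}: write the large player's expected MSE in the coalition $\pi(\s,\el)$ as a ratio $g_{\li}(\s,\el)/h_{\li}(\s,\el)$ and show that its partial derivative with respect to $\s$ is strictly negative. Concretely, I would obtain this ratio by substituting $\ndraw_j=\nlv$ and $\total = \s\cd\ns + \el\cd\nlv$ into the closed form of Lemma~\ref{wbesterr}, which (as in the proof of Lemma~\ref{lpreflw}) expresses $g_{\li}$ and $h_{\li}$ as low-degree polynomials in $\s$. By the quotient rule, $\frac{\partial}{\partial\s}\tfrac{g_{\li}}{h_{\li}}<0$ exactly when $\p{\partial_\s g_{\li}}h_{\li} - g_{\li}\p{\partial_\s h_{\li}}<0$, so the whole argument reduces to pinning down the sign of this single polynomial.

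For the simplification I would introduce the abbreviations $M := \s\cd\ns + (\el-1)\cd\nlv$ (the effective number of samples held by the \emph{other} coalition members, from the viewpoint of one large player) and $B := \s\cd\ns^2 + (\el-1)\cd\nlv^2 + M^2$, so that the numerator of the error is $\mue\cd M + B\cd\var$ and the denominator is $M\cd(M+\nlv) + \nlv\cd B\cd\frac{\var}{\mue}$, with $\partial_\s M = \ns$ and $\partial_\s B = \ns(\ns+2M)$. Carrying out the quotient-rule expansion and cancelling, I expect the sign-determining quantity to collapse, after pulling out the nonnegative factors $\ns$, $\mue$, and $M$, to
\[
-\,M \;+\; \frac{\var}{\mue}\p{M\cd\ns + 2M^2 - 2B}.
\]
Since $2M^2 - 2B = -2\s\cd\ns^2 - 2(\el-1)\cd\nlv^2$ and $M\cd\ns = \s\cd\ns^2 + (\el-1)\cd\ns\cd\nlv$, the parenthesized term simplifies to $-\s\cd\ns^2 + (\el-1)\cd\nlv\cd(\ns - 2\nlv)$.

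The key point — and the only place the hypothesis $\ns<\nlv$ enters — is that $\ns<\nlv$ forces $\ns - 2\nlv < 0$, so $-\s\cd\ns^2 + (\el-1)\cd\nlv\cd(\ns-2\nlv)\le 0$ for all admissible $\s\ge 0$, $\el\ge 1$; hence the displayed expression is at most $-M\le 0$, and it is strictly negative whenever $M>0$, i.e.\ for every coalition containing at least one member besides the large player in question ($M=0$ occurs only for the one-player coalition $\pi(0,1)$, which is irrelevant to strict monotonicity in $\s$). Thus the large player's error strictly decreases in $\s$, giving $\s_2>\s_1\Rightarrow\pi(\s_2,\el)\cg_{\li}\pi(\s_1,\el)$, and specializing to $\s_1=0$ yields $\pi(\s,\el)\cg_{\li}\pi(0,\el)$ for all $\s>0$. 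The main obstacle is simply the bookkeeping of the quotient-rule expansion and the cancellations that reduce it to the short expression above; notably, unlike Lemma~\ref{lpreflw}, no threshold on $\el$ is needed here, because after those cancellations the surviving terms have a sign determined by $\ns<\nlv$ alone, with no dependence on how $\ns$ or $\nlv$ compares to $\frac{\mue}{\var}$.
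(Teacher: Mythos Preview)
Your proposal is correct and follows essentially the same approach as the paper: write the large player's error as $g_{\li}/h_{\li}$, compute the quotient-rule numerator $(\partial_\s g_{\li})h_{\li}-g_{\li}(\partial_\s h_{\li})$, and factor it as $-\ns\cdot M$ times a quantity that is manifestly positive because $\ns<\nlv$ forces $2\nlv-\ns>0$. Your factored form $\ns\,\mue\,M\bigl[-M+\tfrac{\var}{\mue}\bigl(-\s\ns^2+(\el-1)\nlv(\ns-2\nlv)\bigr)\bigr]$ is algebraically equivalent to the paper's $-\ns M\bigl[\s\ns(\mue+\var\ns)+(\el-1)\nlv(\mue+\var(2\nlv-\ns))\bigr]$, so the two arguments coincide.
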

\begin{proof}
The error that the large players experience can be given by the ratio $g_{\li}(\cd)/h_{\li}(\cd)$ as before. In this case, we are interested in the derivative with respect to $\s$, which is negative when: 
$$\frac{d}{d\s}\br{g_{\li}(\s, \el)} \cd h_{\li}(\s, \el) - g_{\li}(\s, \el) \cd \frac{d}{d\s}\br{h_{\li}(\s, \el)} < 0$$
Gathering and simplifying the lefthand side of this equation gives: 
$$-\ns \cd (\s \cd \ns + (\el -1) \cd \nlv)$$
$$ \cd (\s \cd \ns \cd (\mue + \var \cd \ns) + \el \cd \nlv \cd (\mue + \var \cd (2 \nlv - \ns)))$$
Each individual term is positive: note that $\nlv > \ns$, so $2 \nlv - \ns > 0$. The overall term is negative, meaning that the derivative of the error is always negative. 
\end{proof}
Given these lemmas, we will divide the shape of the error curves into three components. Note that $err_{\si}(\s, \el = \el')$ and $err_{\li}(\s,\el = \el')$ are both curving downwards always. We will refer to the section of $err_{\si}(\s, \el = \el')$ with error greater than the error the small players get in $\pi(\Sv, 0)$ as $B$ and the region with error lower than this as $C$. Similarly, we will define the region of $err_{\li}(\s,\el = \el')$ where the error is higher than the error the large players get in $\pi(0, \Lv)$ as $B$, and the region where the error is lower as $C$. 

Note that, by Lemma \ref{spreflw},  $err_{\si}(\s = \s', \el)$ is always decreasing for all $\el > \el_0$: before that, it is increasing. We will call the region where the slope is increasing $A$, the portion where the slope is decreasing but still greater than the error the small player gets in $\pi(\Sv, 0)$ as $B$, and the final region, where the slope is decreasing and the error is lower than or equal to $\pi(\Sv, 0)$, as $C$.

We will not divide the regions of curve $err_{\li}(\s = \s', \el)$ because the proof below will not depend on those results. 

We will use as a shorthand $\sepcol = \{\pi(\Sv, 0), \pi(0, \Lv)\}$. 
\wslcore*
\begin{proof}
In this proof, we will use the results of Lemmas \ref{sprefsw}, \ref{spreflw}, \ref{lpreflw}, and \ref{lprefsw}. The statement of the theorem has three parts (including the equality $\gcol =_{\si} \pi(\Sv, 0)$), which we will prove in sequence. We note that Lemma \ref{lprefsw} indicates that $\pi(\Sv, \Lv) \cg_{\li} \pi(0, \Lv)$ always: the large players always prefer federating with the smaller players to being alone.  \\

\textbf{Case: $\pi(\Sv, \Lv) \cl_{\si} \pi(\Sv, 0)$}\\
In this case, we will show that $\sepcol = \{\pi(\Sv, 0), \pi(0, \Lv)\}$ is strictly core stable. 

We know from the precondition that the point $\pi(\Sv, \Lv)$ is in either region $A$ or $B$ of $err_{\si}(\s = \Sv, \el)$. We will show that any other arrangement $\pi(\s', \el')$ is one where the small players get higher error than in $\pi(\Sv, 0)$. We will assume $\el'>0$: if $\el'=0$ and $\s' < \Sv$, then we know $\pi(\s', 0) \cl_{\si} \pi(\Sv, 0)$ by Lemma \ref{sprefsw}. We will do this by starting at the point $\pi(\s', \el')$ and moving along the relevant error curves to towards $\pi(\Sv, 0)$ and $\pi(\Sv, \Lv)$ to show that the small players experience strictly greater error than in $\pi(\Sv, 0)$. 

First, we will consider the curve $err_{\si}(\s, \el = \el')$. From Lemma \ref{sprefsw}, we know that it is decreasing in $\s$. We will move $\s$ from $\s'$ to $\Sv$, which decreases error. (If $\s' = \Sv$, then this leaves error unchanged.) Next, we will consider the point $\pi(\Sv, \el')$ on the curve $err_{\si}(\s = \Sv, \el)$. If it is in region $A$ or $B$, then we know 
$$\pi(\s', \el') \cleq_{\si} \pi(\Sv, \el') \cl_{\si} \pi(\Sv, 0)$$
We will show that it is impossible for it to be in region $C$. If it were, then moving along this curve from $\el'$ to $\Lv$ would only decrease the error the small player experiences, because the slope is decreasing with $\el$ in region $C$. But then that would imply that $\pi(\Sv, \Lv)$ is in region $C$ of this graph, and so $\pi(\Sv, \Lv) \cg_{\si} \pi(\Sv, 0)$, which contradicts the precondition of this case.

This shows that the small players minimize their error in $\pi(\Sv, 0)$, so they will not even weakly prefer any other arrangement. Given this, any deviation from $\sepcol$ could only involve large players. However, by Lemma \ref{lpreflw}, out of any arrangement excluding small players, the large players minimize their error in $\pi(0, \Lv)$ (their current arrangement) so they would not even weakly prefer any other arrangement. 

This implies that $\sepcol$ is strictly core given $\pi(\Sv, \Lv) \cl_{\si} \pi(\Sv, 0)$. 

\textbf{Case: $\pi(\Sv, \Lv) =_{\si} \pi(\Sv, 0)$}\\
In this case, we assume that the small players are indifferent between $\gcol$ and $\sepcol$ and will show that $\gcol$ is strictly core stable. 

The key part of this proof is to show that, if the small players are indifferent between $\gcol$ and $\pi(\Sv, 0)$, then they get strictly greater error in every other arrangement $\pi(\s', \el')$ that isn't equal to $\pi(\Sv, 0)$ or $\pi(\Sv, \Lv)$.

The precondition means that $\pi(\Sv, \Lv)$ is in region $C$ of both graphs. To move to $\pi(\s', \el')$,  first we move along $err_{\si}(\s = \Sv, \el)$ from $\Lv$ to $\el'$. In doing so, we end up in either region $A$ or $B$ of the curve: note that we cannot remain in region $C$ unless $\el' = \Lv$. This is because by definition of the regions, region $C$ has error that decreases with $\el$, which means that error strictly increases from $\pi(\Sv, \Lv)$ when $\el$ is decreased. This means that if $\el' < \Lv$, $\pi(\Sv, \el')$ gives the small player greater error than the it gets in $\pi(\Sv, 0)$ (or equivalently, in $\gcol$). 

If the point $\pi(\Sv, \el')$ is in region $A$ or $B$ of $err_{\si}(\s = \Sv, \el)$, then it is also in region $B$ of the curve $err_{\si}(\s, \el = \el')$. Reducing $\s$ from $\Sv$ to $\s'$ can only increase the error the small player experiences. This shows that the small players get strictly greater error in any arrangement that is not $\pi(\Sv, 0)$ or $\pi(\Sv, \Lv)$.

Note: if $\el' = \Lv$, then point $\pi(\Sv, \el') = \pi(\Sv, \Lv)$ is in region $C$ of both curves, but because $\pi(\s', \el') \ne \pi(\Sv, \Lv)$, we must have $\s' < \Sv$, so $\pi(\s', \el') \cl_{\si} \pi(\Sv, \Lv) =_{\si} \pi(\Sv, 0)$. 

Next, we will show that $\gcol$ is strictly core stable.

As shown above, the small players get higher error in any arrangement besides $\gcol$ and $\pi(\Sv, 0)$. If they are in $\gcol$, they cannot be convinced to deviate to any arrangement except $\pi(\Sv, 0)$, and then only weakly (they get identical error). Conditional on the small players all being in $\gcol$, the large players would most prefer to be in $\gcol$, since by Lemma \ref{lpreflw} they strictly prefer it to $\pi(0, \Lv)$. This implies that there does not exist a group of players where all weakly would like to defect and at least one strictly wishes to defect, the condition for strict core stability. 

\textbf{Case: $\pi(\Sv, \Lv) \cg_{\si} \pi(\Sv, 0)$}: \\
We will show that $\gcol$ is strictly core stable. 

The precondition means that $\pi(\Sv, \Lv)$ is in section $C$ of both small player graphs. Consider an arbitrary other coalition $\pi(\s', \el')$: we will show that the small players get strictly greater error here than in $\pi(\Sv, \Lv)$.

First, we start at the arrangement $\pi(\Sv, \Lv)$. Consider moving along the curve $err_{\si}(\s = \Sv, \el)$: we hold $\s$ constant and reduce $\el$ from $\Lv$ to $\el'$. We could end up in region $C, B,$ or $A$ of this curve.  

First, assume we end up in section $C$. This means that, as we've reduced $\el$, our error has been monotonically increasing (unless $\el' = \Lv$, in which case it is unchanged). We also know that, in the curve $err_{\si}(\s, \el = \el')$, the point $\pi(\Sv, \el')$ is in region $C$ of this graph. Next, we similarly move along this curve to reduce $\s$ from $\Sv$ to $\s'$. From Lemma \ref{sprefsw}, we know $err_{\si}(\s, \el = \el')$ is monotonically decreasing in $\s$, which means reducing $\s$ monotonically increases the small player's error. (This is unless $\s' = \Sv$, in which case the error is unchanged.). Because we have assumed either $\s' \ne \Sv$ or $\el' \ne \Lv$, we have produced a path of monotonically increasing error from $\pi(\Sv, \Lv)$ to $\pi(\s', \el')$, so we know the small player experiences greater error here. 

Next, we will instead assume that we ended up in region $A$ or $B$ of the $err_{\si}(\s = \Sv, \el)$ curve when we reduced $\el$. By definition of the $A$ and $B$ regions, we know that the small player experiences larger error in $\pi(\Sv, \el')$ than it would in $\pi(\Sv, 0)$. (The one exception is the point $\pi(\Sv, 0)$, which is in region $A$ and which obviously gives equal error to $\pi(\Sv, 0)$).  Then, we know that in the curve $err_{\si}(\s, \el = \el')$, the point $\pi(\Sv, \el')$ must be in region $B$. Because $err_{\si}(\s, \el = \el')$ is monotonically decreasing in $\s$, reducing $\s$ from $\Sv$ to $\s'$ monotonically increases error. We then know that: 
$$\pi(\s', \el') \cleq_{\si} \pi(\Sv, \el') \cleq_{\si} \pi(\Sv, 0) \cl_{\si} \pi(\Sv, \Lv)$$
where the last inequality comes from the premise of this statement. The first and second inequalities are strict if $\s' < \Sv$ or $\l' \ne 0$, respectively. 

We have just shown that the small player prefers $\pi(\Sv, \Lv)$ to any other arrangement. If there were going to be a defecting group, it would have to involve only large players. However, the arrangement where large players (federating by themselves) get lowest error is in $\pi(0, \Lv)$, which by Lemma \ref{lprefsw} gives them higher error than $\pi(\Sv, \Lv)$.

By this reasoning, $\gcol$ is strictly core stable. 
\end{proof}

\section{Supporting proofs for fine-grained federation}\label{app:vsupp}

\vfedminval*
\begin{proof}
To minimize, we will take the derivative of player $j$'s error with respect to the $\vm_{jk}$ weight. Note that we only have $\vm_{jj} = 1 - \sum_{i\ne j}\vm_{ji} = 1 - \vm_{jk} - \sum_{i\ne j, i\ne k}\vm_{ji}$ so $\vm_{jk}$ appears twice in the component involving $\mue$. Rewriting the error gives: 
$$\mue \sum_{i\ne j}^{\nplayer}\frac{\vm_{ji}^2}{\ndraw_i} + \mue \frac{(1-\sum_{i\ne j}\vm_{ji})^2}{\ndraw_j}$$
$$+ \p{\sum_{i\ne j}\vm_{ji}^2 + \p{\sum_{i\ne j}\vm_{ji}}^2} \cd \var$$
Taking the derivative with respect to $\vm_{jk}$ gives: 
$$\mue \frac{2\vm_{jk}}{\ndraw_k} - 2\mue \frac{1-\sum_{i\ne j}\vm_{ji}}{\ndraw_j} + \var \p{2\sum_{i\ne j} \vm_{ji} + 2\vm_{jk}}$$
To confirm that we are finding a minimum rather than a maximum, we note that the second derivative is always positive: 
$$\mue \frac{2}{\ndraw_k} + \mue \frac{2}{\ndraw_j} + \var \p{2+ 2}>0$$
We first simplify the derivative by substituting in for $\vm_{jj}$: 
$$\mue \frac{2\vm_{jk}}{\ndraw_k} - 2\mue \frac{\vm_{jj}}{\ndraw_j} + 2\var \p{1-\vm_{jj} + \vm_{jk}}=0$$
And then solve for $\vm_{jk}$ to obtain: 
$$\vm_{jk} = \frac{\vm_{jj}\cd \p{\var + \frac{\mue}{\ndraw_j}} - \var}{\var + \frac{\mue}{\ndraw_k}}$$
To find $\vm_{jj}$, we use that all of the weights sum up to 1: 
$$\vm_{jj} + \sum_{i\ne j}\frac{\vm_{jj}\cd \p{\var + \frac{\mue}{\ndraw_j}} - \var}{\var + \frac{\mue}{\ndraw_k}} = 1$$
$$\vm_{jj} + \sum_{i\ne j}\frac{\vm_{jj}\cd \p{\var + \frac{\mue}{\ndraw_j}}}{\var + \frac{\mue}{\ndraw_k}} -\sum_{i\ne j}\frac{\var}{\var + \frac{\mue}{\ndraw_k}} = 1$$
$$\vm_{jj}\p{1 + \sum_{i\ne j}\frac{\var + \frac{\mue}{\ndraw_j}}{\var + \frac{\mue}{\ndraw_k}}} -\sum_{i\ne j}\frac{\var}{\var + \frac{\mue}{\ndraw_k}} = 1$$
$$\vm_{jj} = \frac{1 + \sum_{i\ne j}\frac{\var}{\var + \frac{\mue}{\ndraw_k}}}{1 + \sum_{i\ne j}\frac{\var + \frac{\mue}{\ndraw_j}}{\var + \frac{\mue}{\ndraw_k}}}$$
Next, we define $V_i = \var + \frac{\mue}{\ndraw_i}$. This allows us to rewrite the term as: 
$$\vm_{jj} = \frac{1 + \var \sum_{i\ne j}\frac{1}{V_i}}{1 + V_j\sum_{i\ne j}\frac{1}{V_i}}$$
Similarly, we can rewrite: 
$$\vm_{jk} = \frac{1}{V_k}\cd \frac{V_j-\var}{1 + V_j \sum_{i\ne j}\frac{1}{V_i}}$$
\end{proof}

\end{document}